\documentclass{article}
%

\usepackage{amsthm}
\usepackage{amsmath}
\usepackage{amsfonts}
\usepackage{latexsym}
\usepackage{graphicx}
\usepackage{moreverb}

\newtheorem{definition}{Definition}

\newtheorem{lemma}{Lemma}
\newtheorem{theorem}{Theorem}

\newcommand{\ra}{\rightarrow}
\newcommand{\bs}[1]{\boldsymbol{#1}}
\newcommand{\ov}[1]{\overline{#1}}

\newcommand{\lss}{\langle}
\newcommand{\rss}{\rangle}

\newcommand{\MSE}{\mathrm{MSE}}
\newcommand{\MISE}{\mathrm{MISE}}

\newcommand{\N}{\mathbb{N}}
\newcommand{\R}{\mathbb{R}}
\newcommand{\C}{\mathbb{C}}
\newcommand{\E}{\mathbb{E}}
\newcommand{\A}{\mathcal{A}}
\newcommand{\B}{\mathcal{B}}
\newcommand{\F}{\mathcal{F}}
\newcommand{\bsm}[1]{\bs{\mathrm{#1}}}

\hyphenation{op-tical net-works semi-conduc-tor}

\begin{document}
%
\title{Kernel density estimates in particle filter}
%
%
%

\author{David~Coufal
\thanks{D. Coufal is with the Department
of Nonlinear Dynamics and Complex Systems of the Institute of Computer Science AS CR, 
Pod Vod\'{a}renskou v\v{e}\v{z}\'{\i}~2, 182 07 Prague 8,
Czech Republic. E-mail: david.coufal@cs.cas.cz.}}

%
%

\markboth{Journal of \LaTeX\ Class Files,~Vol.~XX, No.~X, February~2014 (14.2.2014)}%
{Shell \MakeLowercase{\textit{et al.}}: Bare Demo of IEEEtran.cls for Journals}
%



\maketitle

\begin{abstract}
The paper deals with kernel density estimates of~filtering densities
in the particle filter. The convergence of the estimates is investigated
by means of Fourier analysis. It is shown that the estimates converge 
to the theoretical filtering densities in the mean integrated squared error
under a certain assumption on the Sobolev character of the filtering 
densities. A~sufficient condition is presented for the persistence of 
this Sobolev character over time. Both results are extended to
partial derivatives of the estimates and filtering densities.
\end{abstract}


%

\section{Introduction}
%
%
%
%
The particle filter enables its user to efficiently compute integral
characteristics (moments) of distributions of interest. In the filtering problem, these
distributions are traditionally referred to as the \textit{filtering distributions}.
In the particle filter, the filtering distribution is approximated by an empirical measure.
This measure is implemented in the form of a weighted sum of Dirac measures
located at randomly (empirically) generated points called \textit{particles}.
Particles are generated sequentially by the algorithm which is an instance of the
\textit{sequential Monte Carlo methods}~\cite{Doucet2001, Doucet2011}.

The theoretical result that justifies the application of the particle filter
is that the generated empirical measures converge to the 
theoretical filtering distribution as the number of particles goes to infinity
\cite{Doucet2001, Crisan2002}. Approximating the filtering distribution by
an empirical measure is extremely useful for estimating moments of
the distribution because they correspond to weighted sums of values
of moment functions over generated particles.

The filtering distribution has typically a density with respect to the corresponding
Lebesgue measure. This density is called the \textit{filtering density}.
The knowledge of a suitable analytical approximation of the filtering density
has several advantages. Let us mention, for example, the possibility of
computing densities of related conditional distributions and conditional expected 
values in an analytical form. The other benefit is that one can get a~deeper
insight into the character of the filtering distribution through the analysis
of its density approximation.

From these practical, and of course also theoretical, reasons the issue
of the analytical approximation of the filtering densities is the subject of 
ongoing research. The problem has been addressed 
in \cite{Doucet2001}, Chapter 12, \cite{Oudjane2004, Kunsch2005}
and recently in \cite{Crisan2013}.

In this paper, we deal with the estimation/approximation of filtering densities
using the nonparametric kernel density estimation methodology. We use an
approach based on Fourier analysis inspired by the book of Tsybakov
\cite{Tsybakov2009}. We will show that the convergence of kernel density 
estimates is assured even if the particles generated by the particle filter
are not~i.i.d., which is the common assumption in the application of kernel
methods.

The paper presents two main results. The first result is the convergence
of the kernel density estimates to the theoretical filtering density at a fixed time of
operation of the filter, provided that the number of generated particles goes to
infinity. The result is based on the notion of the Sobolev character of the 
filtering density. The second result gives a condition under which
this Sobolev character is retained over time. Thus, the first result applies at
any time of operation of the filter. Both results are extended to partial
derivatives of the estimates and filtering densities.

The rest of the paper is organized as follows. In the next section we review
the basics of the particle filter's theory together with the related
convergence results. Section~\ref{SecIII} deals with a review of nonparametric
kernel density estimation methods with the focus on the Fourier analysis approach. 
Sections \ref{SecIV}~and~\ref{SecV} present the announced main results of 
the paper. Section~\ref{SecVI} shows an application of the developed theory
in an example related to the Kalman filter. The paper is concluded by 
Section~\ref{SecVII}.

\section{Particle filter}
\label{SecII}
The basics of the particle filter and general filtering theory can be
found in \cite{Doucet2001, Doucet2011, Crisan2002, Krylov2007} and
\cite{Sarkka2013}. However, there is a plenty of other literature specialized
in these subjects. Nevertheless, we present here the essential framework
of the related methodology in order that the paper be self-contained.

\subsection{Filtering problem}
\label{SigSec}
The filtering problem is the task of determining the optimal
estimate of an inaccessible value of the actual state of a stochastic process
on the basis of knowledge of accessible observations. The observations
establish a stochastic process called the \textit{observation process}.
The observation process is interconnected with a principal stochastic
process which is called the \textit{signal process}. Let us be more specific.

Let $(\Omega,\mathcal{A},P)$ be a probabilistic space with two
stochastic processes $\{\bs{X}_t\}^\infty_{t=0}$, $\{\bs{Y}_t\}^\infty_{t=1}$
specified on it. The first process $\{\bs{X}_t\}^\infty_{t=0}$,
$\bs{X}_t:(\Omega,\A)\ra(\R^{d_x},\B(\R^{d_x}))$, $t\in\N_0$, $d_x\in\N$
is the signal process. The signal process is considered to represent
generally an inhomogeneous Markov chain with a continuous state space.
The probabilistic behavior of the chain is determined by the 
initial distribution $\pi_0(d\bs{x}_0)$ of $\bs{X}_0$
and by the set of transition kernels $K_{t-1}:\B(\R^{d_x})\times\R^{d_x}\ra [0,1]$,
$t\in\N$. We denote by $K_{t-1}(d\bs{x}_{t}|\bs{x}_{t-1})$ 
the measure represented by the transition kernel $K_{t-1}$ for
$\bs{x}_{t-1}\in\R^{d_x}$ being fixed.

Let $\{\bs{Y}_t\}^\infty_{t=1}$, 
$\bs{Y}_t:(\Omega,\A)\!\ra\!(\R^{d_y}, \B(\R^{d_y}))$, $t\in\N$, $d_y\in\N$
be the observation process specified on the basis of the signal process by
formula
\begin{equation}
\label{YtDef}
\bs{Y}_t=h_t(\bs{X}_t)+\bs{V}_t,\;\;t\in\N,
\end{equation}
where $h_t:\R^{d_x}\ra\R^{d_y}$, $t\in\N$ are Borel functions and
$\bs{V}_t$ are (all)-other-variables independent random variables specified on
$(\Omega,\mathcal{A},P)$. That is, 
$\bs{V}_t:(\Omega,\mathcal{A})\ra(\R^{d_y},\B(\R^{d_y}))$, 
$t\in\N$, $d_y\in\N$ and 
$P(\bs{V}_t\in d\bs{v}_t|\bs{X}_{0:t},\bs{Y}_{1:t-1},\bs{V}_{1:t-1})=
P(\bs{V}_t\in d\bs{v}_t)$ for all $t\in\N$.
The (all)-other-variables independence of $\bs{V}_t$
transfers on observations in the following way:
\begin{equation}
\label{YtCondInd}
P(\bs{Y}_t\in d\bs{y}_t|\bs{X}_{0:t},\bs{Y}_{1:t-1})=
P(\bs{Y}_t\in d\bs{y}_t|\bs{X}_t).
\end{equation}
Indeed, we have $\sigma(\bs{X}_{0:t},\bs{Y}_{1:t-1})=
\sigma(\bs{X}_{0:t},\bs{V}_{1:t-1})$ due to (\ref{YtDef}). 
$\bs{V}_{1:t-1}$ is independent of $(\bs{Y}_t,\bs{X}_{0:t})$,
therefore 
$P(\bs{Y}_t\in d\bs{y}_t|\bs{X}_{0:t},\bs{V}_{1:t-1})=
P(\bs{Y}_t\in d\bs{y}_t|\bs{X}_{0:t})$. The assertion is finally obtained
by the Markov property of the signal process. Remark that for $t\!=\!1$,
the left-hand side of (\ref{YtCondInd}) reads as 
$P(\bs{Y}_1\in d\bs{y}_1|\bs{X}_{0:1})$.

\subsection{Filtering distribution and filtering density}
\label{ftds}
As stated, the purpose of filtering is to present the optimal estimate of 
the actual state $\bs{x}_t\in\R^{d_x}$ of the signal process using
the actual and past observations $\bs{y}_{1:t}=(\bs{y}_1,\dots,\bs{y}_t)$. 
This is done at each time instant $t\in\N$. It is the classical result that
under the assumption of $L_2$ integrability of $\bs{X}_t$,
the $L_2$-optimal estimate corresponds to the conditional expectation
$\E[\bs{X}_t|\bs{Y}_{1:t}]$. In what follows we will assume that 
$\bs{X}_t\in L_2(\Omega,\A,P)$ for each $t\in \N_0$.

For fixed observations $\bs{Y}_{1:t}=\bs{y}_{1:t}$, 
the conditional expectation $\E[\bs{X}_t|\bs{Y}_{1:t}=\bs{y}_{1:t}]$
can be determined on the basis of the related
conditional distribution $P(\bs{X}_t\in d\bs{x}_t|\bs{Y}_{1:t}=\bs{y}_{1:t})$.
This distribution then represents the filtering distribution at time \mbox{$t\in\N$} 
and will be approximated by an empirical measure generated by 
the particle filter.

In the standard setting of the filtering problem, all the involved
finite-dimensional distributions have bounded and continuous densities 
with respect to the corresponding Lebesgue measures. Especially, 
we assume that $\pi_0(d\bs{x}_0)=p_0(\bs{x}_0)\,d\bs{x}_0$, 
$K_{t-1}(d\bs{x}_t|\bs{x}_{t-1})=K_{t-1}(\bs{x}_{t}|\bs{x}_{t-1})\,d\bs{x}_{t}$
and $P(\bs{V}_t\in d\bs{v}_t)=g^v_t(\bs{v}_t)\,d\bs{v}_t$.
This enables us to identify the respective filtering density, which is the density of
$P(\bs{X}_t\in d\bs{x}_t|\bs{Y}_{1:t}=\bs{y}_{1:t})$.

The conditional density of \mbox{$P(\bs{Y}_t\in d\bs{y}_t|\bs{X}_t=\bs{x}_t)$} 
is determined by formula (\ref{YtDef}). The density is denoted 
$g_t(\bs{y}_t|\bs{x}_t)$
and writes as
\begin{equation}
\label{gtdef}
g_t(\bs{y}_t|\bs{x}_t)=g^v_t(\bs{y}_t-h_t(\bs{x}_t)).
\end{equation}
The joint density of $(\bs{X}_{0:t},\bs{Y}_{1:t})$ has then form
\begin{equation}
\label{XtYtJoint}
p(\bs{x}_{0:t},\bs{y}_{1:t})=p_0(\bs{x}_0)
\prod_{k=1}^t g_k(\bs{y}_k|\bs{x}_k)\,K_{k-1}(\bs{x}_k|\bs{x}_{k-1}).
\end{equation}
These specifications are induced by the conditional independence of
observations (\ref{YtCondInd}) and by the standard theory of 
Markov chains with a continuous state space.

The filtering density is $p(\bs{x}_t|\bs{y}_{1:t})$ for \mbox{$t\in\N$.}
Employing the joint distribution (\ref{XtYtJoint}), we have
\begin{equation}
p(\bs{x}_t|\bs{y}_{1:t})=
\frac{p(\bs{x}_t,\bs{y}_{1:t})}{p(\bs{y}_{1:t})}=
\frac{\int p(\bs{x}_{0:t},\bs{y}_{1:t})\,d\bs{x}_{0:t-1}}
{\int p(\bs{x}_{0:t},\bs{y}_{1:t})\,d\bs{x}_{0:t}}.
\end{equation}

The above integrals are generally inexpressible in a~closed form.
However, certain recursive analytical relations can be stated.
These relations are called the \textit{filtering equations} and 
are addressed in the next section.

\subsection{Filtering equations}
\label{FeSec}
The filtering equations describe recursively the evolution of the filtering
density $p(\bs{x}_t|\bs{y}_{1:t})$ over time. They consists of the 
\textit{prediction formula} (\ref{L1}) and the \textit{update formula}~(\ref{L2}).

\begin{lemma}
Let the joint density be given by formula \textnormal{(\ref{XtYtJoint})}, 
then
\begin{equation}
\label{L1}
p(\bs{x}_t|\bs{y}_{1:t-1})=
\int K_{t-1}(\bs{x}_t|\bs{x}_{t-1})p(\bs{x}_{t-1}|\bs{y}_{1:t-1})\,d\bs{x}_{t-1}
\end{equation}
for $t\geq 2$, and 
$p(\bs{x}_1)=\int K_{0}(\bs{x}_1|\bs{x}_{0})p_0(\bs{x}_{0})\,d\bs{x}_{0}$
for $t=1$.
\end{lemma}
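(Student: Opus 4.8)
The plan is to derive both identities directly from the joint density (\ref{XtYtJoint}), which is the stated hypothesis, so that only elementary manipulations of densities---marginalization and the definition of a conditional density---are required; no separate appeal to an abstract Markov property is needed, since all the relevant structure is already encoded in the product form of (\ref{XtYtJoint}). I would treat $t\geq 2$ first and recover $t=1$ as a degenerate instance of the same computation.

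First I would produce the joint density of $(\bs{X}_{0:t},\bs{Y}_{1:t-1})$ by integrating the future observation $\bs{y}_t$ out of (\ref{XtYtJoint}). In the product $\prod_{k=1}^t g_k(\bs{y}_k|\bs{x}_k)K_{k-1}(\bs{x}_k|\bs{x}_{k-1})$ the variable $\bs{y}_t$ appears only in the factor $g_t(\bs{y}_t|\bs{x}_t)$, and since $g_t(\cdot|\bs{x}_t)$ is a probability density on $\R^{d_y}$ its integral equals $1$. Consequently
\begin{equation*}
\int p(\bs{x}_{0:t},\bs{y}_{1:t})\,d\bs{y}_t
= K_{t-1}(\bs{x}_t|\bs{x}_{t-1})\,p(\bs{x}_{0:t-1},\bs{y}_{1:t-1}),
\end{equation*}
where the remaining factor is recognized as the joint density (\ref{XtYtJoint}) at time $t-1$. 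The crucial observation is that $\bs{x}_t$ enters this expression only through the single transition factor $K_{t-1}(\bs{x}_t|\bs{x}_{t-1})$.

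Next I would marginalize over the past states. Integrating the displayed identity over $\bs{x}_{0:t-2}$ pulls $K_{t-1}(\bs{x}_t|\bs{x}_{t-1})$ out of the integral, since it does not depend on those variables, and leaves
\begin{equation*}
p(\bs{x}_{t-1},\bs{x}_t,\bs{y}_{1:t-1})
= K_{t-1}(\bs{x}_t|\bs{x}_{t-1})\,p(\bs{x}_{t-1},\bs{y}_{1:t-1}).
\end{equation*}
Integrating once more over $\bs{x}_{t-1}$ gives $p(\bs{x}_t,\bs{y}_{1:t-1})$ as the stated integral, and dividing both sides by the marginal $p(\bs{y}_{1:t-1})$ converts them into conditional densities, which is exactly (\ref{L1}). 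For $t=1$ there is no observation to integrate away: one simply marginalizes the joint $p(\bs{x}_0,\bs{x}_1)=p_0(\bs{x}_0)K_0(\bs{x}_1|\bs{x}_0)$ over $\bs{x}_0$.

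I do not anticipate a serious obstacle; the computation is essentially bookkeeping on the product (\ref{XtYtJoint}). The one step deserving care is the normalization $\int g_t(\bs{y}_t|\bs{x}_t)\,d\bs{y}_t=1$ together with the fact that the product structure of (\ref{XtYtJoint})---which itself encodes the conditional independence (\ref{YtCondInd})---isolates $\bs{y}_t$ in that single factor, so that the future observation is discarded cleanly. A minor point of rigor is to justify the interchanges of integration and the division by $p(\bs{y}_{1:t-1})$, which is licensed by the assumed boundedness, continuity, and integrability of all the densities involved.
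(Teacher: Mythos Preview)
Your proposal is correct and is essentially identical to the paper's own proof: integrate $\bs{y}_t$ out of (\ref{XtYtJoint}) using $\int g_t(\bs{y}_t|\bs{x}_t)\,d\bs{y}_t=1$, marginalize over $\bs{x}_{0:t-2}$ and then $\bs{x}_{t-1}$, and divide by $p(\bs{y}_{1:t-1})$; the paper merely presents the $t=1$ case first rather than last. Your additional remarks on Fubini and on the positivity needed for the division are sound points of rigor that the paper leaves implicit.
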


\noindent
\textbf{Proof.} We get the result from (\ref{XtYtJoint}) by series of
integrations. Let us start with $t=1$. In this case,
formula~(\ref{XtYtJoint}) reads as
$p(\bs{x}_{0:1},\bs{y}_1)\!=\!g_1(\bs{y}_1|\bs{x}_1)
K_{0}(\bs{x}_1|\bs{x}_0)\,p_0(\bs{x}_0)$.
By integrating out $\bs{y}_1$ we get 
$p(\bs{x}_{0:1})\!=\!K_{0}(\bs{x}_1|\bs{x}_0)\,p_0(\bs{x}_0)$ and
the result is obtained by integration with respect to $\bs{x}_0$.

In the general case of $t\geq 2$, we get the following expres\-sions by
the transcription of (\ref{XtYtJoint}) and integrating out $\bs{y}_t$,
\begin{eqnarray*}
p(\bs{x}_{0:t},\bs{y}_{1:t})&\!\!\!=\!\!\!&
g_t(\bs{y}_t|\bs{x}_t)K_{t-1}(\bs{x}_t|\bs{x}_{t-1})p(\bs{x}_{0:t-1},\bs{y}_{1:t-1}),\\
p(\bs{x}_{0:t},\bs{y}_{1:t-1})&\!\!\!=\!\!\!&
K_{t-1}(\bs{x}_t|\bs{x}_{t-1})p(\bs{x}_{0:t-1},\bs{y}_{1:t-1}).
\end{eqnarray*}
Subsequently, the integration w.r.t. $\bs{x}_{0:t-2}$ and $\bs{x}_{t-1}$ gives
\begin{eqnarray*}
p(\bs{x}_{t-1:t},\bs{y}_{1:t-1})&\!\!\!=\!\!\!&
K_{t-1}(\bs{x}_t|\bs{x}_{t-1})p(\bs{x}_{t-1},\bs{y}_{1:t-1}),\\
p(\bs{x}_{t},\bs{y}_{1:t-1})&\!\!\!=\!\!\!&
\!\!\int\!\! 
K_{t-1}(\bs{x}_t|\bs{x}_{t-1})p(\bs{x}_{t-1},\bs{y}_{1:t-1})\,d\bs{x}_{t-1}.
\end{eqnarray*}
Finally, dividing both sides of the last formula by the marginal density
$p(\bs{y}_{1:t-1})$ gives the result.\hfill$\Box$

\begin{lemma}
Let the joint density be given by formula \textnormal{(\ref{XtYtJoint})}, 
then
\begin{equation}
\label{L2}
p(\bs{x}_{t}|\bs{y}_{1:t})=
\frac{g_t(\bs{y}_t|\bs{x}_t)p(\bs{x}_{t}|\bs{y}_{1:t-1})}
{\int g_t(\bs{y}_t|\bs{x}_t)p(\bs{x}_{t}|\bs{y}_{1:t-1})\,d\bs{x}_{t}},\;\;\;t\in\N,
\end{equation}
with $p(\bs{x}_{1}|\bs{y}_{1:0})$ understood as $p(\bs{x}_{1})$ for $t=1$.
\end{lemma}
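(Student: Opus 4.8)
The plan is to follow the same strategy as in the proof of the prediction formula~(\ref{L1}): I would start from the joint density~(\ref{XtYtJoint}), integrate out the past states $\bs{x}_{0:t-1}$, and then normalize. The identity~(\ref{L2}) is exactly Bayes' rule for the filtering density, written as $p(\bs{x}_t|\bs{y}_{1:t})=p(\bs{x}_t,\bs{y}_{1:t})/p(\bs{y}_{1:t})$ with $p(\bs{y}_{1:t})=\int p(\bs{x}_t,\bs{y}_{1:t})\,d\bs{x}_t$, so the task reduces to exhibiting the numerator $p(\bs{x}_t,\bs{y}_{1:t})$ in the claimed product form.

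First I would reuse the factorization already obtained in the proof of Lemma~1, namely $p(\bs{x}_{0:t},\bs{y}_{1:t})=g_t(\bs{y}_t|\bs{x}_t)K_{t-1}(\bs{x}_t|\bs{x}_{t-1})\,p(\bs{x}_{0:t-1},\bs{y}_{1:t-1})$. Integrating this over $\bs{x}_{0:t-1}$ and using that $g_t(\bs{y}_t|\bs{x}_t)$ depends only on $\bs{x}_t$ --- hence pulls out of the integral --- gives
\begin{equation*}
p(\bs{x}_t,\bs{y}_{1:t})=g_t(\bs{y}_t|\bs{x}_t)\int K_{t-1}(\bs{x}_t|\bs{x}_{t-1})\,p(\bs{x}_{0:t-1},\bs{y}_{1:t-1})\,d\bs{x}_{0:t-1}.
\end{equation*}
The remaining integral is precisely $p(\bs{x}_t,\bs{y}_{1:t-1})$, which by the intermediate computation in the proof of~(\ref{L1}) equals $p(\bs{x}_t|\bs{y}_{1:t-1})\,p(\bs{y}_{1:t-1})$. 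Substituting this into the Bayes ratio and observing that the factor $p(\bs{y}_{1:t-1})$ appears identically in the numerator and in the normalizing integral over $\bs{x}_t$, it cancels, leaving exactly~(\ref{L2}). The case $t=1$ is handled identically, starting from $p(\bs{x}_{0:1},\bs{y}_1)=g_1(\bs{y}_1|\bs{x}_1)K_0(\bs{x}_1|\bs{x}_0)\,p_0(\bs{x}_0)$ and interpreting $p(\bs{x}_1|\bs{y}_{1:0})$ as $p(\bs{x}_1)$.

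I do not expect a serious obstacle here; the one point requiring care is the justification that $g_t(\bs{y}_t|\bs{x}_t)$ factors out of the integral over the past states. This rests on the conditional independence~(\ref{YtCondInd}), which is exactly what makes the likelihood depend on $\bs{x}_t$ alone and not on $\bs{x}_{0:t-1}$ or $\bs{y}_{1:t-1}$; it is already built into the product structure of~(\ref{XtYtJoint}). The only bookkeeping subtlety is writing the normalizing constant as the integral of the numerator over $\bs{x}_t$, so that the cancellation of $p(\bs{y}_{1:t-1})$ is manifest.
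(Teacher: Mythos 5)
Your proof is correct, but it takes a different route from the one in the paper. You work directly from the explicit product form of the joint density~(\ref{XtYtJoint}): you factor out $g_t(\bs{y}_t|\bs{x}_t)$ from the integral over $\bs{x}_{0:t-1}$, recognize the remaining integral as $p(\bs{x}_t,\bs{y}_{1:t-1})=p(\bs{x}_t|\bs{y}_{1:t-1})\,p(\bs{y}_{1:t-1})$ via the intermediate computation already carried out for the prediction formula, and then cancel $p(\bs{y}_{1:t-1})$ in the Bayes ratio. The paper instead never returns to~(\ref{XtYtJoint}) in this proof: it manipulates conditional densities directly, applying Bayes' rule twice (once to $p(\bs{y}_{1:t}|\bs{x}_t)$ and once to $p(\bs{y}_{1:t-1}|\bs{x}_t)$), cancelling the $p(\bs{y}_{1:t-1})p(\bs{x}_t)$ terms, and invoking the conditional independence $p(\bs{y}_t|\bs{x}_t,\bs{y}_{1:t-1})=p(\bs{y}_t|\bs{x}_t)$ as a separate fact. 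Your marginalization argument is arguably more self-contained, since the conditional independence is already encoded in the product structure of~(\ref{XtYtJoint}) and you reuse Lemma~1's intermediate identity rather than re-deriving the structure through conditional-density algebra; the paper's version has the advantage of displaying the update as the textbook Bayes recursion $p(\bs{x}_t|\bs{y}_{1:t})=p(\bs{y}_t|\bs{x}_t)p(\bs{x}_t|\bs{y}_{1:t-1})/p(\bs{y}_t|\bs{y}_{1:t-1})$ without reference to the full path-space density. Both arguments establish~(\ref{L2}), including the $t=1$ convention.
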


\noindent
\textbf{Proof.}
We start with the Bayes' rule and rearrange
\begin{eqnarray*}
p(\bs{x}_{t}|\bs{y}_{1:t})&=&\frac{p(\bs{y}_{1:t}|\bs{x}_{t})p(\bs{x}_{t})}{p(\bs{y}_{1:t})},\\
p(\bs{x}_{t}|\bs{y}_{1:t})&=&\frac{p(\bs{y}_t,\bs{y}_{1:t-1}|\bs{x}_{t})p(\bs{x}_{t})}{p(\bs{y}_t,\bs{y}_{1:t-1})},\\
p(\bs{x}_{t}|\bs{y}_{1:t})&=&\frac{p(\bs{y}_t|\bs{x}_{t},\bs{y}_{1:t-1})p(\bs{y}_{1:t-1}|\bs{x}_{t})p(\bs{x}_{t})}
{p(\bs{y}_t|\bs{y}_{1:t-1})p(\bs{y}_{1:t-1})}.
\end{eqnarray*}
We again use the Bayes' rule on $p(\bs{y}_{1:t-1}|\bs{x}_{t})$, which gives
$$
p(\bs{x}_{t}|\bs{y}_{1:t})=
\frac{p(\bs{y}_t|\bs{x}_{t},\bs{y}_{1:t-1})p(\bs{x}_{t}|\bs{y}_{1:t-1})p(\bs{y}_{1:t-1})p(\bs{x}_{t})}
{p(\bs{y}_t|\bs{y}_{1:t-1})p(\bs{y}_{1:t-1})p(\bs{x}_{t})}.\\
$$
Considering the conditional independence of $p(\bs{y}_t|\bs{x}_t,\bs{y}_{1:t-1})$,
which is expressed
by $p(\bs{y}_t|\bs{x}_t,\bs{y}_{1:t-1})=p(\bs{y}_t|\bs{x}_t)$, and cancelling out 
the $p(\bs{y}_{1:t-1})p(\bs{x}_t)$ terms we get the final formula
$$
p(\bs{x}_t|\bs{y}_{1:t})=
\frac{p(\bs{y}_t|\bs{x}_t)p(\bs{x}_t|\bs{y}_{1:t-1})}{p(\bs{y}_t|\bs{y}_{1:t-1})}.\\
$$
In the denominator, the normalizing constant is obtained by integration
$$
p(\bs{y}_t|\bs{y}_{1:t-1})=
\int p(\bs{y}_t|\bs{x}_{t})p(\bs{x}_t|\bs{y}_{1:t-1})\,d\bs{x}_{t}.
$$
As we have $p(\bs{y}_t|\bs{x}_t)=g_t(\bs{y}_t|\bs{x}_t)$, 
this finishes the proof.\hfill$\Box$

The development of the filtering density over time is split into two
sub-steps by the filtering equations. The prediction density 
$p(\bs{x}_t|\bs{y}_{1:t-1})$ is obtained in the first sub-step and, 
in the second one, it is updated  to the filtering density $p(\bs{x}_t|\bs{y}_{1:t})$
on the basis of the actual observation $\bs{y}_t$.

Speaking in the language of distributions, the filtering distribution is usually
denoted by $\pi_t$, i.e., $\pi_t(d\bs{x}_t)=p(\bs{x}_t|\bs{y}_{1:t})\,d\bs{x}_t$. 
$\pi_t$~is also alternatively referred to as the \textit{update distribution (measure)}. 
The prediction density then corresponds to the density of the so-called
\textit{prediction distribution (measure)} denoted by~$\ov{\pi}_t$, i.e.,
$\ov{\pi}_t(d\bs{x}_t)=p(\bs{x}_t|\bs{y}_{1:t-1})\,d\bs{x}_t$.

\subsection{Particle filter}
\label{SMCAlgSec}
The time evolution of the filtering distribution can be seen as a recursive 
alternation between the prediction and update distributions $\ov{\pi}_t$
and $\pi_t$. This characterization fits to the particle filter
operation because the filter alternately generates empirical prediction
and update measures.

In the particle filter, empirical measures are constructed as weighted
sums of Dirac measures localized at particles generated by the filter.
The justification of this representation stems from the Strong Law of
Large Numbers (SLLN). Assuming that $\{\bs{X}_i=\bs{x}_i\}_{i=1}^n$,
$n\in\N$ is an i.i.d. sample from a given distribution $\mu$ and 
constructing the empirical measure $\delta_n(d\bs{x})$~as
\begin{equation}
\label{PdxDef}
\delta_n(d\bs{x})=
\frac{1}{n}\sum_{i=1}^n \delta_{\bs{x}_i}(d\bs{x})=
\frac{1}{n}\sum_{i=1}^n\delta_{\bs{X}_i}(d\bs{x}),
\end{equation}
the SLLN states that for any integrable function $f$, the integral over
this empirical measure converges a.s. to the integral over the distribution~$\mu$.
Note that in (\ref{PdxDef}), the second expression points out
the random character of $\delta_n(d\bs{x})$, 
in fact, $\delta_n(d\bs{x})$ is a random measure.

Dealing with the filtering problem practically, we are not able
to directly generate i.i.d. samples from $\pi_t$ because we do not
have any closed-form representation of the filtering density at
our disposal. However, due to the product character of the joint density
$p(\bs{x}_{0:t},\bs{y}_{1:t})$, one can state an~algorithm which
recursively generates samples (particles) that are used for
constructing empirical counterparts of $\ov{\pi}_t$ and $\pi_t$
distributions.

The construction of empirical measures proceeds sequentially.
The particles generated in the previous cycle of operation
are employed in the actual cycle. A stochastic update
of particles and their weights is taken in each cycle. The
weights are updated on the basis of the actual observation.
The procedure is in fact an instance of 
the sequential Monte Carlo methods applied in the context of 
the filtering problem \cite{Doucet2001} and the algorithm follows 
the recursion described by the filtering equations.
However, there is one extension.

In the raw mode of operation, the update measure is constructed as
a non-uniformly weighted sum of Dirac measures. As explained
in \cite{Doucet2001}, as $t\in\N$ increases the distribution
of weights becomes more and more skewed and practically, 
after a few time steps, only one particle has a non-zero weight.
To avoid this degeneracy, the \textit{resampling step} is introduced.

During the resampling step, a non-uniformly weighted empirical
measure is resampled into its uniformly weighted counterpart.
The basic type of resampling is based on the idea of discarding
particles with low weights (with respect to $1/n$) and promote
those with high weights. Practically, it is done by sampling from 
the multinomial distribution $\mathcal{M}$ over original particles with 
the probabilities of selection given by particles' weights.
This type of resampling corresponds to the sampling with replacement
from the set of original particles with the probabilities of individual
selections corresponding to the individual weights. Let us stress here
that the resampled particles \textit{does not constitute an i.i.d. sample.}

We are now ready to present the operation of the particle filter in
the algorithmic way:

\begin{itemize}
\item
\textbf{0. declarations}\\
$n\in\N$ - the number of particles,\\
$T\in\N$ - the computational horizon,\\
$p_0(\bs{x}_0)$ - the initial density of $\bs{X}_0$,\\
$K_{t-1}(\bs{x}_t|\bs{x}_{t-1}),\;t=1,\dots T$ - the transition densities.
\vspace*{0.1cm}
\item
\textbf{1. initialization}\\ 
$t=0$,\\
sample $\{\ov{\bs{x}}^i_0\sim p_0(\bs{x}_0)\}_{i=1}^n$,\\ 
constitute
$\widehat{\pi}^n_0(d\bs{x}_0)=
\frac{1}{n}\sum_{i=1}^n \delta_{\ov{\bs{x}}^i_0}(d\bs{x}_0)$,\\
set ${\pi}^n_0(d\bs{x}_0)=\widehat{\pi}^n_0(d\bs{x}_0)$, i.e., 
$\{\bs{x}^i_0=\ov{\bs{x}}^i_0\}_{i=1}^n$. 
\vspace*{0.1cm}
\item
\textbf{2. sampling}\\
$t=t+1$,\\
sample $\{\ov{\bs{x}}^{i}_t\sim K_{t-1}(\bs{x}_t|\bs{x}^i_{t-1})\}_{i=1}^n$,\\
for $i=1\!:\!n$ compute
$$
\widetilde{w}(\ov{\bs{x}}^{i}_{t})=
\frac{g_t(\bs{y}_t-h_t(\ov{\bs{x}}^{i}_{t}))}
{\sum_{j=1}^n g_t(\bs{y}_t-h_t(\ov{\bs{x}}^{j}_{t}))},
$$\\
constitute
$\widehat{\pi}^n_t(d\bs{x}_{t})=
\sum_{i=1}^n \widetilde{w}(\ov{\bs{x}}^i_{t})\,
\delta_{\ov{\bs{x}}^i_{t}}(d\bs{x}_{t})$.
\item
\vspace*{0.1cm}
\textbf{3. resampling}\\
using
$\mathcal{M}(n,\widetilde{w}(\ov{\bs{x}}^1_{t}),\dots,\widetilde{w}(\ov{\bs{x}}^n_{t}))$,
resample $\{\bs{x}^i_{t}\}_{i=1}^n$ from $\{\bs{\ov{x}}^i_{t}\}_{i=1}^n$ 
and constitute\\
${\pi}^n_t(d\bs{x}_{t})=\frac{1}{n}\sum_{i=1}^n\delta_{\bs{x}^i_{t}}(d\bs{x}_{t})$.
\vspace*{0.2cm}
\item
\textbf{4.}  if $t=T$ end, else go to step 2.\\[0.2ex]
\end{itemize}
\vspace*{-0.25cm}
\begin{table}[!htb]
\centerline{\small Algorithm 1. Operation of the particle filter.}
\label{TabSMCpf}
\vspace*{-1ex}
\end{table}

The particle filter sequentially generates three empirical measures
in each single cycle of its operation. These are the empirical prediction
measure $\ov{\pi}_t^n$, the empirical update measure before resampling
$\widehat{\pi}^n_t$ and the empirical update measure after 
resampling $\pi_t^n$. The third measure then forms the empirical
counterpart of the filtering distribution $\pi_t$.

A comparison of the evolution of the empirical measures with the evolution
of the theoretical distributions can be done by means of the following schema:
\begin{figure}[!htb]
$$
\begin{array}{cccccccccc}
\pi_0
\ra&\ov{\pi}^n_1&\!\ra\!&\widehat{\pi}^n_1\ra\pi^n_1&
\ra\dots\ra&
\ov{\pi}^n_t&\!\ra\!&\widehat{\pi}^n_t\ra\pi^n_t\\
\pi_0\ra
&\ov{\pi}_{1}&\!\ra\!&\pi_{1}&
\ra\dots\ra&
\ov{\pi}_{t}&\!\ra\!&\pi_{t}
\end{array}
$$
\caption{The evolution of the empirical and theoretical distributions in the particle filter.}
\label{TmEvSMC}
\vspace*{-0.2cm}
\end{figure}

\subsection{Convergence results}
\label{ctSec}
The particle filter algorithm
is known that the empirical measures $\ov{\pi}_t^n$ and $\pi_t^n$
converge weakly a.s. (they are random measures) to their theoretical
counterparts as the number of generated particles goes to infinity.
We will not go into details of the proof of the assertion, we only mention
the result and its $L_2$ variant related to our research.

To present the convergence theorems, we denote the class of all real bounded and
continuous functions over $\R^{d_x}$ by $\mathcal{C}_b(\mathbb{R}^{d_x})$,
the supremum norm of a function $f:\R^{d_x}\ra\R$ by 
$||f||_{\infty}$, i.e., $||f||_{\infty}=\sup_{\bs{x}}\{|f(\bs{x})|\}$,
and the integral of $f$ over the measure $\mu$ by $\mu f$.
Further, it is assumed that the transition
kernels of the signal process possess the Feller property. That is, 
$K_{t-1}f\in\mathcal{C}_b(\mathbb{R}^{d_x})$ for any
$f\in\mathcal{C}_b(\mathbb{R}^{d_x})$ and $t\in\N$, where
$(K_{t-1}f)(\bs{x}_{t-1})=\int f(\bs{x}_t) K_{t-1}(d\bs{x}_t|\bs{x}_{t-1})$.
The other assumption is that the densities $g_t(\bs{y}_t|\,\cdot\,)$ of (\ref{gtdef}),
\mbox{$t\in\N$} are bounded, continuous and strictly positive functions.

\begin{theorem}
\label{asctTheorem}
Let $\{\ov{\pi}_t^n\}_{t=1}^{T}$ and $\{\pi_t^n\}_{t=1}^{T}$ 
be the sequences of empirical measures generated by the particle filter 
for some fixed observation history $\{\bs{Y}_t=\bs{y}_t\}_{t=1}^{T}$, $T\in\N$.
Then for all $t\in \{1,\dots,T\}$ and \mbox{$f\in\mathcal{C}_b(\mathbb{R}^{d_x})$},
$$
\lim_{n\ra\infty} |\ov{\pi}_t^nf-\ov{\pi}_tf|= 0\;\;a.s.,\;\;
\lim_{n\ra\infty} |\pi_t^nf-\pi_tf|= 0\;\;a.s.
$$
\end{theorem}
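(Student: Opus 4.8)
The plan is to prove both convergences simultaneously by induction on $t$, propagating the error through the prediction--update--resampling recursion of the filter. The base case is the initialization: the particles $\{\ov{\bs{x}}_0^i\}_{i=1}^n$ are drawn i.i.d. from $p_0$, so the ordinary SLLN gives $\pi_0^n f\ra\pi_0 f$ a.s. for every $p_0$-integrable $f$, in particular for $f\in\mathcal{C}_b(\R^{d_x})$. For the inductive step I assume $\pi_{t-1}^n g\ra\pi_{t-1}g$ a.s. for all $g\in\mathcal{C}_b(\R^{d_x})$ and show that the same convergence is inherited by $\ov{\pi}_t^n$ and then by $\pi_t^n$.

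For the prediction step I write $\ov{\pi}_t^n f=\tfrac{1}{n}\sum_i f(\ov{\bs{x}}_t^i)$ and decompose
$$
\ov{\pi}_t^n f-\ov{\pi}_t f=\Bigl(\tfrac{1}{n}\sum_{i=1}^n f(\ov{\bs{x}}_t^i)-\tfrac{1}{n}\sum_{i=1}^n (K_{t-1}f)(\bs{x}_{t-1}^i)\Bigr)+\bigl(\pi_{t-1}^n(K_{t-1}f)-\ov{\pi}_t f\bigr).
$$
The second bracket vanishes a.s.: the prediction formula (\ref{L1}) gives $\ov{\pi}_t f=\pi_{t-1}(K_{t-1}f)$, and the Feller property places $K_{t-1}f\in\mathcal{C}_b(\R^{d_x})$, so the inductive hypothesis applies to the test function $K_{t-1}f$. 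The first bracket is a sampling error; conditionally on $\F_{t-1}^n:=\sigma(\bs{x}_{t-1}^1,\dots,\bs{x}_{t-1}^n)$ its summands are independent, centered, and bounded by $2\|f\|_{\infty}$, so its conditional fourth moment is $O(n^{-2})$. Taking expectations and combining Markov's inequality with Borel--Cantelli forces this term to $0$ a.s., yielding $\ov{\pi}_t^n f\ra\ov{\pi}_t f$ a.s.

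For the update step, writing $g_t$ for $g_t(\bs{y}_t\,|\,\cdot\,)$ with $\bs{y}_t$ fixed, the pre-resampling measure satisfies $\wh{\pi}_t^n f=\ov{\pi}_t^n(g_t f)/\ov{\pi}_t^n g_t$, while the update formula (\ref{L2}) reads $\pi_t f=\ov{\pi}_t(g_t f)/\ov{\pi}_t g_t$. Since $g_t$ is bounded and continuous, both $g_t f$ and $g_t$ belong to $\mathcal{C}_b(\R^{d_x})$, so the prediction-step result gives a.s. convergence of numerator and denominator; strict positivity of $g_t$ makes $\ov{\pi}_t g_t>0$, so the ratio of two a.s. convergent sequences with nonvanishing limit denominator converges, i.e. $\wh{\pi}_t^n f\ra\pi_t f$ a.s. The resampling step is then treated exactly as the sampling step: conditionally on $\wh{\pi}_t^n$ the resampled particles are i.i.d. draws from $\wh{\pi}_t^n$ with conditional mean $\wh{\pi}_t^n f$, so $\pi_t^n f-\wh{\pi}_t^n f$ is a centered average of variables bounded by $2\|f\|_{\infty}$ whose fourth moment is $O(n^{-2})$, again vanishing a.s. by Borel--Cantelli. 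Together with $\wh{\pi}_t^n f\ra\pi_t f$ this gives $\pi_t^n f\ra\pi_t f$ a.s. and closes the induction.

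The main obstacle is that the particles are neither i.i.d. nor independent across the recursion: they are coupled through the random weights and the multinomial resampling, so the plain SLLN cannot be invoked directly at the sampling and resampling stages. The device that overcomes this is to condition on the $\sigma$-algebra generated by the previous generation of particles, exploit the resulting conditional independence, and upgrade conditional fourth-moment control to almost-sure convergence via Borel--Cantelli. A secondary point requiring care is keeping the denominator in the update step bounded away from zero in the limit, which is precisely where the strict positivity assumption on $g_t$ enters.
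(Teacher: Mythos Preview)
Your inductive argument is correct and is precisely the standard route: Feller property plus the induction hypothesis handle the deterministic propagation error, while conditional fourth-moment bounds combined with Borel--Cantelli dispose of the sampling and resampling fluctuations, and strict positivity of $g_t$ secures the denominator in the update ratio. The paper itself does not give a proof of this theorem at all---it simply refers the reader to \cite{Doucet2001}, Chapter~2, \cite{Crisan2002}, Section~IV, and \cite{Crisan2013}, Proposition~1(b)---so what you have written is essentially the argument found in those references, filling in what the paper only cites.
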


\begin{proof}
See \cite{Doucet2001}, Chapter 2 for a discussion of the convergence theorems.
Other source is \cite{Crisan2002}, Section IV. Paper \cite{Crisan2013} 
has a proof even for unbounded functions in Proposition 1(b).
\end{proof}

In our research we employ the $L_2$ version of the theorem
for $\pi^n_t$. It reads as follows:

\begin{theorem}
\label{ctTheorem}
Let $\{\pi_t^n\}_{t=1}^{T}$ 
be the sequence of empirical measures generated by the particle filter
for some fixed observation history $\{\bs{Y}_t=\bs{y}_t\}_{t=1}^{T}$, $T\in\N$.
Then for all $t\in\{1,\dots,T\}$ and 
\mbox{$f\in\mathcal{C}_b(\mathbb{R}^{d_x})$},
\begin{equation}
\label{ctformula}
\E[|\pi_t^nf-\pi_tf|^2]\leq \frac{c^2_t||f||^2_{\infty}}{n}
\end{equation}
with $c_t>0$ being a constant for fixed $t\in\{1,\dots,T\}$.
\end{theorem}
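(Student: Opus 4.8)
The plan is to prove the $L_2$ bound by induction on $t$, mirroring the recursive structure of the particle filter. At each step I would track how the $L_2$-error propagates through the two operations that define one cycle: the sampling step (applying the transition kernel $K_{t-1}$) and the update/resampling step (reweighting by $g_t$ and then multinomial resampling). The key observation is that each of these operations can be analyzed conditionally on the particles produced in the previous cycle, so that the overall error decomposes into a sum of a ``new'' sampling fluctuation and a ``propagated'' error inherited from the previous empirical measure.

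First I would establish the base case $t=1$. Here the particles $\{\ov{\bs{x}}_1^i\}$ are generated i.i.d.\ from the predictive mechanism applied to the initial sample, so the sampling error at $t=1$ is a genuine Monte Carlo error of order $n^{-1/2}$; a direct variance computation using the independence of the draws gives $\E[|\ov{\pi}_1^n f-\ov{\pi}_1 f|^2]\le \|f\|_\infty^2/n$. Next I would treat the update step: writing $\pi_t f$ as the ratio $\ov{\pi}_t(g_t f)/\ov{\pi}_t(g_t)$ and its empirical analogue $\widehat{\pi}_t^n f=\ov{\pi}_t^n(g_t f)/\ov{\pi}_t^n(g_t)$, I would add and subtract a common denominator and use the strict positivity and boundedness of $g_t$ (assumed in the excerpt) to bound $|\widehat{\pi}_t^n f-\pi_t f|$ by a constant multiple of the prediction errors applied to the bounded functions $g_t f$ and $g_t$. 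This is where the assumption $0<\inf g_t\le\sup g_t<\infty$ is essential, since it keeps the random denominator $\ov{\pi}_t^n(g_t)$ bounded away from zero.

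For the inductive step I would assume the bound $\E[|\pi_{t-1}^n f-\pi_{t-1}f|^2]\le c_{t-1}^2\|f\|_\infty^2/n$ for all bounded continuous $f$ and propagate it. The sampling step contributes a conditional variance term: given $\pi_{t-1}^n$, the particles $\ov{\bs{x}}_t^i$ are conditionally independent draws from $K_{t-1}(\cdot\mid\bs{x}_{t-1}^i)$, so $\E[|\ov{\pi}_t^n f-\pi_{t-1}^n(K_{t-1}f)|^2\mid \pi_{t-1}^n]\le \|f\|_\infty^2/n$ by conditional independence, while $\pi_{t-1}^n(K_{t-1}f)$ differs from $\ov{\pi}_t f=\pi_{t-1}(K_{t-1}f)$ by exactly the previous-step error applied to the bounded continuous function $K_{t-1}f$ (bounded continuous by the Feller property). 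Combining these two contributions with the inequality $(a+b)^2\le 2a^2+2b^2$ and using $\|K_{t-1}f\|_\infty\le\|f\|_\infty$ yields a prediction bound with an updated constant; composing this with the update-step bound from the previous paragraph and analyzing the resampling step (which, conditioned on the weights, is again a conditionally i.i.d.\ multinomial draw contributing an $n^{-1/2}$ fluctuation) closes the induction and produces the constant $c_t$ as an explicit function of $c_{t-1}$, $\inf g_t$, and $\sup g_t$.

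The main obstacle is the update step, and specifically the ratio structure: because the empirical denominator $\ov{\pi}_t^n(g_t)$ is itself random, one cannot simply bound the $L_2$ error of a ratio by the ratio of the errors. The clean way around this is the standard ratio decomposition
\begin{equation*}
\widehat{\pi}_t^n f-\pi_t f
=\frac{\ov{\pi}_t^n\!\big(g_t(f-\pi_t f)\big)-\ov{\pi}_t\!\big(g_t(f-\pi_t f)\big)}{\ov{\pi}_t^n(g_t)},
\end{equation*}
which reduces the problem to a single prediction error of the bounded function $g_t(f-\pi_t f)$ divided by the denominator, and then bounding that denominator below by $\inf g_t>0$. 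Handling the resampling on top of this correctly, so that the multinomial selection does not inflate the constant beyond a controllable factor, requires a careful conditional second-moment argument but no new idea beyond conditional independence and the law of total expectation.
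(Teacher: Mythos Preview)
The paper does not actually prove this theorem: its ``proof'' is a one-line citation to Crisan and Doucet (2002), Section~V. Your inductive scheme---propagating the $L_2$ error through the prediction, update, and resampling steps via conditional-variance bounds and the Feller property---is precisely the argument in that reference, so in spirit you are reproducing the cited proof rather than a proof appearing in the present paper.

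There is, however, one genuine gap in your update step. You propose to control the random denominator $\ov{\pi}_t^n(g_t)$ by bounding it below by $\inf g_t$, and you explicitly flag the assumption $0<\inf g_t$ as essential. But the paper only assumes $g_t(\bs{y}_t|\,\cdot\,)$ is bounded, continuous and \emph{strictly positive}; on the non-compact space $\R^{d_x}$ that does not force a positive infimum (take $g_t(\bs{x})\propto e^{-\|\bs{x}\|^2}$). Under the stated hypotheses your lower bound on the denominator can fail, and your ratio decomposition then gives no control. The standard repair---and the one Crisan--Doucet actually use, yielding the recursion $c_t=c_{t-1}(1+4\|g^v_t\|_\infty/\ov{\pi}_t g_t)$ quoted later in the paper---is to use instead
\[
\widehat{\pi}_t^n f-\pi_t f
=\frac{\ov{\pi}_t^n(g_t f)-\ov{\pi}_t(g_t f)}{\ov{\pi}_t g_t}
+(\widehat{\pi}_t^n f)\,\frac{\ov{\pi}_t g_t-\ov{\pi}_t^n g_t}{\ov{\pi}_t g_t},
\]
which places the \emph{deterministic} positive number $\ov{\pi}_t g_t$ in the denominator and, since $|\widehat{\pi}_t^n f|\le\|f\|_\infty$, reduces the update error to two prediction errors for the bounded functions $g_t f$ and $g_t$. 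With that substitution the rest of your induction goes through unchanged.
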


\begin{proof}
In this formulation, the theorem is presented 
in~\cite{Crisan2002}, Section~V
(authors use $c_t$ instead ours $c^2_t$).
\end{proof}

Remark that the $L_1$ version, i.e.,
$\E[|\pi_t^nf-\pi_tf|]$, is treated in~\cite{Doucet2001}, 
Theorem 2.4.1. The  theorem is mentioned for general $L_p$
norm, $p\geq 1$ in \cite{Crisan2013}, Proposition 1(a).

The theorem holds also for the class
$\mathcal{C}^{\C}_b(\mathbb{R}^{d_x})$ of bounded and continuous
complex functions of real variables over $\R^{d_x}$. That is, it holds 
also for functions $h:\R^{d_x}\ra\C$, $h(\bs{x})=f(\bs{x})+\mathrm{i}g(\bs{x})$,
$f,g\in\mathcal{C}_b(\R^{d_x})$, where $\mathrm{i}$ denotes the imaginary unit.
Clearly, the extension on complex functions is due to the triangle inequality 
for the absolute value (the modulus) of a complex number.

\section{Kernel methods}
\label{SecIII}
Kernel methods are widely used for nonparametric estimation of densities
of probability distributions with the vast literature available on the topic.
Here we review the very basics of the related methodology.
We focus in more details on the application of Fourier analysis in this field.
Our review is mainly based on the standard works of
\cite{Silverman1986} and \cite{Wand1995}, 
and the recent book by Tsybakov \cite{Tsybakov2009}.

\subsection{Basics of kernel methods}
Let $\bs{X}_1,\dots, \bs{X}_n$, $n\in\N$ 
be a set of independent random variables identically distributed as the 
real random variable~$\bs{X}:(\Omega,\A)\ra(\R^d,\B(\R^d))$.
Let the distribution of $\bs{X}$ have the density \mbox{$f:\R^d\ra[0,\infty)$}
with respect to the $d$-dimensional Lebesgue measure.
A nonparametric kernel density estimate of $f$ is constructed
on the basis of an i.i.d. sample \mbox{$\{\bs{X}_i=\bs{x}_i\}_{i=1}^n$} 
from the distribution of $\bs{X}$. The estimate is constructed as
a generalization of the classical histogram by replacing the indicator function,
which specifies individual bins of the histogram, by a more general 
function \mbox{$K:\R^d\ra\R$} which is commonly referred to 
as the \textit{kernel function} or simply as the~\textit{kernel}.

The definition formula of the standard $d$-variate nonparametric 
kernel density estimate writes as
\begin{equation}
\label{fnDdim}
\hat{f}_n(\bs{x})=
\frac{1}{nh^d}
\sum_{i=1}^n 
K\!\!\left(\frac{\bs{x}-\bs{x}_i}{h}\right)=
\frac{1}{nh^d}
\sum_{i=1}^n 
K\!\!\left(\frac{\bs{x}-\bs{X}_i}{h}\right).
\end{equation}
In the formula, the second expression points out the random
character of the estimate. That is, for each  $\bs{x}\in\R^d$, 
the estimate $\hat{f}_n(\bs{x})$ constitutes a random variable 
whose distribution is determined by the distribution of $\bs{X}$
and by the value of the parameter $h>0$ which is called 
the~\textit{bandwidth}.

Due to the random character of $\hat{f}_n(\bs{x})$, there is the relevant
question of the consistency and unbiasedness of the estimate.
In the univariate case, the classical result of Parzen \cite{Parzen1962}
(see also \cite{Silverman1986}, p.~71) states the conditions under
which the estimate is consistent. The result extends
on the multivariate case, see e.g. \cite{Givens1995}. The conditions
are imposed on the properties of the kernel function and on the evolution
of the bandwidth $h$ in dependence on the sample size $n\in\N$. We
mention only that $h$ is required to evolve in such a way that 
1)~$\lim_{n\ra\infty} h(n)=0$ and 2) $\lim_{n\ra\infty} nh^d(n)=\infty$.

The investigation on the bias of $\hat{f}_n(\bs{x})$ is closely related
to the investigation on the quality of the estimate 
in terms of the \textit{mean squared error} - 
$\mathrm{MSE}_{\bs{x}}(\hat{f}_n)$.
For a fixed point $\bs{x}\in\R^d$, the error is specified as
$\mathrm{MSE}_{\bs{x}}(\hat{f}_n)=\E[(\hat{f}_n(\bs{x})-f(\bs{x}))^2]$.
Employing properties of mean and variance, 
it writes as
\begin{equation}
\label{MSEdef}
\mathrm{MSE}_{\bs{x}}(\hat{f}_n)
=(\E[\hat{f}_n(\bs{x})]-f(\bs{x}))^2+var[\hat{f}_n(\bs{x})]
=(b[\hat{f}_n](\bs{x}))^2+\sigma^2[\hat{f}_n](\bs{x}),
\end{equation}
where the term $b[\hat{f}_n](\bs{x})=\E[\hat{f}_n(\bs{x})]-f(\bs{x})$ 
is the \textit{bias} and $\sigma^2[\hat{f}_n](\bs{x})=var[\hat{f}_n(\bs{x})]$
the \textit{variance} of the kernel density estimate $\hat{f}_n(\bs{x})$
at the point $\bs{x}\in\R^d$.

The $\mathrm{MSE}_{\bs{x}}(\hat{f}_n)$ is the local measure of the quality
of the estimate. It is desirable to have also a corresponding global measure.
Expectedly, such the measure deals with local errors accumulated over the 
whole domain of the estimated density. Mathematically, the accumulation is 
performed by integration. This leads to the notion of the 
\textit{mean integrated squared error} (MISE) of a kernel density estimate.

The MISE of the kernel density estimate $\hat{f}_n$ is defined and expressed 
on the basis of (\ref{MSEdef}) using the Fubini's theorem as
\begin{eqnarray}
\label{1DMISE}
\mathrm{MISE}(\hat{f}_n)&\!\!=
\!\!&\E\int[(\hat{f}_n(\bs{x})-f(\bs{x}))^2]\,d\bs{x}
=\int \MSE_{\bs{x}}(\hat{f}_n)\,\bs{x}\nonumber\\
&\!\!=\!\!&\int(\E[\hat{f}_n(\bs{x})]-f(\bs{x}))^2\,d\bs{x}+
\int var [\hat{f}_n(\bs{x})]\,d\bs{x}\nonumber\\
&\!\!=\!\!&\int(b[\hat{f}_n](\bs{x}))^2\,d\bs{x}+
\int \sigma^2[\hat{f}_n](\bs{x})\,d\bs{x}.\nonumber
\label{MISEhexact}
\end{eqnarray}
The formula consists of two summands which are the integrated
versions of the squared bias and variance terms of the
$\mathrm{MSE}_{\bs{x}}(\hat{f}_n)$.
The value of the MISE($\hat{f}_n$) depends on the value of the bandwidth $h$.

It is a standard observation that the bias and variance terms 
behave in the opposite way with respect to the magnitude of the bandwidth. 
That is, for $n\in\N$ fixed, if $h$ decreases, i.e., if $h\ra 0$,
then the bias goes to zero, and we have the asymptotic unbiasedness
of the $\hat{f}_n(\bs{x})$ estimate. However, the variance increases. 
If $h$ increases, i.e., if $h\ra\infty$, the bias increases too,
but the variance term diminishes.  Thus, we encounter here the situation
of the \textit{bias-variance trade-off} when minimizing the 
$\mathrm{MISE}(\hat{f}_n)$ by adjusting the bandwidth $h$.

The specification of the optimal value $h^*_\mathrm{MISE}$ 
minimizing (\ref{1DMISE}) can be made analytically only if (\ref{1DMISE}) 
has a closed-form expression. This is known only in some specific cases, 
for example, when the estimated density $f$
is a convex sum of normal densities, see \cite{Silverman1986}, p.~37 or
\cite{Wand1995}, p.~102 for the related explicit formulas for 
$\mathrm{MISE}(\hat{f}_n)$.
To deal with the minimization problem generally, the widely used approach 
is to investigate the asymptotic behavior of the MISE with respect
to the sample size $n\in\N$ going to infinity (AMISE analysis). 
The result based on the Taylor's expansion of the estimated density $f$ states 
(\cite{Silverman1986}, p. 85, \cite{Wand1995}, p.~99) that
\begin{equation}
\label{AMISE}
\mathrm{MISE}(\hat{f}_n)\approx
n^{-1}h^{-d}R(K)+\frac{1}{4}h^4(\mu_2(K))^2
\int (\nabla^2\!f(\bs{x}))^2\,d\bs{x}
\end{equation}
for $R(K)=\int K^2(\bs{u})\,d\bs{u}$, $\mu_2(K)=\int u_1^2K(\bs{u})\,d\bs{u}$,
$\nabla^2\!f(\bs{x})=\sum_{i=1}^d (\partial^2/\partial x^2_i)f(\bs{x})$.
Using standard calculus, the minimizer
of the above formula reads as
\begin{equation}
\label{nDAMISE}
h_{\mathrm{AMISE}}^*=
\left[\frac{d\cdot R(K)}{\mu_2(K)^2
\int (\nabla^2\!f(\bs{x}))^2\,d\bs{x}}
\;\frac{1}{n}\right]^{1/(d+4)}
\!\!\!\!\!\!\!\!\!\!\!\!\!\!\!\!\!\!.
\end{equation}

In (\ref{AMISE}), the terms $R(K)$ and $\mu_2(K)$ can be further
minimized over a set of appropriate kernels. The minimizer is known
as the \textit{Epanechnikov kernel} which is specified as
$K_{e}(\bs{u})=\frac{1}{2}\vartheta^{-1}_d(d+2)(1-||\bs{u}||^2)_+$
where $\vartheta_d$ is the volume of the $d$-dimensional unit sphere,
$||\cdot||$ is the Euclidean norm and $(\cdot)_+=\max\{0,\cdot\}$ 
is the positive part.

AMISE analysis represents the standard approach to the analytic 
specification of a suitable value of the bandwidth when constructing a kernel
density estimate, even though the specification of $h^*_{\mathrm{AMISE}}$
requires the knowledge of partial derivatives of the density $f$ under
estimation. Typically, to overcome the deadlock, the respective
entities are somehow estimated from data \cite{Wand1995}.

However, in Section 1.2.4 of his book \cite{Tsybakov2009}, Tsybakov 
provides a deeper criticism of the asymptotic approach. 
It stems from the fact that the optimality of $h^*_{\mathrm{AMISE}}$
is related to a~\textit{fixed density~$f$} and not to a well defined class
of densities. In Proposition 1.7, Tsybakov shows 
that for a given fixed density $f$ it is possible to construct such a non-negative
kernel estimate that the MISE($\hat{f}_n$) diminishes, but this cannot 
be done uniformly over a sufficiently broad class of densities. Examples of such classes, 
e.g. H\"{o}lder, Sobolev or Nikol'ski classes, are presented in \cite{Tsybakov2009}.
The Sobolev class is treated in Definition~\ref{defSob} below.

Based on this criticism, Tsybakov presents a different approach 
to the MISE analysis in Section~1.3 of \cite{Tsybakov2009}. 
The approach relies on Fourier analysis.

\subsection{Fourier analysis}
In this section, we deal with the application of Fourier ana\-ly\-sis in
the area of nonparametric kernel density estimation. We mainly follow
the presentation of Tsybakov given in Chapter~1 of \cite{Tsybakov2009}.
In \cite{Tsybakov2009}, results are provided for the univariate
case. In order to the results could be applied in our research
presented in Section~\ref{SecIV}, we have extended them into multiple dimensions.

In the probability theory, Fourier analysis is intimately interconnected with
the notion of the characteristic function. 
Let $\bs{X}:(\Omega,\A)\ra(\R^d,\B(\R^d))$ be a $d$-variate
real random vector with the joint distribution $\mu(d\bs{x})$. 
The characteristic function $\phi(\bs{\omega}):\R^d\ra\C$ of $\bs{X}$ 
is defined as the integral transform
\begin{equation}
\label{charfce}
\phi(\bs{\omega})=\E[e^{\mathrm{i}\lss\bs{\omega},\bs{X}\rss}]=
\int e^{\mathrm{i}\lss\bs{\omega},\bs{x}\rss}\,
\mu(d\bs{x}),\;\;\bs{\omega}\in\R^d,
\end{equation}
where $\lss\cdot,\!\cdot\rss$ denotes the dot product.
It is well known that the transform provides the complete characterization
of the distribution of $\bs{X}$; and we often speak
about the Fourier transform of the random vector $\bs{X}$. 

The other quite common view of the Fourier transform comes from the
area of applied mathematics. Let  $f:\R^d\ra\R$ be an integrable
function (a signal in electrical engineering), i.e., let $f\in L_1(\R^d)$, 
then its Fourier transform is specified as
\begin{equation}
\label{Ftransform}
\F[f](\bs{\omega})=\int e^{\mathrm{i}\lss\bs{\omega},\bs{x}\rss}
f(\bs{x})\,d\bs{x},
\;\;\bs{\omega}\in\R^d.
\end{equation}
Formula (\ref{Ftransform}) can be treated as the special case of formula
(\ref{charfce}) when the distribution of $\bs{X}$ is absolutely continuous
with respect to the $d$-dimensional Lebesgue measure and
has the density~$f$, i.e., $\mu(d\bs{x})=f(\bs{x})\,d\bs{x}$. On the
other hand, in (\ref{Ftransform}), $f$ need not be necessarily a density,
only the integrability is assumed.

Let $f,g\in L_1(\R^d)\cap L_2(\R^d)$, i.e., we consider functions both
$L_1$ and $L_2$ integrable over $\R^d$, then the following properties
of the multivariate Fourier transform are relevant to our research:
\begin{itemize}
\item continuity: $\F[f]$ is uniformly continuous on $\R^d$,
\item linearity:
$\F[af+bg](\bs{\omega})=a\F[f](\bs{\omega})+b\F[g](\bs{\omega}),\;\;\;a,b\in\R$,
\item shifting:
$\F[f(\bs{x}-\bs{s})](\bs{\omega})=
e^{\mathrm{i}\lss \bs{\omega},\bs{s}\rss}\F[f](\bs{\omega}),\;\;\bs{s}\in\R^d$,
\item scaling:\;
$\F[f(\bs{x}/h)/h^d](\bs{\omega})=\F[f](h\bs{\omega}),\;\;h>0$,
\item shifting \& scaling:
$\F[f((\bs{x}-\bs{s})/h)/h^d]=
e^{\mathrm{i}\lss \bs{\omega},\bs{s}\rss}\F[f](h\bs{\omega}),\;\;\bs{s}\in\R^d$,
\item complex conjugate:
$\ov{\F[f](\bs{\omega})}=\F[f](-\bs{\omega})$,
\item convolution:
$\F[f*g](\bs{\omega})=\F[f](\bs{\omega})\F[g](\bs{\omega})$,
\item derivative:
$\F[f^{(m)}_{i_1,\dots,i_d}](\bs{\omega})\!=\!
(\mathrm{-i})^m
(\omega^{i_1}_1\cdot\dots\cdot\omega^{i_d}_d)\F[f](\bs{\omega})$,
\item symmetry:
if $f(-\bs{x})=\!f(\bs{x})$, then $\F[f](-\bs{\omega})\!=\!\F[f](\bs{\omega})$,
\item isometry, due to the Plancheler's formula for $f\in L_2(\R^d)$:
$$
\int f^2(\bs{x})\, d\bs{x}=
\frac{1}{(2\pi)^d}\int |\F[f](\bs{\omega})|^2\,d\bs{\omega}.
$$
\end{itemize}

Now, the uniformly weighted sum of Dirac measures
$\delta_n(d\bs{x})$ introduced in formula (\ref{PdxDef})
represents the probability distribution which does not have 
any density with respect to the corresponding Lebesgue measure. 
Its characteristic function $\phi_n(\bs{\omega})$ is specified as
\begin{equation}
\label{phindef}
\phi_n(\bs{\omega})=\int e^{\mathrm{i}\lss \bs{\omega},\bs{x} \rss} 
\delta_n(d\bs{x})=
\frac{1}{n}\sum_{j=1}^n e^{\mathrm{i}\lss \bs{\omega},\bs{X}_j\rss},
\;\;\bs{\omega}\in\R^d.
\end{equation}
Note that $\phi_n(\bs{\omega})$ constitutes a random
variable for $\bs{\omega}\in\R^d$ being fixed.

Under the assumption of $L_1(\R^d)$ integrability of the employed
kernel $K$, we can consider the Fourier transform of the 
multivariate density kernel estimate~(\ref{fnDdim}).
Using the linearity and the shifting~\&~scaling property of
the Fourier transform, $\F[\hat{f}_n](\bs{\omega})$ is specified by formula
\begin{equation}
\F[\hat{f}_n](\bs{\omega})=
\frac{1}{n}\sum_{j=1}^n
\F\left[\frac{1}{h^d}K\!\!\left(\frac{\bs{x}-\bs{X}_j}{h}\right)\right]
=
\frac{1}{n}\sum_{j=1}^n 
e^{\mathrm{i}\lss\bs{\omega},\bs{X}_j\rss}\F[K](h\bs{\omega}).
\end{equation}

Writing $K_{\F}(\bs{\omega})$ for $\F[K](\bs{\omega})$ we obtain
the compact expression of $\hat{f}_n$ in the form
\begin{equation}
\label{Fconv}
\F[\hat{f}_n](\bs{\omega})=\phi_n(\bs{\omega})K_{\F}(h\bs{\omega}).
\end{equation}
This shows that the standard kernel estimator which is based on an i.i.d.
sample is obtained by the convolution of the employed kernel with
the uniformly weighted sum of Dirac measures corresponding to the
sample.

To proceed with the investigation of the MISE of density kernel estimates
in the frequency domain, we present a multivariate version of
Lemma 1.2 from~\cite{Tsybakov2009}.
\begin{lemma}
\label{LmTheorem}
Let $\{\bs{X}_j\}_{j=1}^n$ be an i.i.d. sample from a distribution with the density~$f$.
Let the characteristic function of $\bs{X}_j$ be $\phi(\bs{\omega})$.
Then for $\phi_n$ of $\,\mathrm{(\ref{phindef})}$ we have
\begin{flalign*}
\begin{array}{ll}
\mathrm{(i)}&\E[\phi_n(\bs{\omega})]=\phi(\bs{\omega}),\\
\mathrm{(ii)}&\E[|\phi_n(\bs{\omega})|^2]=\left(1-\frac{1}{n}\right)|\phi(\bs{\omega})|^2
+\frac{1}{n},\\
\mathrm{(iii)}&\E[|\phi_n(\bs{\omega})-\phi(\bs{\omega})|^2]=
\frac{1}{n}(1-|\phi(\bs{\omega})|^2).
\end{array} &&
\end{flalign*}
\end{lemma}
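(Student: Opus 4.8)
The plan is to compute each of the three quantities directly from the definition
$\phi_n(\bs{\omega})=\frac{1}{n}\sum_{j=1}^n e^{\mathrm{i}\lss\bs{\omega},\bs{X}_j\rss}$,
exploiting linearity of expectation and the i.i.d.\ structure of the sample.
Throughout I write $Z_j=e^{\mathrm{i}\lss\bs{\omega},\bs{X}_j\rss}$ for fixed
$\bs{\omega}\in\R^d$; each $Z_j$ is a complex random variable with
$\E[Z_j]=\phi(\bs{\omega})$ by the definition of the characteristic function
(\ref{charfce}), and the $Z_j$ are i.i.d.\ since the $\bs{X}_j$ are.
Note that $|Z_j|^2=Z_j\ov{Z_j}=1$ pointwise, which will be the key
simplification.

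\textbf{Part (i).}
This is immediate: by linearity,
$\E[\phi_n(\bs{\omega})]=\frac{1}{n}\sum_{j=1}^n\E[Z_j]
=\frac{1}{n}\cdot n\,\phi(\bs{\omega})=\phi(\bs{\omega})$.

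\textbf{Part (ii).}
Here I expand $|\phi_n|^2=\phi_n\ov{\phi_n}$ as a double sum:
\begin{equation*}
|\phi_n(\bs{\omega})|^2=\frac{1}{n^2}\sum_{j=1}^n\sum_{k=1}^n Z_j\ov{Z_k}.
\end{equation*}
Taking expectation, I split the double sum into the $n$ diagonal terms
($j=k$) and the $n(n-1)$ off-diagonal terms ($j\neq k$).
For $j=k$ we have $\E[Z_j\ov{Z_j}]=\E[|Z_j|^2]=1$.
For $j\neq k$, independence gives
$\E[Z_j\ov{Z_k}]=\E[Z_j]\,\E[\ov{Z_k}]=\phi(\bs{\omega})\ov{\phi(\bs{\omega})}
=|\phi(\bs{\omega})|^2$.
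Hence
\begin{equation*}
\E[|\phi_n(\bs{\omega})|^2]
=\frac{1}{n^2}\bigl(n\cdot 1+n(n-1)|\phi(\bs{\omega})|^2\bigr)
=\frac{1}{n}+\Bigl(1-\frac{1}{n}\Bigr)|\phi(\bs{\omega})|^2,
\end{equation*}
which is the claimed identity.
The one point requiring a little care—and the closest thing to an obstacle—is
the correct bookkeeping of the diagonal versus off-diagonal terms and the use
of $\E[\ov{Z_k}]=\ov{\phi(\bs{\omega})}$, which follows from the complex
conjugate / symmetry behavior of the characteristic function; everything else
is routine.

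\textbf{Part (iii).}
I would derive this from (i) and (ii) rather than recomputing from scratch.
Expanding the squared modulus of the difference,
\begin{equation*}
\E[|\phi_n-\phi|^2]
=\E[|\phi_n|^2]-\ov{\phi}\,\E[\phi_n]-\phi\,\E[\ov{\phi_n}]+|\phi|^2,
\end{equation*}
where I suppress the argument $\bs{\omega}$.
By (i), $\E[\phi_n]=\phi$ and $\E[\ov{\phi_n}]=\ov{\phi}$, so the three
rightmost terms collapse to $-|\phi|^2-|\phi|^2+|\phi|^2=-|\phi|^2$.
Substituting (ii) for the first term yields
\begin{equation*}
\E[|\phi_n-\phi|^2]
=\frac{1}{n}+\Bigl(1-\frac{1}{n}\Bigr)|\phi|^2-|\phi|^2
=\frac{1}{n}\bigl(1-|\phi(\bs{\omega})|^2\bigr),
\end{equation*}
completing the proof.
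The whole argument is elementary; the only genuine content is the observation
$|Z_j|^2\equiv 1$ together with the independence splitting in Part (ii).
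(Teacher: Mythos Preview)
Your proof is correct and follows essentially the same approach as the paper: both arguments expand $|\phi_n|^2$ as a double sum, separate diagonal from off-diagonal terms using $|e^{\mathrm{i}\lss\bs{\omega},\bs{X}_j\rss}|^2=1$ and independence, and then obtain (iii) algebraically from (i) and (ii). The only cosmetic difference is your use of the shorthand $Z_j$ and $\ov{Z_k}$ where the paper writes $\phi_n(-\bs{\omega})$ for the conjugate.
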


\noindent
\textbf{Proof.} To show (i), consider the i.i.d. character of $\{\bs{X}_j\}_{j=1}^n$,
$$
\E[\phi_n(\bs{\omega})]=\frac{1}{n}\sum_{j=1}^n
\int e^{\mathrm{i}\lss\bs{\omega},\bs{x}\rss}f(\bs{x})\,d\bs{x}
=\frac{1}{n}\sum_{j=1}^n\phi(\bs{\omega})
=\phi(\bs{\omega}).
$$
To show (ii), note that
\begin{eqnarray}
\E [|\phi_n(\bs{\omega})|^2]&=&
\E [\phi_n(\bs{\omega})\ov{\phi_n(\bs{\omega})}]=
\E [\phi_n(\bs{\omega})\phi_n(-\bs{\omega})]\nonumber\\
&=&\E\left[\frac{1}{n^2}\sum_{j,k:j\not= k}
e^{\mathrm{i}\lss\bs{\omega},\bs{X}_j\rss}e^{-\mathrm{i}\lss\bs{\omega},\bs{X}_k\rss}\right]
+\frac{n}{n^2}\nonumber\\
&=&\frac{1}{n^2}\sum_{j,k:j\not= k}
\E\left[e^{\mathrm{i}\lss\bs{\omega},\bs{X}_j\rss}\right]
\E\left[e^{-\mathrm{i}\lss\bs{\omega},\bs{X}_k\rss}\right]
+\frac{1}{n}\nonumber\\
&=&\frac{n^2-n}{n^2}\phi(\bs{\omega})\phi(-\bs{\omega})+\frac{1}{n}\nonumber\\
&=&\left(1-\frac{1}{n}\right)|\phi(\bs{\omega})|^2+\frac{1}{n}.
\end{eqnarray}
Case (iii) folows from (ii) a (i). Indeed,
$\E[|\phi_n(\bs{\omega})-\phi(\bs{\omega})|^2]=$
\begin{eqnarray}
&=&\E[
(\phi_n(\bs{\omega})-\phi(\bs{\omega}))\ov{(\phi_n(\bs{\omega})-\phi(\bs{\omega}))}]\nonumber\\
&=&\E[
(\phi_n(\bs{\omega})-\phi(\bs{\omega}))(\phi_n(-\bs{\omega})-\phi(-\bs{\omega}))]\nonumber\\
&=&\E[
|\phi_n(\bs{\omega})|^2-\phi_n(\bs{\omega})\phi(-\bs{\omega})
-\phi(\bs{\omega})\phi_n(-\bs{\omega})+|\phi(\bs{\omega})|^2]\nonumber\\
&=&\E[
|\phi_n(\bs{\omega})|^2]-\phi(\bs{\omega})\phi(-\bs{\omega})
-\phi(\bs{\omega})\phi(-\bs{\omega})+|\phi(\bs{\omega})|^2\nonumber\\
&=&\E[
|\phi_n(\bs{\omega})|^2]-2|\phi(\bs{\omega})|^2+|\phi(\bs{\omega})|^2\nonumber\\
&=&\left(1-\frac{1}{n}\right)|\phi(\bs{\omega})|^2+\frac{1}{n}-|\phi(\bs{\omega})|^2\nonumber\\
&=&\frac{1}{n}(1-|\phi(\bs{\omega})|^2)\nonumber.
\end{eqnarray}
This concludes the proof.\hfill$\Box$\\

Let us assume that both density $f$ and kernel $K$ belong also to $L_2(\R^d)$.
Then employing the Plancherel's theorem and~(\ref{Fconv}), we get for
the MISE of (\ref{1DMISE}) the expression
\begin{equation}
\label{LLP}
\mathrm{MISE}(\hat{f}_n)=\frac{1}{(2\pi)^d}\;
\E\!\int |\phi_n(\bs{\omega})K_{\F}(h\bs{\omega})-\phi(\bs{\omega})|^2\,d\bs{\omega}.
\end{equation}

The next theorem provides the exact computation of the MISE($\hat{f}_n$)
for any fixed $n\in\N$.
\begin{theorem}
\textit{Let $f\in L_2(\R^d)$ be a density and
$K\in L_1(\R^d)\cap L_2(\R^d)$ a kernel.
Then for all $n\geq 1$ and $h>0$ the MISE of the 
i.i.d. based kernel estimator $\hat{f}_n$ of 
$\mathrm{(\ref{fnDdim})}$ has the form}
\begin{eqnarray}
\label{MISEFourier}
\mathrm{MISE}(\hat{f}_n)&=&\;\;\;\frac{1}{(2\pi)^d}
\left[\int|1-K_{\F}(h\bs{\omega})|^2|\phi(\bs{\omega})|^2\,d\bs{\omega}+
\frac{1}{n}\int |K_{\F}(h\bs{\omega})|^2\,d\bs{\omega}\right] \nonumber\\
&&-\frac{1}{(2\pi)^d}\frac{1}{n}
\int |\phi(\bs{\omega})|^2 |K_{\F}(h\bs{\omega})|^2\,d\bs{\omega}.
\end{eqnarray}
\end{theorem}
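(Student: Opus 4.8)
The plan is to start from the Plancherel representation (\ref{LLP}), namely
$$
\MISE(\hat{f}_n)=\frac{1}{(2\pi)^d}\,\E\!\int\bigl|\phi_n(\bs{\omega})K_{\F}(h\bs{\omega})-\phi(\bs{\omega})\bigr|^2\,d\bs{\omega},
$$
and to reduce the problem to a pointwise-in-$\bs{\omega}$ computation of the expectation. First I would push the expectation inside the integral. Since the integrand is nonnegative, Tonelli's theorem justifies the interchange of $\E$ and $\int d\bs{\omega}$ without any further hypothesis, so it suffices to evaluate $\E[\,|\phi_n(\bs{\omega})K_{\F}(h\bs{\omega})-\phi(\bs{\omega})|^2\,]$ for each fixed $\bs{\omega}$. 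The key observation is that, for fixed $\bs{\omega}$, both $K_{\F}(h\bs{\omega})$ and $\phi(\bs{\omega})$ are deterministic; the only random quantity is $\phi_n(\bs{\omega})$ of (\ref{phindef}).

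Next I would expand the squared modulus using $|z|^2=z\ov{z}$. Writing $K$ for $K_{\F}(h\bs{\omega})$ and $\phi$ for $\phi(\bs{\omega})$ to compress notation, one has
$$
|\phi_n K-\phi|^2=|\phi_n|^2|K|^2-\phi_n\ov{\phi}\,K-\ov{\phi_n}\phi\,\ov{K}+|\phi|^2,
$$
and taking expectations term by term, pulling out the deterministic factors, gives
$$
\E\bigl[|\phi_n K-\phi|^2\bigr]=|K|^2\,\E[|\phi_n|^2]-|\phi|^2\bigl(K+\ov{K}\bigr)+|\phi|^2,
$$
where I have used Lemma~\ref{LmTheorem}(i) in the forms $\E[\phi_n]=\phi$ and $\E[\ov{\phi_n}]=\ov{\E[\phi_n]}=\ov{\phi}$ to collapse the two cross terms into $|\phi|^2(K+\ov{K})$. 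Substituting the second-moment identity from Lemma~\ref{LmTheorem}(ii), $\E[|\phi_n|^2]=(1-\tfrac1n)|\phi|^2+\tfrac1n$, turns the right-hand side into an explicit deterministic expression.

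It then remains to collect terms and match them against the claimed form. Using $|1-K|^2=1-K-\ov{K}+|K|^2$, one checks that
$$
|K|^2\Bigl[(1-\tfrac1n)|\phi|^2+\tfrac1n\Bigr]-|\phi|^2(K+\ov{K})+|\phi|^2
=|1-K|^2|\phi|^2+\tfrac1n|K|^2-\tfrac1n|\phi|^2|K|^2,
$$
which is exactly the integrand of (\ref{MISEFourier}) with $K=K_{\F}(h\bs{\omega})$. Reinserting the factor $1/(2\pi)^d$ and integrating over $\bs{\omega}$ (splitting the single integral into the three displayed pieces) yields the stated formula. This final algebraic rearrangement is entirely routine; I expect the only point requiring care to be the bookkeeping of the complex conjugates in the two cross terms, together with a brief remark that each of the three resulting integrals is finite. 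The latter follows from the standing hypotheses $f\in L_2(\R^d)$ and $K\in L_1(\R^d)\cap L_2(\R^d)$: Plancherel's identity gives $\int|\phi(\bs{\omega})|^2\,d\bs{\omega}=(2\pi)^d\int f^2(\bs{x})\,d\bs{x}<\infty$, the scaling property gives $\int|K_{\F}(h\bs{\omega})|^2\,d\bs{\omega}=h^{-d}\int|K_{\F}(\bs{u})|^2\,d\bs{u}<\infty$, and the mixed term $\int|\phi|^2|K_{\F}(h\bs{\omega})|^2\,d\bs{\omega}$ is dominated by $\|K_{\F}\|_\infty^2\int|\phi|^2$ (or controlled by Cauchy--Schwarz), so the decomposition of the single $L_1$ integrand into three separately integrable pieces is legitimate.
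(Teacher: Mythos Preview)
Your argument is correct and follows the same overall template as the paper: start from the Plancherel representation (\ref{LLP}), push the expectation inside, invoke Lemma~\ref{LmTheorem} for the moments of $\phi_n$, and collect terms. The one organizational difference is that the paper first writes
\[
\phi_n(\bs{\omega})K_{\F}(h\bs{\omega})-\phi(\bs{\omega})
=(\phi_n(\bs{\omega})-\phi(\bs{\omega}))K_{\F}(h\bs{\omega})-(1-K_{\F}(h\bs{\omega}))\phi(\bs{\omega})
\]
before squaring. With this add-and-subtract, the cross terms in the expansion carry the factor $\E[\phi_n-\phi]=0$ and vanish on sight, leaving immediately the two summands $\E[|\phi_n-\phi|^2]\,|K_{\F}(h\bs{\omega})|^2$ and $|1-K_{\F}(h\bs{\omega})|^2|\phi(\bs{\omega})|^2$; the paper then plugs in part~(iii) of Lemma~\ref{LmTheorem} rather than part~(ii). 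Your direct expansion via part~(ii) reaches the same integrand after the algebraic identity you wrote down; the paper's decomposition just makes the bias--variance split visible from the outset and spares that final check. Either route is fine.
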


\noindent
\textbf{Proof.} As $\phi, K\in L_2(\R^d)$ and $|\phi(\bs{\omega})|\leq 1$
for all $\bs{\omega}\in\R^d$, all the integrals are finite.
To obtain the Fourier MISE formula it suffices to develop (\ref{LLP}),
\begin{eqnarray*}
&&\E\int|\phi_n(\bs{\omega})K_{\F}(h\bs{\omega})-\phi(\bs{\omega})|^2\,d\bs{\omega}\\
&=&\E\int|(\phi_n(\bs{\omega})-\phi(\bs{\omega}))K_{\F}(h\bs{\omega})-
(1-K_{\F}(h\bs{\omega}))\phi(\bs{\omega})|^2\,d\bs{\omega}\\
&=&\E\int((\phi_n(\bs{\omega})-\phi(\bs{\omega}))K_{\F}(h\bs{\omega})-
(1-K_{\F}(h\bs{\omega}))\phi(\bs{\omega}))\\
&&\hspace{0.5cm}\cdot\;(\ov{(\phi_n(\bs{\omega})-\phi(\bs{\omega}))K_{\F}(h\bs{\omega})}-
\ov{(1-K_{\F}(h\bs{\omega}))\phi(\bs{\omega})})\,d\bs{\omega}\\
&=&\int\E [|\phi_n(\bs{\omega})-\phi(\bs{\omega})|^2]|K_{\F}(h\bs{\omega})|^2+
|1-K_{\F}(h\bs{\omega})|^2|\phi(\bs{\omega})|^2\\
&&\hspace{0.4cm}+\;(\E [(\phi_n(\bs{\omega})]-\phi(\bs{\omega}))K_{\F}(h\bs{\omega})
\,\ov{(1-K_{\F}(h\bs{\omega}))\phi(\bs{\omega}))}\\[0.2cm]
&&\hspace{0.4cm}+\;(1-K_{\F}(h\bs{\omega}))\phi(\bs{\omega}))\,
(\E [\ov{\phi_n(\bs{\omega})}]-\ov{\phi(\bs{\omega}}))\ov{K_{\F}(h\bs{\omega})}
\,d\bs{\omega}\\[0.2cm]
&=&\int\E [|\phi_n(\bs{\omega})-\phi(\bs{\omega})|^2]|K_{\F}(h\bs{\omega})|^2+
|1-K_{\F}(h\bs{\omega})|^2|\phi(\bs{\omega})|^2\,d\bs{\omega}\\
&=&\frac{1}{n}\int(1-|\phi(\bs{\omega})|^2)|K_{\F}(h\bs{\omega})|^2\,d\bs{\omega}
+\int |1-K_{\F}(h\bs{\omega})|^2|\phi(\bs{\omega})|^2\,d\bs{\omega}
\end{eqnarray*}
After rearranging we obtain the assertion of the theorem.\hfill$\Box$

We are now going to discuss the individual terms in
the Fourier MISE formula (\ref{MISEFourier}).
We start with the notion of the \textit{order of a kernel}.

\begin{definition}
\label{defell}
Let $\ell\geq 1$ be an integer. We say that the kernel $K\!\!:\R^d\ra\R$ is of
order $\ell$, if $K$ is $L_1(\R^d)\cap L_2(\R^d)$ integrable, 
its Fourier transform $K_\F(\bs{\omega})$ is real, satisfies \mbox{$K_\F(\bs{0})=1$}
and has all partial derivatives 
$K^{(m)}_{\F,i_1,\dots i_d}=\partial^mK_\F/\partial_{i_1}\dots\partial_{i_d}$,
$m=i_1+\dots+i_d$, $m\in\N$ up to the $\ell$-th order and it holds that
$K^{(m)}_{\F,i_1,\dots i_d}(\bs{0})=0$ for all $m=1,\dots,\ell$.
\end{definition}

Remark that the above definition imposes the following conditions 
on a multivariate kernel to be of order $\ell\geq 1$, $\ell\in\N$:
\begin{itemize}
\item
$\int K(\bs{u})\,d\bs{u}=1$,
\item
$\int u^{i_1}_{1}\cdots u^{i_d}_{d} K(\bs{u})\,d\bs{u}=0$ 
for $m=1,\dots,\ell$.
\end{itemize}
Indeed, at the origin we have
$K_{\F}(\bs{0})=\int e^{\mathrm{i}\lss\bs{0},\bs{u}\rss}K(\bs{u})\,d\bs{u}
=\int K(\bs{u})\,d\bs{u}=1$.
For the \mbox{$m$-th} partial derivative, we get  
$$
K^{(m)}_{\F,i_1,\dots i_d}(\bs{\omega})=
\int (\mathrm{i}u_1)^{i_1}\cdots(\mathrm{i}u_d)^{i_d}\,
e^{\mathrm{i}\lss\bs{\omega},\bs{u}\rss}K(\bs{u})\,d\bs{u},
$$ 
hence $0=K^{(m)}_{\F,i_1,\dots i_d}(\bs{0})=\mathrm{i}^m
\int u^{i_1}_{1}\cdots u^{i_d}_{d} K(\bs{u})\,d\bs{u}$.

From the remark, it follows that kernels
of order $\ell\geq 2$ must take negative values. If such kernels
are allowed in kernel estimates, then $\hat{f}_n$ of (\ref{fnDdim})
may also take negative values. However, this is not a serious drawback
because we can always take as the final estimate the positive part of $\hat{f}_n$,
i.e., $\hat{f}^+_n=\max\{0,\hat{f}_n\}$. At each point $\bs{x}\in\R^d$, the
$\mathrm{MSE}_{\bs{x}}$ of $\hat{f}^+_n(\bs{x})$ is always smaller than that
of negative $\hat{f}_n(\bs{x})$. Therefore we have also
$\mathrm{MISE}(\hat{f}^+_n)\leq \mathrm{MISE}(\hat{f}_n)$.\\

\subsubsection{The first term}
For the first term in the Fourier MISE formula
(\ref{MISEFourier}), we are able to say something more
specific if we consider the order of the kernel involved
in the estimate.

\begin{theorem}
\label{bdTheorem}
Let $K\!\!:\R^d\ra\R$ be a kernel of order $\ell\geq 1$,
$\ell\in\N$. Then there exists a constant $A>0$ such that
\begin{equation}
\label{Acondition}
\mathrm{sup}_{{\bs{\omega}\in\R^d\backslash\{\bs{0}\}}}
\frac{|1-K_{\F}(\bs{\omega})|}{||\bs{\omega}||^{\ell}}
\leq A,
\end{equation}
and
\begin{equation}
\label{1stTerm}
\int |1-K_{\F}(h\bs{\omega})|^2|\phi(\bs{\omega})|^2 d\bs{\omega}
\leq
A^2h^{2\ell}\int ||\bs{\omega}||^{2\ell}|\phi(\bs{\omega})|^2 d\bs{\omega}
\end{equation}
for any function $f$ with the Fourier transform $\phi(\bs{\omega})$ and $h>0$.
\end{theorem}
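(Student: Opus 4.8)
The plan is to establish the pointwise bound (\ref{Acondition}) first, since (\ref{1stTerm}) then follows almost immediately by substitution and integration. For the pointwise bound, the key observation is that a kernel of order $\ell$ has a Fourier transform $K_\F$ that is extremely flat at the origin: we have $K_\F(\bs{0})=1$ and all partial derivatives up to order $\ell$ vanish at $\bs{0}$. The function $1-K_\F(\bs{\omega})$ therefore has a zero of order at least $\ell$ at the origin, which is exactly what is needed to control it by $||\bs{\omega}||^\ell$ near the origin.

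First I would treat the behavior near the origin using a Taylor expansion of $K_\F$ around $\bs{0}$. Because $K_\F$ has vanishing value-offset and vanishing partial derivatives through order $\ell$, the multivariate Taylor theorem with remainder gives $1-K_\F(\bs{\omega})=O(||\bs{\omega}||^\ell)$ as $\bs{\omega}\ra\bs{0}$; more precisely, the remainder term is bounded by a constant times $||\bs{\omega}||^\ell$ on a neighborhood of the origin, using the continuity of the $\ell$-th order partial derivatives (which exist by the order-$\ell$ assumption). This shows that the quotient $|1-K_\F(\bs{\omega})|/||\bs{\omega}||^\ell$ stays bounded on some punctured ball $0<||\bs{\omega}||\leq r$.

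Next I would handle the region away from the origin, $||\bs{\omega}||\geq r$. There the denominator $||\bs{\omega}||^\ell$ is bounded below by $r^\ell>0$, while the numerator $|1-K_\F(\bs{\omega})|$ is bounded above: since $K\in L_1(\R^d)$ we have $|K_\F(\bs{\omega})|\leq\int|K(\bs{u})|\,d\bs{u}$, so $|1-K_\F(\bs{\omega})|\leq 1+\int|K(\bs{u})|\,d\bs{u}$ uniformly. Hence the quotient is bounded on $||\bs{\omega}||\geq r$ by a constant as well. Taking $A$ to be the maximum of the two regional bounds gives (\ref{Acondition}).

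Finally, for (\ref{1stTerm}) I would simply substitute. From (\ref{Acondition}), replacing $\bs{\omega}$ by $h\bs{\omega}$ gives $|1-K_\F(h\bs{\omega})|\leq A\,||h\bs{\omega}||^\ell=A h^\ell||\bs{\omega}||^\ell$ for every $\bs{\omega}\neq\bs{0}$ (and the bound holds trivially at $\bs{\omega}=\bs{0}$, where both sides involving $1-K_\F(\bs{0})=0$ vanish). Squaring, multiplying by $|\phi(\bs{\omega})|^2$, and integrating over $\R^d$ yields (\ref{1stTerm}) directly, pulling the constant $A^2 h^{2\ell}$ out of the integral. The main obstacle is the Taylor-remainder argument near the origin: one must verify that the order-$\ell$ vanishing of the derivatives, together with enough regularity of $K_\F$ to justify a Taylor expansion with controlled remainder, indeed forces the $||\bs{\omega}||^\ell$ decay of $1-K_\F$; the away-from-origin and substitution steps are routine by comparison.
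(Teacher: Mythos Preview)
Your proposal is correct and follows essentially the same route as the paper: Taylor expansion of $K_\F$ about $\bs{0}$ (using that all partial derivatives through order $\ell$ vanish there) to bound the quotient near the origin, the trivial $L_1$ bound $|K_\F(\bs{\omega})|\leq\int|K|$ to control it away from the origin, and then direct substitution $\bs{\omega}\mapsto h\bs{\omega}$, squaring, and integration for (\ref{1stTerm}). The paper executes the near-origin step via the Peano form of the remainder (so that $|1-K_\F(\bs{\omega})|/||\bs{\omega}||^\ell\ra 0$ as $\bs{\omega}\ra\bs{0}$, making the extended quotient continuous on the closed unit ball and hence attaining a finite maximum), which is slightly cleaner than invoking continuity of the $\ell$-th order partials, but the argument is otherwise identical.
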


\noindent
\textbf{Proof.}
We employ the multidimensional Taylor's theorem. Because the kernel $K$ 
is of order $\ell\geq 1$, its Fourier transform $K_\F(\bs{\omega})$ is real
and by the Taylor's theorem
\begin{equation*}
K_{\F}(\bs{\omega})
=K_{\F}(\bs{0})+
\frac{1}{1!}\sum_{i=1}^d K^{(1)}_{\F,i}(\bs{0})\,\omega_i
+\dots+
\frac{1}{\ell!}\!\!\!\!\!
\sum_{i_1,\dots,i_d=1}^d\!\!\!\!\!\!
K^{(\ell)}_{\F,i_1,\dots,i_d}({\bs{0}})\,
\omega_{i_1}\cdots\omega_{i_d}+R_\ell(\bs{\omega})
\end{equation*}
with 
$
\lim_{\bs{\omega}\ra\bs{0}}
R_{\ell}(\bs{\omega})/||\bs{\omega}||^{\ell}=0
$
for the reminder, where $||\cdot||$ is the Euclidean norm.

Because the involved partial derivatives equal to zero, the remainder writes
$R_{\ell}(\bs{\omega})=K_{\F}(\bs{\omega})-K_{\F}(\bs{0})=K_{\F}(\bs{\omega})-1$
and $\lim_{\bs{\omega}\ra\bs{0}} |1-K_{\F}(\bs{\omega})|/||\bs{\omega}||^{\ell}=0$
by the Taylor's theorem.

Let us define $A_{\ell}(\bs{\omega})=|1-K_{\F}(\bs{\omega})|/||\bs{\omega}||^{\ell}$
for $\bs{\omega}\not=\bs{0}$, and $A_{\ell}(\bs{0})=0$.
The function \mbox{$A_{\ell}:\R^d\ra [0,\infty)$} is continuous on $\R^d$
and attains its maximum on the 
unit ball \mbox{$||\bs{\omega}||\leq 1$}. We denote this maximum by $M_1$,
$M_1=\max_{\{||\bs{\omega}||\,\leq\,1\}}\{A_\ell(\bs{\omega})\}$.
Because $K\in L_1(\R^d)$, we have $0\leq |K_{\F}(\bs{\omega})|\leq M_2<\infty$.
Indeed, $|K_{\F}(\bs{\omega})|\leq \int |e^{\mathrm{i}\lss\bs{\omega},\bs{u}\rss}|\,
|K(\bs{u})|\,d\bs{u}\leq \int |K(\bs{u})|\,d\bs{u}=M_2<\infty$. Therefore, 
$|1-K_{\F}|/||\bs{\omega}||^{\ell}\leq 1+M_2$ for $||\bs{\omega}||>1$.
Composing both cases one gets
$A_{\ell}(\bs{\omega})\leq \max\{M_1,1+M_2\}=A<\infty$
for $\bs{\omega}\in\R^d$.

The inequality (\ref{1stTerm}) is implied by (\ref{Acondition}) as follows:
\begin{eqnarray*}
\mathrm{sup}_{\bs{\omega}\in\R^d\backslash\{\bs{0}\}}\;
\frac{|1-K_{\F}(h\bs{\omega})|}{||h\bs{\omega}||^{\ell}}
\!\!\!&\leq&\!\!\! A,\\
|1-K_{\F}(h\bs{\omega})|&\leq& A||h\bs{\omega}||^{\ell},\\
|1-K_{\F}(h\bs{\omega})|^2&\leq& A^2||h\bs{\omega}||^{2\ell},\\
|1-K_{\F}(h\bs{\omega})|^2|\phi(\bs{\omega})|^2
\!\!\!&\leq&\!\!\!
A^2h^{2\ell}||\bs{\omega}||^{2\ell}|\phi(\bs{\omega})|^2,\\
\int |1-K_{\F}(h\bs{\omega})|^2|\phi(\bs{\omega})|^2 d\bs{\omega}
\!\!\!&\leq&\!\!\!
A^2h^{2\ell}\!\!\int ||\bs{\omega}||^{2\ell}|\phi(\bs{\omega})|^2 d\bs{\omega}.
\end{eqnarray*}
This concludes the proof.\hfill$\Box$\\

The other terms in formula (\ref{MISEFourier}) refer to individual
properties of the kernel and density under considerations. We mention
only two straightforward observations.\\

\subsubsection{The second term}
The second term can be directly translated from the frequency to the
``time" domain by the Plancherel's theorem and the scaling property of 
the Fourier transform:
\begin{equation}
\label{2ndMISEterm}
\frac{1}{n}
\int |K_{\F}(h\bs{\omega})|^2\,d\bs{\omega}=
\frac{(2\pi)^d}{nh^{2d}}\int K^2(\bs{x}/h)\,d\bs{x}
=
\frac{(2\pi)^d}{nh^d}\int K^2(\bs{u})\,d\bs{u}.
\end{equation}

\subsubsection{The third term}
The third term is actually the correction term.
For this term we have the following inequality:
\begin{eqnarray*}
\frac{1}{(2\pi)^d}\frac{1}{n}
\int |\phi(\bs{\omega})|^2 |K(h\bs{\omega})|^2\,d\bs{\omega}
\!\!\!&\leq&\!\!\!
\frac{||K_{\F}||^2_{\infty}}{n}
\int f^2(\bs{x})\,d\bs{x},
\end{eqnarray*}
where 
$||K_{\F}||_{\infty}=\sup_{\bs{\omega}}
\{|K_{\F}(\bs{\omega})|\}$.

\subsection{The upper bound on the Fourier MISE formula}
Concerning an upper bound on the Fourier MISE formula
(\ref{MISEFourier}), we actually sum up the results obtained in
the preceding sections. First of all, to obtain the upper bound we can
omit the correction (the third) term in (\ref{MISEFourier}).
The second term is solely determined by the properties of the kernel, which
is expressed by formula (\ref{2ndMISEterm}). Finally, to obtain
a bound on the first term, the properties of the density the data
are sampled from and the properties of the kernel have to be 
matched somehow. To do this we introduce the so-called Sobolev
class of densities.

\begin{definition}
\label{defSob}
Let $\beta\geq 1$ be an integer and $L>0$.
The Sobolev class of densities $\mathcal{P}_{S({\beta,L})}$
consists of all probability density functions $f:\R^d\ra\R$ satisfying
\begin{equation}
\label{SobolevProp}
\int ||\bs{\omega}||^{2\beta}|\phi(\bs{\omega})|^2\,d\bs{\omega}
\leq (2\pi)^d L^2,
\end{equation}
where $\phi(\bs{\omega})=\F[f](\bs{\omega})$ and $||\cdot||$ is 
the Euclidean norm.
\end{definition}

The condition (\ref{SobolevProp}) is related to the boundedness of
partial derivatives of densities in the Sobolev class; e.g., it can be shown 
that if $\int (\partial{f}/\partial{x_j})^2\,d\bs{x}\leq L_j<\infty$ for all
$j=1,\dots,d$, then (\ref{SobolevProp}) holds for 
$\beta=1$ and $L=||(L_1,\dots,L_d)||$.
Furthermore, if $f\in\mathcal{P}_S(\beta,L)$, 
for some $\beta\in\N$ and $L>0$, then $f\in L_2(\R^d)$.

Now, the announced matching is provided by the fitting
the order of the kernel to the Sobolev character of the 
estimated density. The next theorem, which is the variant of 
Theorem~1.5 in \cite{Tsybakov2009}, provides the final result.

\begin{theorem}
Let $n\in\N$ be the number of i.i.d. samples from a~distribution with 
the density $f:\R^d\ra[0,\infty)$ which is $\beta$-Sobolev for some $\beta\in\N$
and $L>0$, i.e., $f\in\mathcal{P}_S(\beta,L)$. Let $K$ be a kernel of order $\beta$.
Assume that inequality $\mathrm{(\ref{Acondition})}$ holds for some
constant $A>0$. Fix $\alpha>0$ and set $h=\alpha n^{-\frac{1}{2\beta+d}}$.
Then for any $n\geq 1$ the kernel density estimate $\hat{f}_n$ satisfies
\begin{equation}
\sup_{f\in\mathcal{P}_S(\beta,L)}
\E \int (\hat{f}_n(\bs{x})-f(\bs{x}))^2\,d\bs{x}
\leq C\!\cdot\! n^{-\frac{2\beta}{2\beta+d}},
\end{equation}
where $C>0$ is a constant depending only on $\alpha,\beta,d,A,L$ and 
on the kernel $K$.
\end{theorem}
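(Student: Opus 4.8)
The plan is to bound the exact Fourier expression (\ref{MISEFourier}) for the MISE term by term, feeding in the three ingredients already at hand: the order-$\beta$ estimate (\ref{1stTerm}), the Sobolev bound (\ref{SobolevProp}), and the Plancherel translation (\ref{2ndMISEterm}). Since $\hat{f}_n$ is built from an i.i.d. sample, the quantity $\E\int(\hat{f}_n(\bs{x})-f(\bs{x}))^2\,d\bs{x}$ on the left-hand side is exactly $\MISE(\hat{f}_n)$, so (\ref{MISEFourier}) applies verbatim. The first move is to discard the third (correction) term of (\ref{MISEFourier}): it is non-negative and enters with a minus sign, so omitting it can only enlarge the right-hand side, leaving the first and second terms to control.

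For the first term, the kernel has order $\beta$ and (\ref{Acondition}) holds, so (\ref{1stTerm}) gives
$$\int|1-K_{\F}(h\bs{\omega})|^2|\phi(\bs{\omega})|^2\,d\bs{\omega}\leq A^2h^{2\beta}\int||\bs{\omega}||^{2\beta}|\phi(\bs{\omega})|^2\,d\bs{\omega}.$$
Because $f\in\mathcal{P}_S(\beta,L)$, the Sobolev condition (\ref{SobolevProp}) bounds the surviving integral by $(2\pi)^dL^2$. The essential feature here is that this bound is the same for every member of the class, so the dependence on the particular density disappears and the supremum over $\mathcal{P}_S(\beta,L)$ can be taken with no loss. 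The second term is handled by the Plancherel translation (\ref{2ndMISEterm}), which rewrites it as $(2\pi)^d(nh^d)^{-1}R(K)$ with $R(K)=\int K^2(\bs{u})\,d\bs{u}$, a quantity depending on the kernel alone.

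Collecting the two bounds and dividing by $(2\pi)^d$ yields the uniform estimate
$$\MISE(\hat{f}_n)\leq A^2L^2h^{2\beta}+\frac{R(K)}{nh^d}.$$
The final step is the bias-variance balance realised by the prescribed bandwidth: substituting $h=\alpha n^{-1/(2\beta+d)}$ gives $h^{2\beta}=\alpha^{2\beta}n^{-2\beta/(2\beta+d)}$ and $(nh^d)^{-1}=\alpha^{-d}n^{-2\beta/(2\beta+d)}$, so both terms share the rate $n^{-2\beta/(2\beta+d)}$ and
$$\sup_{f\in\mathcal{P}_S(\beta,L)}\MISE(\hat{f}_n)\leq\bigl(A^2L^2\alpha^{2\beta}+R(K)\alpha^{-d}\bigr)\,n^{-2\beta/(2\beta+d)},$$
which is the claim with $C=A^2L^2\alpha^{2\beta}+R(K)\alpha^{-d}$, a constant depending only on $\alpha,\beta,d,A,L$ and the kernel $K$.

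Computationally there is no real obstacle; every inequality is already prepared in the preceding lemmas and theorems. The only conceptually loaded points are, first, that the Sobolev bound (\ref{SobolevProp}) is uniform in $f$, which is precisely what allows the supremum to pass through the estimate harmlessly, and second, that the exponent $1/(2\beta+d)$ in the bandwidth is exactly the one equalising the decay of $h^{2\beta}$ and $(nh^d)^{-1}$ — this is the source of the optimal rate $n^{-2\beta/(2\beta+d)}$.
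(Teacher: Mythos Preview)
Your proposal is correct and follows essentially the same route as the paper: drop the non-negative correction term in (\ref{MISEFourier}), bound the first term via (\ref{1stTerm}) and the Sobolev condition (\ref{SobolevProp}), translate the second term by (\ref{2ndMISEterm}), and then substitute $h=\alpha n^{-1/(2\beta+d)}$ to equalise the two rates. Your resulting constant $C=A^2L^2\alpha^{2\beta}+R(K)\alpha^{-d}$ coincides with the paper's $(AL)^2\alpha^{2\beta}+\alpha^{-d}\int K^2(\bs{u})\,d\bs{u}$.
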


\textbf{Proof.} By Theorem~\ref{bdTheorem} and from the definition of the
Sobolev class of densities, we have
$$
\int |1-K_{\F}(h\bs{\omega})|^2|\phi(\bs{\omega})|^2\,d\bs{\omega}
\leq
(2\pi)^d A^2 h^{2\beta}L^2.
$$
Plugging this into the Fourier MISE formula (\ref{MISEFourier}) 
and employing 
$$\frac{1}{(2\pi)^{d}\,n}\int |K(h\bs{\omega})|^2 d\bs{\omega}=
\frac{1}{nh^d}\int K^2(\bs{u})\,d\bs{u},$$
we get for $h=\alpha n^{-\frac{1}{2\beta+d}}$ the following:
$$
h^{2\beta}=\alpha^{2\beta} n^{-\frac{2\beta}{2\beta+d}},\;
(nh^d)^{-1}=n^{-1}\alpha^{-d}n^{\frac{d}{2\beta+d}}=
\alpha^{-d}n^{-\frac{2\beta}{2\beta+d}}
$$
and
\begin{eqnarray*}
\MISE(\hat{f}_n)\!\!\!&\leq&\!\!\!
\frac{1}{(2\pi)^d}\left[
\int|1-K_{\F}(h\bs{\omega})|^2|\phi(\bs{\omega})|^2\,d\bs{\omega}\;+\;
\frac{1}{n}\int |K_{\F}(h\bs{\omega})|^2\,d\bs{\omega}\right]\\
&\leq&\!\!\!
A^2 h^{2\beta}L^2 +\frac{1}{nh^d}\int K^2(\bs{u})\,d\bs{u},\\
&\leq&\!\!\!
(AL)^2\alpha^{2\beta} n^{-\frac{2\beta}{2\beta+d}}
+\alpha^{-d}n^{-\frac{2\beta}{2\beta+d}}\!\!\int\!\! K^2(\bs{u})\,d\bs{u},\\
&\leq&\!\!\!
\left[(AL)^2\alpha^{2\beta}+\alpha^{-d}\int K^2(\bs{u})\,d\bs{u}\right]
\cdot n^{-\frac{2\beta}{2\beta+d}},\\
&\leq&\!\!\!
C(\alpha,\beta,d,A,L,K)\cdot n^{-\frac{2\beta}{2\beta+d}}.
\end{eqnarray*}
This concludes the proof.\hfill$\Box$\\

The theorem provides the upper bound on the MISE of the multivariate kernel
density estimate (\ref{fnDdim}), if the order of the employed kernel fits to 
the Sobolev character of the density the employed data are sampled from.

\section{Particle filter and kernel methods}
\label{SecIV}
This section presents our own research in the area of the combination of 
the particle filter and kernel methods. The main question here is if the kernel
density estimates constructed on the basis of empirical measures 
approximate the related filtering densities reasonably well. 
The main obstacle to a direct application of the presented kernel estimation
methodology is the fact that the generated empirical measures
are not based on i.i.d. samples due to the resampling step of the filter.

Our results are twofold. First, we show that, despite the mentioned obstacle,
the standard kernel density estimates still converge to the related 
filtering densities. The proof of the assertion is based on Fourier analysis of the
convergence result for the particle filter.

The second result concerns a deeper analysis of the obtained convergence formula.
The convergence result is based on the assumption on the Sobolev character
of the filtering densities. We present a sufficient condition for the persistency of 
this Sobolev character over time.

We extend both results to the partial derivatives of the kernel density estimates
and to the partial derivatives of the filtering densities, respectively.

\subsection{Convergence of kernel density estimates}
To start, let us remind that the particle filter
generates at each time step \mbox{$t=1,\dots ,T$}, $T\in\N$ 
the empirical measure 
$\pi_t^n(d\bs{x}_t)=\frac{1}{n}\sum_{i=1}^n \delta_{\bs{x}^i_t}(d\bs{x}_t)$.
This measure approximates the related filtering distribution $\pi_t$
that is assumed to have the density $p_t(\bs{x}_t)=p(\bs{x}_t|\bs{y}_{1:t})$
with respect to the $d$-dimensional Lebesgue measure, i.e., 
$\pi_t(d\bs{x}_t)=p_t(\bs{x}_t)\,d\bs{x}_t$.

A carrier of the empirical measure $\pi_t^n$ is the set of particles 
$\{\bs{x}^i_t\}_{i=1}^n$, \mbox{$n\in\N$}. This set does not constitute an i.i.d. sample
from $\pi_t$. If one constructs the standard kernel density estimate on the basis
of $\{\bs{x}^i_t\}_{i=1}^n$ and the selected kernel~$K$, i.e., the estimate
\begin{equation}
\label{ptn}
\hat{p}_t^n(\bs{x}_t)=
\frac{1}{nh^d}
\sum_{i=1}^n
K\!\!\left(\frac{\bs{x}_t-\bs{x}^i_t}{h}\right),
\end{equation}
then we ask if $\hat{p}^n_t$ converges in the MISE to the filtering density $p_t$,
provided that the number of particles goes to infinity.

\begin{theorem}
\label{MainTh}
In the filtering problem, 
let $\{\pi_t\}_{t=0}^T$, $\{p_t\}_{t=0}^T$, $T\in\N$ be the sequences of
filtering distributions and corresponding filtering densities. 
Let $p_t$, $t\in\{0,1,\dots,T\}$ be \mbox{$\beta$-Sobolev}
for some $\beta\in\N$ and $L_t>0$, i.e., $p_t\in\mathcal{P}_{S(\beta,L_t)}$.
Let $\{\pi_{t}^n\}_{t=1}^T$, $\{\hat{p}^{n}_t\}_{t=1}^T$, $n\in\N$ be 
the sequences of the empirical measures generated by the particle filter 
and related kernel density estimates (\ref{ptn}) with the bandwidth varying as 
$h(n)=\alpha n^{-\frac{1}{2\beta+d}}$ for some $\alpha>0$.
Let the kernel~$K$ employed in the estimates be of order~$\beta$.
Then we have the following evolution of the MISE of $\hat{p}^{n}_t$ 
over time $t\in\{1,\dots,T\}\mathrm{:}$
\begin{equation}
\label{MThEq1}
\E\left[\int (\hat{p}_t^{n}(\bs{x}_t)-p_t(\bs{x}_t))^2\,d\bs{x}_t\right]
\leq C^2_t\cdot n^{-\frac{2\beta}{2\beta+d}},
\end{equation}
where
\begin{equation}
\label{MThEq2}
C_t=AL_t\alpha^{\beta}+c_t \alpha^{-d/2} ||K||.
\end{equation}\\
In (\ref{MThEq2}), $A$ is the constant of Theorem~\ref{bdTheorem},
$c_t$, $t\in\{1,\dots,T\}$ are the constants of Theorem~\ref{ctTheorem}
and $||K||$ is the $L_2$ norm of the kernel~$K$.
\end{theorem}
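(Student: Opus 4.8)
The plan is to carry the entire argument into the frequency domain, re-using the derivation of the Fourier MISE formula (\ref{MISEFourier}) but feeding it the particle-filter estimate of Theorem~\ref{ctTheorem} in place of the i.i.d. input of Lemma~\ref{LmTheorem}. Write $\phi_t(\bs{\omega})=\F[p_t](\bs{\omega})$ for the characteristic function of the filtering distribution $\pi_t$ and let $\phi_n(\bs{\omega})$ denote the (random) characteristic function of the empirical measure $\pi_t^n$, so that (\ref{Fconv}) gives $\F[\hat{p}_t^n](\bs{\omega})=\phi_n(\bs{\omega})K_{\F}(h\bs{\omega})$. Since the Sobolev hypothesis forces $p_t\in L_2(\R^{d})$ and $K\in L_1(\R^{d})\cap L_2(\R^{d})$, the Plancherel isometry yields
$$
\E\!\int(\hat{p}_t^{n}(\bs{x}_t)-p_t(\bs{x}_t))^2\,d\bs{x}_t
=\frac{1}{(2\pi)^d}\,\E\!\int|\phi_n(\bs{\omega})K_{\F}(h\bs{\omega})-\phi_t(\bs{\omega})|^2\,d\bs{\omega}.
$$

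Next I would insert the algebraic splitting already used to obtain (\ref{MISEFourier}),
$$
\phi_n K_{\F}(h\bs{\omega})-\phi_t=(\phi_n-\phi_t)K_{\F}(h\bs{\omega})-(1-K_{\F}(h\bs{\omega}))\phi_t,
$$
and regard $g\mapsto(\E\int|g(\bs{\omega})|^2\,d\bs{\omega})^{1/2}$ as a norm, so that the Minkowski inequality separates the right-hand side into a deterministic bias term and a stochastic term. The bias term is handled exactly as in the i.i.d. case: by Theorem~\ref{bdTheorem}, the bound (\ref{1stTerm}), and the Sobolev property (\ref{SobolevProp}),
$$
\int|1-K_{\F}(h\bs{\omega})|^2|\phi_t(\bs{\omega})|^2\,d\bs{\omega}
\leq A^2h^{2\beta}\!\int||\bs{\omega}||^{2\beta}|\phi_t(\bs{\omega})|^2\,d\bs{\omega}
\leq (2\pi)^dA^2h^{2\beta}L_t^2 .
$$

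The heart of the proof, and the step that circumvents the absence of independence, is the stochastic term. After Fubini's theorem it reads $\int\E\bigl[|\phi_n(\bs{\omega})-\phi_t(\bs{\omega})|^2\bigr]\,|K_{\F}(h\bs{\omega})|^2\,d\bs{\omega}$. The key observation is that for each fixed $\bs{\omega}$ the test function $f_{\bs{\omega}}(\bs{x})=e^{\mathrm{i}\lss\bs{\omega},\bs{x}\rss}$ belongs to $\mathcal{C}^{\C}_b(\R^{d})$ with $||f_{\bs{\omega}}||_{\infty}=1$, and that $\phi_n(\bs{\omega})=\pi_t^n f_{\bs{\omega}}$ while $\phi_t(\bs{\omega})=\pi_t f_{\bs{\omega}}$. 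Hence the complex form of Theorem~\ref{ctTheorem} gives
$$
\E\bigl[|\phi_n(\bs{\omega})-\phi_t(\bs{\omega})|^2\bigr]
=\E\bigl[|\pi_t^n f_{\bs{\omega}}-\pi_t f_{\bs{\omega}}|^2\bigr]
\leq\frac{c_t^2\,||f_{\bs{\omega}}||_{\infty}^2}{n}=\frac{c_t^2}{n},
$$
a bound that is uniform in $\bs{\omega}$; it is exactly this uniformity that lets me pull $c_t^2/n$ out of the $\bs{\omega}$-integral. Combining with the identity (\ref{2ndMISEterm}), the stochastic term is at most $c_t^2(2\pi)^d||K||^2/(nh^d)$.

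Finally I would recombine the two contributions, divide by $(2\pi)^{d/2}$, and take the square root, obtaining
$$
\Bigl(\E\!\int(\hat{p}_t^{n}(\bs{x}_t)-p_t(\bs{x}_t))^2\,d\bs{x}_t\Bigr)^{1/2}
\leq A L_t h^{\beta}+\frac{c_t\,||K||}{\sqrt{nh^d}} .
$$
Substituting $h=\alpha n^{-1/(2\beta+d)}$ makes $h^{\beta}=\alpha^{\beta}n^{-\beta/(2\beta+d)}$ and $(nh^d)^{-1/2}=\alpha^{-d/2}n^{-\beta/(2\beta+d)}$, so both terms share the factor $n^{-\beta/(2\beta+d)}$; reading off the bracketed constant gives $C_t=AL_t\alpha^{\beta}+c_t\alpha^{-d/2}||K||$, and squaring produces the claimed bound $C_t^2\,n^{-2\beta/(2\beta+d)}$. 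The only genuinely new ingredient relative to the i.i.d. theorem is this substitution of Theorem~\ref{ctTheorem} applied to the complex exponential in place of Lemma~\ref{LmTheorem}; everything else repackages the earlier estimates. I expect the main obstacle to be purely technical: justifying the Fubini interchange and the Minkowski step in the mixed expectation--integral norm, and confirming that the $L_2$ bound of Theorem~\ref{ctTheorem} is genuinely uniform over $\bs{\omega}$, which holds precisely because $||f_{\bs{\omega}}||_{\infty}=1$ does not depend on $\bs{\omega}$.
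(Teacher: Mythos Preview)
Your proposal is correct and follows essentially the same route as the paper. The paper packages the decomposition slightly differently, introducing the smoothed density $p_t^{*}=p_t*(h^{-d}K(\cdot/h))$ and bounding $\hat p_t^{n}-p_t^{*}$ and $p_t^{*}-p_t$ separately before combining them via the triangle inequality in the product-norm $||\cdot||_{\lambda^d\otimes P}$; since $\F[p_t^{*}]=\phi_t K_{\F}(h\bs{\omega})$, this is exactly your Fourier-side splitting $\phi_n K_{\F}-\phi_t=(\phi_n-\phi_t)K_{\F}-(1-K_{\F})\phi_t$, and the key step in both arguments is the application of Theorem~\ref{ctTheorem} to $f_{\bs{\omega}}=e^{\mathrm{i}\lss\bs{\omega},\cdot\rss}$ with $||f_{\bs{\omega}}||_{\infty}=1$.
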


\noindent
\textbf{Proof.}
The proof is based on the employment of the Fourier transform.
We start by the assertion of Theorem~\ref{ctTheorem}:
\begin{equation}
\label{ctform}
\E[|\pi_t^nf-\pi_tf|^2]\leq\frac{c^2_t||f||^2_{\infty}}{n},\\
\end{equation}
where we replace a general function $f\in\mathcal{C}^{\C}_b(\mathbb{R}^{d_x})$
by the complex exponential specified on~$\R^d$. Note that $d_x=d$.

Let $f(\bs{x}_t)=e^{\mathrm{i}\lss\bs{\omega},\bs{x}_t\rss}$, then $||f||_{\infty}=1$.
Denoting $\psi^n_t=\F[\pi^n_t]$ and $\psi_t=\F[\pi_t]$
we have from the above
\begin{eqnarray}
\E[|\psi_t^n(\bs{\omega})-\psi_t(\bs{\omega})|^2]
&\leq&
\frac{c^2_t}{n},\nonumber\\
|K_{\F}(h\bs{\omega})|^2\cdot\E[|\psi_t^n(\bs{\omega})-\psi_t(\bs{\omega})|^2]
&\leq&
|K_{\F}(h\bs{\omega})|^2\cdot\frac{c^2_t}{n},\nonumber\\
\E\;[|\psi_t^n(\bs{\omega})K_{\F}(h\bs{\omega})
-\psi_t(\bs{\omega})K_{\F}(h\bs{\omega})|^2]
&\leq&
|K_{\F}(h\bs{\omega})|^2\cdot\frac{c^2_t}{n},\nonumber\\
\E\left[\int |\psi_t^n(\bs{\omega})K_{\F}(h\bs{\omega})
-\psi_t(\bs{\omega})K_{\F}(h\bs{\omega})|^2\,d\bs{\bs{\omega}}\right]
&\leq&
\frac{c^2_t}{n}\!\!\int |K_{\F}(h\bs{\omega})|^2\,d\bs{\omega},\nonumber\\
\E\left[\int (\hat{p}_t^{n}(\bs{x}_t)-p^*_t(\bs{x}_t))^2\,d\bs{x}_t\right]
&\leq&
\frac{c^2_t}{nh^{d}}\!\!\int\!\!K^2(\bs{u})\,d\bs{u}.
\label{E1fce}
\end{eqnarray}
For any density $p_t$ and its convolution \mbox{$p_t^*=p_t*(h^{-d}K(\cdot/h))$},
\begin{eqnarray}
\int (p_t^*(\bs{x}_t)-p_t(\bs{x}_t))^2\,d\bs{x}_t
&=&
\frac{1}{(2\pi)^d}
\int |\psi_t(\bs{\omega})K_{\F}(h\bs{\omega})-\psi_t(\bs{\omega})|^2\,d\bs{\omega}\nonumber\\
&=&
\frac{1}{(2\pi)^d}
\int |1-K_{\F}(h\bs{\omega})|^2|\psi_t(\bs{\omega})|^2\,d\bs{\omega}.
\label{E2fce}
\end{eqnarray}
We assume that the employed kernel has order $\beta$ and 
$p_t\in\mathcal{P}_{S({\beta,L_t})}$. 
Therefore the right-hand side of (\ref{E2fce}) is bounded
according to Theorem~\ref{bdTheorem}. Further, there is nothing random
here and we can apply the expectation with no effect to obtain
\begin{equation}
\label{E3fce}
\E\left[\int (p_t^*(\bs{x}_t)-p_t(\bs{x}_t))^2\,d\bs{x}_t\right]
\leq
A^2 h^{2\beta}L_t^2.
\end{equation}

To proceed, let us consider the product measure $\lambda^d\otimes P$
with the corresponding norm $||\cdot||_{\lambda^d\otimes P}=
[\,\int\!\int |\cdot|^2 d(\lambda^d\otimes P)\,]^{1/2}$.
We have  
\begin{equation}
||\hat{p}_t^{n}(\bs{x}_t)-p_t(\bs{x}_t)||_{\lambda^d\otimes P}
\leq
A h^{\beta}L_t+
\frac{c_t}{(nh^d)^{1/2}}||K||
\label{Etriangle}
\end{equation}
by (\ref{E1fce}), (\ref{E3fce}) and the triangle inequality for
$||\cdot||_{\lambda^d\otimes P}$.

Let the bandwidth $h$ develop with $n$ as $h(n)=\alpha n^{-\frac{1}{2\beta+d}}$
for some $\alpha>0$. We~have
$h^\beta=\alpha^\beta n^{-\frac{\beta}{2\beta+d}}$. Further,
$
(nh^d)^{-1}=n^{-1}\alpha^{-d}n^{\frac{d}{2\beta+d}}=\alpha^{-d}n^{-\frac{2\beta}{2\beta+d}}
$
and therefore
$
(nh^d)^{-1/2}=\alpha^{-d/2}n^{-\frac{\beta}{2\beta+d}}.
$
Inequality (\ref{Etriangle}) then reads as
\begin{eqnarray*}
||\hat{p}_t^{n}(\bs{x}_t)-p_t(\bs{x}_t)||_{\lambda^d\otimes P}&\leq&
AL_t\alpha^{\beta} n^{-\frac{\beta}{2\beta+d}}
+c_t \alpha^{-d/2}n^{-\frac{\beta}{2\beta+d}} ||K||\nonumber\\
&\leq&
(AL_t\alpha^{\beta}+c_t \alpha^{-d/2} ||K||)\cdot
n^{-\frac{\beta}{2\beta+d}}.
\end{eqnarray*}
Squaring to obtain the MISE we get
$$
\E\int(\hat{p}_t^{n}(\bs{x}_t)-p_t(\bs{x}_t))^2\, d{\bs{x}_t}
\leq
(AL_t\alpha^{\beta}+c_t \alpha^{-d/2} ||K||)^2\cdot
n^{-\frac{2\beta}{2\beta+d}}
$$
or in the more compact form
$$
\E\int (\hat{p}_t^{n}(\bs{x}_t)-p_t(\bs{x}_t))^2\,d{\bs{x}_t}
\leq C_t^2\cdot n^{-\frac{2\beta}{2\beta+d}}
$$
for
$C_t=AL_t\alpha^{\beta}+c_t \alpha^{-d/2}||K||$.
\hfill$\Box$\\

Let us discuss the theorem.\\

1) First of all, the theorem is proved without any assumption on the 
i.i.d.~character of samples (particles) constituting the empirical measures 
$\pi^n_{t}$. This is the crucial observation, as we know that due to the 
resampling step  the generated particles are not i.i.d.

2) Convergence. For $t\in\N$ fixed, we immediately see from
(\ref{MThEq1}) that the~MISE of kernel estimates goes to zero as the number of
particles increases and the bandwidth decreases accordingly,
i.e., $\lim_{n\ra\infty} \E\int (\hat{p}_t^{n}(\bs{x}_t)-p_t(\bs{x}_t))^2\,d{\bs{x}_t}=0$.

3) Consistency. The theorem proposes that the bandwidth develops with the
number of particles $n$ as $h(n)=\alpha n^{-\frac{1}{2\beta+d}}$ for some 
$\alpha>0,\beta,d\in\N$. Obviously, $\lim_{n\ra\infty} h(n)=0$, and
$\lim_{n\ra\infty} nh(n)=\lim_{n\ra\infty} 
\alpha n^{\frac{2\beta+d-1}{2\beta+d}}=\infty$.

4) The dimension matters. We have $n^{-\frac{2\beta}{2\beta+d_1}}<
n^{-\frac{2\beta}{2\beta+d_2}}$ for $d_1<d_2$, and therefore we must 
increase the number of particles in order to assure a given accuracy as
the dimension increases.

5) The order helps. Contrary to the previous result, we have
$n^{-\frac{2\beta_1}{2\beta_1+d}}>n^{-\frac{2\beta_2}{2\beta_2+d}}$
for $\beta_1<\beta_2$.
Hence the greater is the order of the employed kernel, the tighter is the
bound on the related MISE, in fact, it tends towards $n^{-1}$.
There are techniques available for constructing kernels of arbitrary
orders \cite{Tsybakov2009}, however, the order of the 
employed kernel is primarily driven by the Sobolev character 
of the filtering densities.

6) The theorem assumes that the filtering densities $p_t$ are $\beta$-Sobolev
for some~$L_t>0$, $t\in\{0,\dots,T\}$, $T\in\N$ and $\beta\in\N$ being 
constant over time.
It is the question when this assumption holds. In Section~\ref{SecV}, we show 
that the Sobolev character of the filtering densities is retained over time,
if a certain condition holds on the transition kernels of the signal process.

7) For $\alpha=1$, the specification of $C_t$ simplifies to $C_t=AL_t+c_t||K||$
and $C_t$ consists of four terms. Two of them, $A$ and 
$||K||=[\int K^2(\bs{u})\,d\bs{u}]^{1/2}$ are the constants determined by
the employed kernel. The other two, $L_t$ and $c_t$, develop with time.
The $L_t$ term is discussed in Section~\ref{SecV}.

8) The $c_t$ constant (with respect to the number of particles)
comes from Theorem~\ref{ctTheorem}. It can be shown that its
values can be computed recursively as 
$c_t=c_{t-1}\left(1+\frac{4||g^v_t||_{\infty}}{\ov{\pi}_tg_t}\right), c_0=1$.
The integral $\ov{\pi}_tg_t$ depends on the values of the observation
process and $c_t$ generally develops exponentially with time, 
see the remark in concluding Section~\ref{SecVII}.

\subsection{Extension to partial derivatives}
The result of Theorem~\ref{MainTh} can be straightforwardly extended
to the convergence of partial derivatives of kernel density estimates
to partial derivatives of the filtering densities. The proof of the assertion
substantially overlaps with the proof of Theorem~\ref{MainTh}, however,
we present it here in full detail for the convenience of the reader.

In what follows we denote by $p_{t,i_1,\dots,i_d}^{(m)}=\partial^m p_t/
\partial x^{i_1}_1\dots\partial x^{i_d}_d$ the $m$-th partial
derivative of the filtering density $p_t$, $t\in\{0,\dots,T\}$, $T\in\N$
for \mbox{$i_1,\dots,i_d\in\N_0$}, such that $m=i_1+\dots+i_d$,
and $m\in\N_0$. Similarly, we will use $\hat{p}^{n,(m)}_{t,i_1,\dots,i_d}=
\partial^m \hat{p}^n_t/\partial x^{i_1}_1\dots\partial x^{i_d}_d$
for the partial derivative of kernel estimate (\ref{ptn}), 
$p^{*,(m)}_{t,i_1,\dots,i_d}$ for the partial derivative of the convolution
$p_t^*=p_{t}*(h^{-d}K(\cdot/h))$ and
$K^{(m)}_{i_1,\dots,i_d}=\partial^m K/
\partial u^{i_1}_1\dots\partial u^{i_d}_d$ for the partial derivative
of the kernel employed in the estimates.
Clearly, the zero value of $i_j$, $j=1,\dots, d$, 
corresponds to the situation when no differentiation is applied 
in the respective dimension.

\begin{theorem}
\label{PdMainTh}
In the filtering problem,
let $\{\pi_t\}_{t=0}^T$, $\{p_{t,i_1,\dots,i_d}^{(m)}\}_{t=0}^T$, 
$T\in\N$, \mbox{$m\in\N_0$} be the sequences of filtering distributions 
and $m$-th partial derivatives of corresponding filtering densities
for some $i_1,\dots,i_d\in\N_0$, $m=i_1+\dots+i_d$.
Let $p_{t,i_1,\dots,i_d}^{(m)}$, $t\in\{0,\dots,T\}$ satisfy
(\ref{SobolevProp}) for some $\beta\in\N$ and $L_{t,(m)}>0$.
Let $\{\pi_{t}^n\}_{t=1}^T$, $\{\hat{p}^{(m)}_{t,i_1,\dots,i_d}\}_{t=1}^T$, 
$n\in\N$ be the sequences of the empirical measures generated by 
the particle filter and $m$-th partial derivatives of the related kernel density 
estimates (\ref{ptn}) with the bandwidth varying as 
$h(n)=\alpha n^{-\frac{1}{2\beta+d+2m}}$ for some $\alpha>0$.
Let the kernel~$K$ employed in the estimates be of order~$\beta$.
Then we have the following evolution of the MISE of the $m$-th partial
derivatives of kernel estimates
$\hat{p}^{n,(m)}_{t,i_1,\dots,i_d}$ over time $t\in\{1,\dots,T\}\mathrm{:}$
\begin{equation}
\label{PdMThEq1}
\E\left[\int (\hat{p}^{n,(m)}_{t,i_1,\dots,i_d}(\bs{x}_t)-
p^{(m)}_{t,i_1,\dots,i_d}(\bs{x}_t))^2\,d\bs{x}_t\right]
\leq C^2_{t,(m)}\cdot n^{-\frac{2\beta}{2\beta+d+2m}},
\end{equation}
where
\begin{equation}
\label{PdMThEq2}
C_{t,(m)}=AL_{t,(m)}\alpha^{\beta}+c_t \alpha^{-(d/2+m)}||K^{(m)}_{i_1,\dots,i_d}||.
\end{equation}\\
In (\ref{PdMThEq2}), $A$ is the constant of Theorem~\ref{bdTheorem},
$c_t$, $t\in\{1,\dots,T\}$ are the constants of Theorem~\ref{ctTheorem}
and $||K^{(m)}_{i_1,\dots,i_d}||$ is the $L_2$ norm of the corresponding
$m$-th partial derivative of kernel~$K$.
\end{theorem}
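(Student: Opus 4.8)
The plan is to mirror the proof of Theorem~\ref{MainTh} line by line, inserting at each stage the factor produced by the derivative property of the Fourier transform. The single structural novelty is that taking partial derivatives of total order $m=i_1+\dots+i_d$ in the ``time'' domain corresponds, in the frequency domain, to multiplication by the monomial $(-\mathrm{i})^m\omega_1^{i_1}\cdots\omega_d^{i_d}$, whose modulus squared is $\omega_1^{2i_1}\cdots\omega_d^{2i_d}$. Applying this observation to all three objects---the estimate $\hat{p}^{n,(m)}_{t,i_1,\dots,i_d}$, the smoothed density $p^{*,(m)}_{t,i_1,\dots,i_d}$, and the target $p^{(m)}_{t,i_1,\dots,i_d}$---reduces the whole argument to the scalar case already treated.

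First I would reproduce the ``variance'' chain ending at (\ref{E1fce}). Starting from $\E[|\psi_t^n(\bs{\omega})-\psi_t(\bs{\omega})|^2]\leq c_t^2/n$ (Theorem~\ref{ctTheorem} at the complex exponential), formula (\ref{Fconv}) and the derivative property give $\F[\hat{p}^{n,(m)}_{t,i_1,\dots,i_d}](\bs{\omega})=(-\mathrm{i})^m\omega_1^{i_1}\cdots\omega_d^{i_d}\,\psi_t^n(\bs{\omega})K_{\F}(h\bs{\omega})$, and the same with $\psi_t$ for $p^{*,(m)}_{t,i_1,\dots,i_d}$. Multiplying the bound by $\omega_1^{2i_1}\cdots\omega_d^{2i_d}|K_{\F}(h\bs{\omega})|^2$, integrating over $\bs{\omega}$, and applying Plancherel yields
$$
\E\!\left[\int (\hat{p}^{n,(m)}_{t,i_1,\dots,i_d}(\bs{x}_t)-p^{*,(m)}_{t,i_1,\dots,i_d}(\bs{x}_t))^2\,d\bs{x}_t\right]
\leq \frac{c_t^2}{n}\,\frac{1}{(2\pi)^d}\int\omega_1^{2i_1}\cdots\omega_d^{2i_d}\,|K_{\F}(h\bs{\omega})|^2\,d\bs{\omega}.
$$
The remaining factor is evaluated by recognizing $(-\mathrm{i})^m\omega_1^{i_1}\cdots\omega_d^{i_d}K_{\F}(h\bs{\omega})$ as the Fourier transform of $h^{-d-m}K^{(m)}_{i_1,\dots,i_d}(\cdot/h)$; Plancherel and the substitution $\bs{u}=\bs{x}_t/h$ reduce it to $h^{-d-2m}||K^{(m)}_{i_1,\dots,i_d}||^2$, so this term contributes $c_t\,(nh^{d+2m})^{-1/2}||K^{(m)}_{i_1,\dots,i_d}||$ to the $\lambda^d\otimes P$ norm.

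For the ``bias'' part I would imitate (\ref{E2fce})--(\ref{E3fce}). The difference $p^{*,(m)}_{t,i_1,\dots,i_d}-p^{(m)}_{t,i_1,\dots,i_d}$ has Fourier transform $(-\mathrm{i})^m\omega_1^{i_1}\cdots\omega_d^{i_d}\psi_t(\bs{\omega})(K_{\F}(h\bs{\omega})-1)$, so by Plancherel its squared $L_2$ norm equals $(2\pi)^{-d}\int|1-K_{\F}(h\bs{\omega})|^2\,\omega_1^{2i_1}\cdots\omega_d^{2i_d}|\psi_t(\bs{\omega})|^2\,d\bs{\omega}$. The crucial point is that $\omega_1^{2i_1}\cdots\omega_d^{2i_d}|\psi_t(\bs{\omega})|^2=|\F[p^{(m)}_{t,i_1,\dots,i_d}](\bs{\omega})|^2$; hence after bounding $|1-K_{\F}(h\bs{\omega})|\leq A||h\bs{\omega}||^{\beta}$ via (\ref{Acondition}), the leftover integral is exactly the Sobolev integral (\ref{SobolevProp}) written for the derivative $p^{(m)}_{t,i_1,\dots,i_d}$, bounded by $(2\pi)^dL_{t,(m)}^2$. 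This gives the bias bound $A h^{\beta}L_{t,(m)}$, on which the expectation acts trivially since nothing is random.

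Finally I would combine the two contributions by the triangle inequality for $||\cdot||_{\lambda^d\otimes P}$, exactly as in (\ref{Etriangle}):
$$
||\hat{p}^{n,(m)}_{t,i_1,\dots,i_d}-p^{(m)}_{t,i_1,\dots,i_d}||_{\lambda^d\otimes P}
\leq A h^{\beta}L_{t,(m)}+\frac{c_t}{(nh^{d+2m})^{1/2}}||K^{(m)}_{i_1,\dots,i_d}||,
$$
and insert $h(n)=\alpha n^{-\frac{1}{2\beta+d+2m}}$. Both terms then carry the common factor $n^{-\frac{\beta}{2\beta+d+2m}}$: indeed $h^{\beta}=\alpha^{\beta}n^{-\frac{\beta}{2\beta+d+2m}}$, while $(nh^{d+2m})^{-1/2}=\alpha^{-(d/2+m)}n^{-\frac{\beta}{2\beta+d+2m}}$ because the exponent simplifies as $-\tfrac12+\tfrac{d+2m}{2(2\beta+d+2m)}=-\tfrac{\beta}{2\beta+d+2m}$. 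Squaring produces (\ref{PdMThEq1}) with $C_{t,(m)}$ as in (\ref{PdMThEq2}). There is no genuine obstacle beyond bookkeeping: the two points to get right are that the monomial must pair with $|\psi_t|^2$ to reconstruct the Sobolev norm of $p^{(m)}_{t,i_1,\dots,i_d}$ (and not of $p_t$), and that the modified rate $2\beta+d+2m$ is precisely what balances the extra $h^{-2m}$ generated by differentiation.
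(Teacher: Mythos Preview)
Your proposal is correct and follows essentially the same route as the paper's own proof: both arguments mimic the proof of Theorem~\ref{MainTh} step by step, inserting the Fourier multiplier $(-\mathrm{i})^m\omega_1^{i_1}\cdots\omega_d^{i_d}$ throughout, identifying the resulting ``variance'' integral with $h^{-d-2m}||K^{(m)}_{i_1,\dots,i_d}||^2$ via Plancherel and scaling, bounding the ``bias'' via Theorem~\ref{bdTheorem} applied to $|\F[p^{(m)}_{t,i_1,\dots,i_d}]|^2$ and the Sobolev hypothesis on the derivative, and then combining by the triangle inequality in $||\cdot||_{\lambda^d\otimes P}$ before substituting $h(n)=\alpha n^{-1/(2\beta+d+2m)}$. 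Your identification of the two bookkeeping points (pairing the monomial with $|\psi_t|^2$ to recover the correct Sobolev integral, and balancing the extra $h^{-2m}$ by the modified rate) is exactly what drives the paper's computation as well.
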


\noindent
\textbf{Proof.}
To start remind that for any function $f:\R^d\ra \R$ and its $m$-th partial derivative 
$f_{i_1,\dots,i_d}^{(m)}:\R^d\ra\R$,  both assumed in $L_1(\R^d)$, 
one has for their Fourier transforms  $\F[p](\bs{\omega})$ and 
$\F[p_{i_1,\dots,i_d}^{(m)}](\bs{\omega})$, respectively, the equality
\begin{equation}
\F[f^{(m)}_{i_1,\dots,i_d}](\bs{\omega})=
(\mathrm{-i})^m
(\omega^{i_1}_{i_1}\cdot\dots\cdot\omega^{i_d}_{i_d})
\F[f](\bs{\omega}).
\label{Fder}
\end{equation}

Now, in order to prove the theorem, we just mimic the proof 
of Theorem~\ref{MainTh}. Employing the complex exponential
in (\ref{ctform}) and the equality (\ref{Fder}) we have
\begin{eqnarray}
\hspace{-1cm}&&\hspace{1.4cm}
\E[|\psi_t^n(\bs{\omega})-\psi_t(\bs{\omega})|^2]\;\leq\;\frac{c^2_t}{n},\nonumber\\
\hspace{-1cm}&&|(\mathrm{-i})^m(\omega^{i_1}_{1}\dots\cdot\omega^{i_d}_{d})
K_{\F}(h\bs{\omega})|^2\cdot\E[|\psi_t^n(\bs{\omega})-\psi_t(\bs{\omega})|^2]\nonumber\\
&&\hspace{4.55cm}
\;\leq\;|(\mathrm{-i})^m(\omega^{i_1}_{1}\dots\cdot\omega^{i_d}_d)
K_{\F}(h\bs{\omega})|^2\cdot\frac{c^2_t}{n},\nonumber\\
\hspace{-1cm}&&\E\;[|(\mathrm{-i})^m(\omega^{i_1}_{1}\dots\cdot\omega^{i_d}_{d})
(\psi_t^n(\bs{\omega})K_{\F}(h\bs{\omega}))
-(\mathrm{-i})^m(\omega^{i_1}_{1}\dots\cdot\omega^{i_d}_{d})
(\psi_t(\bs{\omega})K_{\F}(h\bs{\omega}))|^2]\nonumber\\
\hspace{-1cm}&&\hspace{4.55cm}
\;\leq\;|(\mathrm{-i})^m(\omega^{i_1}_{1}\dots\cdot\omega^{i_d}_{d})
K_{\F}(h\bs{\omega})|^2\cdot\frac{c^2_t}{n},\nonumber\\
\hspace{-1cm}&&
\E\left[\int |\F[\,\partial^m \hat{p}^n_t/\partial x^{i_1}_1\dots\partial x^{i_d}_d](\bs{\omega})
-\F[\,\partial^m p^*_t/\partial x^{i_1}_1\dots\partial x^{i_d}_d](\bs{\omega})|^2\,
d\bs{\bs{\omega}}\right]\nonumber\\
\hspace{-1cm}&&\hspace{4.55cm}
\;\leq\;\frac{c^2_t}{n}\!\!
\int |\F[\partial^m h^{-d}K(\bs{x}/h)/
\partial x^{i_1}_1\dots\partial x^{i_d}_d]|^2\,d\bs{\omega},\nonumber\\
\hspace{-1cm}&&
\E\left[\int |\F[\,\hat{p}_{t,i_1,\dots,i_d}^{n,(m)}(\bs{x}_t)](\bs{\omega})
-\F[\,p_{t,i_1,\dots,i_d}^{*,(m)}(\bs{x}_t)](\bs{\omega})|^2\,
d\bs{\bs{\omega}}\right]\nonumber\\
\hspace{-1cm}&&\hspace{4.55cm}
\;\leq\;\frac{c^2_t}{n}\!\!
\int |\F[\partial^m h^{-d}K(\bs{x}/h)/
\partial x^{i_1}_1\dots\partial x^{i_d}_d]|^2\,d\bs{\omega},\nonumber\\
\hspace{-1cm}&&
\E\left[\int (\hat{p}_{t,i_1,\dots,i_d}^{n,(m)}(\bs{x}_t)
-p_{t,i_1,\dots,i_d}^{*,(m)}(\bs{x}_t))^2\,
d\bs{x}_t\right]\nonumber\\
\hspace{-1cm}&&\hspace{4.55cm}
\;\leq\;\frac{c^2_t}{nh^{2d}}\!\!
\int (\partial^m K(\bs{x}/h)/
\partial x^{i_1}_1\dots\partial x^{i_d}_d)^2\,d\bs{x},\nonumber\\
\hspace{-1cm}&&
\E\left[\int (\hat{p}_{t,i_1,\dots,i_d}^{n,(m)}(\bs{x}_t)
-p_{t,i_1,\dots,i_d}^{*,(m)}(\bs{x}_t))^2\,
d\bs{x}_t\right]\nonumber\\
\hspace{-1cm}&&\hspace{4.55cm}
\;\leq\;\frac{c^2_t}{nh^{d+2m}}\!\!
\int (\partial^m K(\bs{u})/
\partial u^{i_1}_1\dots\partial u^{i_d}_d)^2\,d\bs{u},\nonumber\\
\hspace{-1cm}&&\E\left[\int (\hat{p}_{t,i_1,\dots,i_d}^{n,(m)}(\bs{x}_t)-
p_{t,i_1,\dots,i_d}^{*,(m)}(\bs{x}_t))^2\,d\bs{x}_t\right]\nonumber\\
\hspace{-1cm}&&\hspace{4.55cm}
\;\leq\;\frac{c^2_t}{nh^{d+2m}}
\int (K^{(m)}_{i_1,\dots,i_d}(\bs{u}))^2\,d\bs{u}.\nonumber
\end{eqnarray}

Using the $L_2$ norm of $K^{(m)}_{i_1,\dots,i_d}(\bs{u})$, i.e.,
$||K^{(m)}_{i_1,\dots,i_d}||^2=
\int (K^{(m)}_{i_1,\dots,i_d}(\bs{u}))^2\,d\bs{u}$, the above reads as
\begin{equation}
\E\!\int (\hat{p}_{t,i_1,\dots,i_d}^{n,(m)}(\bs{x}_t)-
p^{*,(m)}_{t,i_1,\dots,i_d}(\bs{x}_t))^2\,d{\bs{x}_t}
\leq \frac{c^2_t}{nh^{d+2m}}\,||K_{i_1,\dots,i_d}^{(m)}||^2.
\label{PdE1fce}
\end{equation}

For given $i_1,\dots i_d\in\N_0$, we assume that $p_{t,i_1,\dots,i_d}^{(m)}$, 
$t\in\{0,\dots,T\}$ exist and are \mbox{$\beta\in\N$} Sobolev in the sense of
validity of (\ref{SobolevProp}). That is, for 
the Fourier transforms $\F[p_{t,i_1,\dots,i_d}^{(m)}](\bs{\omega})$ there
exist positive constants $L_{t,(m)}$ such that
\begin{equation}
\label{PdSobolev}
\int ||\bs{\omega}||^{2\beta} |\F[\,p_{t,i_1,\dots,i_d}^{(m)}](\bs{\omega})|^2
\,d\bs{\omega}\,\leq\, (2\pi)^d L^{2}_{t,(m)}.
\end{equation}
Using (\ref{PdSobolev}) we have under the assumptions of
Theorem~\ref{bdTheorem} the formula
\begin{equation}
\label{conv1Kd}
\int\! |1-K_{\F}(h\bs{\omega})|^2
|\F[p_{t,i_1,\dots,i_d}^{(m)}](\bs{\omega})|^2\,d\bs{\omega}
\leq
(2\pi)^d A^2 h^{2\beta}L_{t,(m)}^2.
\end{equation}

By (\ref{conv1Kd}) we get the counterpart of (\ref{E2fce}) that writes as
\begin{eqnarray*}
&&\hspace{-1cm}\int (p_{t,i_1,\dots,i_d}^{*,(m)}(\bs{x}_t)-
p_{t,i_1,\dots,i_d}^{(m)}(\bs{x}_t))^2\,d\bs{x}_t\nonumber\\
&&\hspace{2.5cm}=\frac{1}{(2\pi)^d}
\int |1-K_{\F}(h\bs{\omega})|^2|
\F[p^{(m)}_{t,i_1,\dots,i_d}](\bs{\omega})|^2\,d\bs{\omega}\\[0.2cm]
&&\hspace{2.5cm}\leq A^2 h^{2\beta}L_{t,{(m)}}^2.
\end{eqnarray*}

We proceed in the same way as in the proof of Theorem~\ref{MainTh}. 
We consider the $||\cdot||_{\lambda^d\otimes P}$ norm and employ
the triangle inequality to get
\begin{equation}
||\hat{p}_{t,i_1,\dots,i_d}^{n,(m)}(\bs{x}_t)-
p_{t,i_1,\dots,i_d}^{(m)}(\bs{x}_t)||_{\lambda^d\otimes P}
\leq
A h^{\beta}L_{t,(m)}+
\frac{c_t}{(nh^{d+2m})^{-1/2}}||K_{i_1,\dots,i_d}^{(m)}||.
\label{PdTriangle}
\end{equation}

The bandwidth $h$ develop with $n$ as $h(n)=\alpha n^{-\frac{1}{2\beta+d+2m}}$
for some $\alpha>0$. So we have
$h^\beta=\alpha^\beta n^{-\frac{\beta}{2\beta+d+2m}}$. Further,
$
(nh^{d+2m})^{-1}=n^{-1}\alpha^{-(d+2m)}n^{\frac{d+2m}{2\beta+d+2m}}
=\alpha^{-(d+2m)}n^{-\frac{2\beta}{2\beta+d+2m}}
$
and therefore
$
(nh^{d+2m})^{-1/2}=\alpha^{-(d+2m)/2}n^{-\frac{\beta}{2\beta+d+2m}}.
$
This gives us after squaring (\ref{PdTriangle}) the statement of
the theorem:
\begin{equation}
\label{ctpder}
\E\int (\hat{p}_{t,i_1,\dots,i_m}^{n,(m)}(\bs{x}_t)-
p^{(m)}_{t,i_1,\dots,i_m}(\bs{x}_t))^2\,d{\bs{x}_t}
\leq C_{t,(m)}^2\cdot n^{-\frac{2\beta}{2\beta+d+2m}}
\end{equation}
for
$C_{t,(m)}=AL_{t,(m)}\alpha^{\beta}+
c_t \alpha^{-(d+2m)/2}||K_{i_1,\dots,i_d}^{(m)}||$.
\hfill$\Box$\\

The structure of formula (\ref{ctpder}) is the same as that of formula 
(\ref{MThEq1}) of Theorem~\ref{MainTh}. Only two constants are replaced.
Therefore, the discussion of its corollaries remains valid, especially,
it implies the convergence of partial derivatives of the kernel density
estimates to the respective derivatives of the related filtering densities.

On the other hand, we see that the order of the partial derivative $m$ 
slows down the convergence. In fact, it has the same effect on the
convergence as the dimension $d$, see the discussion concerning
the influence of the dimension below Theorem~\ref{MainTh}.

\section{Sobolev character of filtering densities}
\label{SecV}
In Theorem \ref{MainTh}, we have assumed that the filtering densities
$p_t$, $t\in\{0,\dots,T\}$, $T\in\N$ are $\beta$-Sobolev over time. 
This assumption can be verified for $p_0$, but for other time instants $t>0$
a~direct verification is typically impossible. That is why we are interested
in a practical tool for performing the verification indirectly so that the assumptions
for the convergence result of Theorem~\ref{MainTh} were fulfilled.
As a result, we present a sufficient condition on the densities
of transition kernels of the signal process such that the Sobolev character
of the filtering densities is retained over time.

In the statement below, we work with the prediction and update formulas,
(\ref{L1}) and (\ref{L2}), respectively, of Section \ref{FeSec}. 
We rewrite these formulas in the more compact form using the following shortcuts: 
$\ov{p}_t(\bs{x}_t)=p(\bs{x}_t|\bs{y}_{1:t-1})$,
$p_t(\bs{x}_t)=p(\bs{x}_t|\bs{y}_{1:t})$ (in fact, this shortcut was already used in
Theorem~\ref{MainTh}) and $g_t(\bs{x}_t)=g_t(\bs{y}_t|\bs{x}_t)$ 
for the respective densities; and 
$\ov{\pi}_{t}g_t={\int g_t(\bs{y}_t|\bs{x}_t)p(\bs{x}_{t}|\bs{y}_{1:t-1})\,d\bs{x}_t}
={\int g_t(\bs{x}_t)\ov{p}_t(\bs{x}_{t})\,d\bs{x}_t}$ for the normalizing integral.
Using the introduced shortcuts we have (\ref{L1}) and (\ref{L2}) written as
\begin{eqnarray}
\label{PredFT}
\ov{p}_t(\bs{x}_t)&=&
\int K_{t-1}(\bs{x}_t|\bs{x}_{t-1})p_{t-1}(\bs{x}_{t-1})\,d\bs{x}_{t-1},\\
p_t(\bs{x}_t)&=&\frac{g_t(\bs{x}_t)\ov{p}_t(\bs{x}_t)}{\ov{\pi}_tg_t}.
\label{UpdateFT}
\end{eqnarray}

\begin{definition}
Let $K_{t-1}$ be the transition kernel in the filtering problem 
for time $t-1$, $t-1\in\N_0$.  As the conditional characteristic function 
$\F[K_{t-1}](\bs{\omega}|\bs{x}_{t-1})$
of the transition kernel $K_{t-1}$ we denote the characteristic function
of the conditional distribution determined by this kernel,~i.e.,
$$
\F[K_{t-1}](\bs{\omega}|\bs{x}_{t-1})=
\int e^{\mathrm{i}\lss\bs{\omega},\bs{x}_t\rss}
K_{t-1}(d\bs{x}_t|\bs{x}_{t-1}).
$$
\end{definition}

\begin{theorem}
\label{Th2}
In the filtering problem, let $p_0\in\mathcal{P}_{S(\beta,L_0)}$. Let
$K_{t-1}$, $t\in\N$ be the set of the transition kernels and
$\F[K_{t-1}]$, $t\in\N$ be the set of the corresponding conditional
characteristic functions. For all $t\in\N$, let $\F[K_{t-1}]$ be bounded 
by a~function $K_b\!:\R^d\ra\C$
in such a way that for any $\bs{x}_{t-1}\in\R^d$ and $\bs{\omega}\in\R^d$
\begin{equation}
\label{iKF}
|\F[K_{t-1}](\bs{\omega}|\bs{x}_{t-1})|\leq |K_{b}(\bs{\omega})|.
\end{equation}
Let the function $K_b$ satisfy (\ref{SobolevProp})
for some $\beta\in\N$ and $L_{K_b}>0$.
Then the filtering densities $p_t$ are $\beta$-Sobolev for 
all \mbox{$t\in\N$}, i.e., 
$p_t\in\mathcal{P}_{S(\beta,L_t)}$, with the recurrence for $L_t$
written as
\begin{equation}
\label{Ltspec}
L_t=||g^v_t||_{\infty}L_{K_b}/\ov{\pi}_tg_t,
\end{equation}
where $||g^v_t||_{\infty}=\sup_{\bs{u}}\{|g^v_t(\bs{u})|\}$.
\end{theorem}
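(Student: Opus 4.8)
The plan is to fix $t\geq 1$, assume $p_{t-1}$ is a probability density, and propagate the Sobolev bound through the two sub-steps of the filtering recursion separately: first the prediction formula (\ref{PredFT}), then the update formula (\ref{UpdateFT}). Because the claimed constant $L_t=||g^v_t||_{\infty}L_{K_b}/\ov{\pi}_t g_t$ does not involve $L_{t-1}$, this is not really a recursion that accumulates the previous Sobolev constant; the only structural fact about $p_{t-1}$ I expect to use is that it is non-negative and integrates to one. The base case is the hypothesis $p_0\in\mathcal{P}_{S(\beta,L_0)}$, and everything is carried out on the Fourier side, where the hypotheses (\ref{iKF}) and (\ref{SobolevProp}) on $K_b$ live.

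For the prediction step I would take the Fourier transform of (\ref{PredFT}) and swap the order of integration by Fubini, using that $K_{t-1}(\cdot|\bs{x}_{t-1})$ is a probability density and the definition of the conditional characteristic function. This yields
\[
\F[\ov{p}_t](\bs{\omega})=\int \F[K_{t-1}](\bs{\omega}|\bs{x}_{t-1})\,p_{t-1}(\bs{x}_{t-1})\,d\bs{x}_{t-1}.
\]
Taking moduli and applying the uniform bound (\ref{iKF}) together with $\int p_{t-1}(\bs{x}_{t-1})\,d\bs{x}_{t-1}=1$ gives the pointwise estimate $|\F[\ov{p}_t](\bs{\omega})|\leq |K_b(\bs{\omega})|$. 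Inserting this into the Sobolev integral and using (\ref{SobolevProp}) for $K_b$ shows $\int ||\bs{\omega}||^{2\beta}|\F[\ov{p}_t](\bs{\omega})|^2\,d\bs{\omega}\leq (2\pi)^d L_{K_b}^2$, i.e. the prediction density $\ov{p}_t$ is already $\beta$-Sobolev with constant $L_{K_b}$, independently of $p_{t-1}$. I expect this part to be routine.

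For the update step I would start from (\ref{UpdateFT}) and use that the multiplier is non-negative and bounded, $0\leq g_t(\bs{x}_t)/\ov{\pi}_t g_t\leq ||g^v_t||_{\infty}/\ov{\pi}_t g_t$ (recall $g_t(\bs{x}_t)=g^v_t(\bs{y}_t-h_t(\bs{x}_t))$), and try to transfer the Sobolev bound just obtained for $\ov{p}_t$ to $p_t$ at the cost of the factor $||g^v_t||_{\infty}/\ov{\pi}_t g_t$, which would produce exactly the $L_t$ of (\ref{Ltspec}).

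This update step is the main obstacle, and I would treat it with care. Multiplying by $g_t$ in the spatial domain corresponds to convolving with $\F[g_t]$ on the frequency side, so $|\F[p_t](\bs{\omega})|$ is in general \emph{not} pointwise dominated by $(||g^v_t||_{\infty}/\ov{\pi}_t g_t)\,|\F[\ov{p}_t](\bs{\omega})|$; equivalently, differentiating the product $g_t\,\ov{p}_t$ up to order $\beta$ via Leibniz's rule brings in derivatives of $g_t$ that the clean constant $L_t$ does not see. Hence the crux is to justify that passing from $\ov{p}_t$ to $p_t$ inflates the top-order weighted $L_2$ Fourier norm by no more than the supremum of the likelihood. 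I would attempt this either by establishing a genuine pointwise Fourier estimate $|\F[p_t](\bs{\omega})|\leq (||g^v_t||_{\infty}/\ov{\pi}_t g_t)\,|K_b(\bs{\omega})|$ and checking it honestly, or, failing that, by importing mild regularity and decay of $g_t$ into the convolution bound; pinning down the exact hypothesis under which only $||g^v_t||_{\infty}$ survives is where I expect the real work to lie.
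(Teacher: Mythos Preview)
Your plan matches the paper's proof almost line for line. For the prediction step the paper does exactly what you outline: it Fubinis the prediction formula to get $\ov{\psi}_t(\bs{\omega})=\int p_{t-1}(\bs{x}_{t-1})\,\F[K_{t-1}](\bs{\omega}|\bs{x}_{t-1})\,d\bs{x}_{t-1}$, applies Jensen and the uniform bound (\ref{iKF}) together with $\int p_{t-1}=1$ to obtain $|\ov{\psi}_t(\bs{\omega})|\leq |K_b(\bs{\omega})|$, and then integrates against $||\bs{\omega}||^{2\beta}$ to conclude $\ov{p}_t\in\mathcal{P}_{S(\beta,L_{K_b})}$. As you observe, $L_{t-1}$ never enters.

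For the update step, however, the paper does precisely the thing you are suspicious of, and without any further argument. From $(\ov{\pi}_t g_t)\,\psi_t(\bs{\omega})=\int e^{\mathrm{i}\lss\bs{\omega},\bs{x}_t\rss}g_t(\bs{x}_t)\,\ov{p}_t(\bs{x}_t)\,d\bs{x}_t$ it passes in a single displayed line to the pointwise bound $(\ov{\pi}_t g_t)^2|\psi_t(\bs{\omega})|^2\leq ||g^v_t||_\infty^2\,|\ov{\psi}_t(\bs{\omega})|^2$ and then integrates against $||\bs{\omega}||^{2\beta}$. Your objection is well taken: this inequality does not follow from $0\leq g_t\leq ||g^v_t||_\infty$, because multiplication by $g_t$ is convolution by $\F[g_t]$ on the Fourier side, and in particular $\ov{\psi}_t(\bs{\omega}_0)=0$ need not force $\psi_t(\bs{\omega}_0)=0$. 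So the gap you flagged is present in the paper's own argument; the clean constant (\ref{Ltspec}), with only $||g^v_t||_\infty$ surviving, is asserted rather than derived. Closing it honestly would require either a direct pointwise estimate $|\psi_t(\bs{\omega})|\leq(||g^v_t||_\infty/\ov{\pi}_t g_t)\,|K_b(\bs{\omega})|$ obtained by a different route, or additional regularity hypotheses on $g_t$, exactly along the lines you sketch.
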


\noindent
\textbf{Proof.}
The theorem holds for $p_0$ by the assumption.
Let $t\in\N$, then by multiplying both sides of (\ref{PredFT}) 
by the complex exponential we get from the prediction formula 
$$
e^{\mathrm{i}\lss\bs{\omega},\bs{x}_t\rss}\,\ov{p}_t(\bs{x}_t)=
e^{\mathrm{i}\lss \bs{\omega},\bs{x}_t\rss}\!\!
\int\! K_{t-1}(\bs{x}_t|\bs{x}_{t-1})p_{t-1}(\bs{x}_{t-1})\,d\bs{x}_{t-1}.
$$
By integration, the left-hand side gives the characteristic function
$\ov{\psi}_t(\bs{\omega})$ of~$\ov{p}_t(\bs{x}_t)$, i.e.,
$$
\ov{\psi}_t(\bs{\omega})=
\int e^{\mathrm{i}\lss\bs{\omega},\bs{x}_t\rss}\ov{p}_t(\bs{x}_t)\,d\bs{x}_t.
$$
The right-hand side has then form
\begin{eqnarray*}
&&
\!\!\!\int\! \int e^{\mathrm{i}\lss \bs{\omega},\bs{x}_t\rss}
K_{t-1}(\bs{x}_t|\bs{x}_{t-1})p_{t-1}(\bs{x}_{t-1})\,d\bs{x}_{t-1}\,d\bs{x}_t\\
&=&
\!\!\!\int p_{t-1}(\bs{x}_{t-1})\left(
\int e^{\mathrm{i}\lss\bs{\omega},\bs{x}_t\rss}K_{t-1}(\bs{x}_t|\bs{x}_{t-1})\,d\bs{x}_t
\right)d\bs{x}_{t-1},\\
&=&
\!\!\!\int p_{t-1}(\bs{x}_{t-1})\F[K_{t-1}](\bs{\omega}|\bs{x}_{t-1})\,d\bs{x}_{t-1}.
\end{eqnarray*}
The equality of two complex numbers is equivalent to the equality of
their complex conjugates. Hence we can multiply both sides
by their complex conjugates with the equality retained. This gives
us the expression
$$
|\ov{\psi}_t(\bs{\omega})|^2=
\left|\int p_{t-1}(\bs{x}_{t-1})
\F[K_{t-1}](\bs{\omega}|\bs{x}_{t-1})\,d\bs{x}_{t-1}\right|^2.
$$

By the Jensen's inequality and assumed boundedness 
of $\F[K_{t-1}]$, we have
\begin{eqnarray*}
|\ov{\psi}_t(\bs{\omega})|^2&\leq&
\left(\int |\F[K_{t-1}](\bs{\omega}|\bs{x}_{t-1})|\,
p_{t-1}(\bs{x}_{t-1})\,d\bs{x}_{t-1}\right)^{\!\!2}\\
&\leq&
\left(|K_b(\bs{\omega})|
\int\! p_{t-1}(\bs{x}_{t-1})\,d\bs{x}_{t-1}\!\right)^{\!\!2}
\!=|K_b(\bs{\omega})|^2.
\end{eqnarray*}
Thus,
\begin{equation}
\int ||\bs{\omega}||^{2\beta}
|\ov{\psi}_t(\bs{\omega})|^2\,d\bs{\omega}
\leq
\int ||\bs{\omega}||^{2\beta} 
|K_b(\bs{\omega})|^2\leq (2\pi)^d L^2_{K_b}.
\label{Pt1}
\end{equation}

The above formula shows that $\ov{p}_t\in\mathcal{P}_{(\beta,L_{K_b})}$
for any $t\in\N$. We proceed with the specification of the Sobolev 
constant $L_t$ of the update (filtering) density $p_t$.

In Section~\ref{ftds}, in formula (\ref{gtdef}), there was shown 
that the function $g_t(\bs{x}_t)$ of the update formula (\ref{UpdateFT})
has form $g_t(\bs{x}_t)=g^v_t(\bs{y}_t-h(\bs{x}_t))$.
Function $g^v_t$ is the density of the noise term in the observation
process and is assumed to be bounded. Thus,
we have 
$\sup_{\bs{x}_t,\bs{y}_t} \{|g^v_t(\bs{y}_t-h(\bs{x}_t))|\}\!=
\sup_{\bs{u}} \{|g^v_t(\bs{u})|\}=||g^v_t||_{\infty}<\infty$.

Again, multiplying the update formula (\ref{UpdateFT}) 
by the complex exponential, integrating
and multiplying by the respective conjugates gives us
\begin{eqnarray*}
(\ov{\pi}_tg_t)\,p_t(\bs{x}_t)&=&g_t(\bs{x}_t)\,\ov{p}_t(\bs{x}_t),\nonumber\\
(\ov{\pi}_tg_t)\!\!
\int\!\! e^{\mathrm{i}\lss\bs{\omega},\bs{x}_t\rss}p_t(\bs{x}_t)\,d\bs{x}_t
&=&
\!\!\!\int\!\! e^{\mathrm{i}\lss\bs{\omega},\bs{x}_t\rss}
g_t(\bs{x}_t)\,\ov{p}_t(\bs{x}_t)\,d\bs{x}_t,\nonumber\\
(\ov{\pi}_tg_t)^2|\psi_t(\bs{\omega})|^2
&\leq& ||g^v_t||_{\infty}^2|\ov{\psi}_t(\bs{\omega})|^2,\nonumber\\
||\bs{\omega}||^{2\beta}|\psi_t(\bs{\omega})|^2
&\leq& \frac{||g^v_t||_{\infty}^2}{(\ov{\pi}_tg_t)^2}\,||\bs{\omega}||^{2\beta}|
\ov{\psi}_t(\bs{\omega})|^2,\nonumber\\
(2\pi)^{-d}\!\!\!\int\! ||\bs{\omega}||^{2\beta}|\psi_t(\bs{\omega})|^2\,d\bs{\omega}
&\leq&
\frac{||g^v_t||_{\infty}^2L^2_{K_b}}{(\ov{\pi}_tg_t)^2}=L^2_{t}.
\label{Ltdef}
\end{eqnarray*}
This concludes the proof.\hfill$\Box$\\

The theorem tells us that, in the particle filter, the \mbox{$\beta$-Sobolev}
character of the filtering densities is retained over time
if the set of the conditional characteristic functions of transition kernels
$\F[K_{t-1}](\bs{\omega}|\bs{x}_{t-1})$, $t\in\N$ is uniformly bounded.

\subsection{Extension to partial derivatives}
Considering preservation of the Sobolev character of partial derivatives (in
the sense of validity of (\ref{SobolevProp})) of the filtering densities 
$p_{t,i_1,\dots,i_m}^{(m)}$, the theorem holds as well.
The difference is that we assume that $p_{0,i_1,\dots,i_m}^{(m)}$
is \mbox{$\beta$-Sobolev}\footnote{Strictly speaking,
we cannot say that $p_{0,i_1,\dots,i_m}^{(m)}$ is $\beta$-Sobolev
or write $p_{0,i_1,\dots,i_m}^{(m)}\in\mathcal{P}_{(\beta,L_{0,(m)})}$
as the partial derivative is not a density anymore. But, if we still 
do it for a~general function, then we mean that the Fourier transform
of this function exists and satisfies the inequality (\ref{SobolevProp}) for some 
$\beta\in\N$ and $L>0$.} and, in (\ref{iKF}), instead of considering
boundedness of $\F[K_{t-1}](\bs{\omega}|\bs{x}_{t-1})$, we consider
the boundedness of $\F[K_{t-1,i_1,\dots,i_m}^{(m)}](\bs{\omega}|\bs{x}_{t-1})$
for any $\bs{x}_{t-1}\in\R^d$, $t\in\N$.

\begin{theorem}
\label{derTh2}
In the filtering problem, let $p_{0,i_1,\dots,i_m}^{(m)}$ 
be $\beta$-Sobolev for some $\beta\in\N$, $L_{0,(m)>0}$ and
$i_1,\dots,i_d\in\N_0$ such that $m=i_1+\dots+i_d$, $m\in\N_0$.
Let $K_{t-1}$, $t\in\N$ be the set of the transition kernels, and
$K_{t-1,i_1,\dots,i_m}^{(m)}=
\partial^m K_{t-1}/\partial x^{i_1}_1\dots\partial x^{i_d}_d$, $t\in\N$
the set of corresponding partial derivatives. Let 
$\F[K_{t-1,i_1,\dots,i_m}^{(m)}](\bs{\omega}|\bs{x}_{t-1})$, $t\in\N$ be
the set of the corresponding conditional Fourier transforms, i.e.,
$$
\F[K_{t-1,i_1,\dots,i_m}^{(m)}](\bs{\omega}|\bs{x}_{t-1})=
\int e^{\mathrm{i}\lss\bs{\omega},\bs{x}_t\rss}
K_{t-1,i_1,\dots,i_m}^{(m)}(\bs{x}_t|\bs{x}_{t-1})\,d\bs{x}_t.
$$
For all $t\in\N$, let $\F[K_{t-1,i_1,\dots,i_m}^{(m)}]$ be bounded 
by some function $K_{b,(m)}\!:\R^d\ra\C$
in such a way that for any $\bs{x}_{t-1}\in\R^d$ and $\bs{\omega}\in\R^d$,
\begin{equation}
\label{deriKF}
|\F[K_{t-1,i_1,\dots,i_m}^{(m)}](\bs{\omega}|\bs{x}_{t-1})|\leq |K_{b,(m)}(\bs{\omega})|.
\end{equation}
Let the function $K_{b,(m)}$ satisfy (\ref{SobolevProp})
for the above $\beta\in\N$ and some $L_{K_{b,(m)}}>0$.
Then the partial derivatives of filtering densities $p_{t,i_1,\dots,i_m}^{(m)}$,
\mbox{$t\in\N$} are $\beta$-Sobolev with the recurrence for $L_t$
written as
\begin{equation}
\label{derLtspec}
L_t=||g_t||_{\infty}L_{K_{b,(m)}}/\ov{\pi}_tg_t,
\end{equation}
where $||g_t||_{\infty}=\sup_{\bs{u}}\{|g_t(\bs{u})|\}$.
\end{theorem}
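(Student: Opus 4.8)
The plan is to mimic the proof of Theorem~\ref{Th2} almost verbatim, replacing every density by the corresponding $m$-th partial derivative and every characteristic function by the Fourier transform of that derivative. The base case holds for $t=0$ by hypothesis, so I would establish the inductive step $t-1\mapsto t$ by running the prediction formula~(\ref{PredFT}) and then the update formula~(\ref{UpdateFT}).

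For the prediction step I would differentiate (\ref{PredFT}) with respect to $\bs{x}_t$. Since $p_{t-1}(\bs{x}_{t-1})$ does not depend on $\bs{x}_t$, the operator $\partial^m/\partial x_1^{i_1}\cdots\partial x_d^{i_d}$ falls entirely on the kernel, giving $\ov{p}_{t,i_1,\dots,i_d}^{(m)}(\bs{x}_t)=\int K_{t-1,i_1,\dots,i_d}^{(m)}(\bs{x}_t|\bs{x}_{t-1})\,p_{t-1}(\bs{x}_{t-1})\,d\bs{x}_{t-1}$. Multiplying by the complex exponential and integrating over $\bs{x}_t$ (Fubini) produces $\F[\ov{p}_{t,i_1,\dots,i_d}^{(m)}](\bs{\omega})=\int p_{t-1}(\bs{x}_{t-1})\,\F[K_{t-1,i_1,\dots,i_d}^{(m)}](\bs{\omega}|\bs{x}_{t-1})\,d\bs{x}_{t-1}$, the exact analogue of the expression in Theorem~\ref{Th2} but with the conditional Fourier transform of the differentiated kernel. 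Jensen's inequality together with the boundedness hypothesis~(\ref{deriKF}) and $\int p_{t-1}(\bs{x}_{t-1})\,d\bs{x}_{t-1}=1$ then yields $|\F[\ov{p}_{t,i_1,\dots,i_d}^{(m)}](\bs{\omega})|^2\leq|K_{b,(m)}(\bs{\omega})|^2$; integrating against $||\bs{\omega}||^{2\beta}$ and invoking the Sobolev hypothesis on $K_{b,(m)}$ shows $\ov{p}_{t,i_1,\dots,i_d}^{(m)}$ is $\beta$-Sobolev with constant $L_{K_{b,(m)}}$, independently of $t$. As in Theorem~\ref{Th2}, this uses only that $p_{t-1}$ integrates to one, not any Sobolev property of $p_{t-1}$ itself.

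For the update step the direct differentiation of the product $g_t\ov{p}_t$ is awkward, because Leibniz's rule spreads the derivative across both factors and the clean relation $(\ov{\pi}_tg_t)\,p_{t,i_1,\dots,i_d}^{(m)}=g_t\,\ov{p}_{t,i_1,\dots,i_d}^{(m)}$ fails. I expect this to be the main obstacle, and the way around it is to stay in the frequency domain. By the derivative property~(\ref{Fder}) one has $\F[p_{t,i_1,\dots,i_d}^{(m)}](\bs{\omega})=(\mathrm{-i})^m(\omega_1^{i_1}\cdots\omega_d^{i_d})\psi_t(\bs{\omega})$ and likewise $\F[\ov{p}_{t,i_1,\dots,i_d}^{(m)}](\bs{\omega})=(\mathrm{-i})^m(\omega_1^{i_1}\cdots\omega_d^{i_d})\ov{\psi}_t(\bs{\omega})$, so the scalar monomial factors out of both sides. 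Reusing the pointwise bound obtained inside the proof of Theorem~\ref{Th2}, namely $(\ov{\pi}_tg_t)^2|\psi_t(\bs{\omega})|^2\leq||g_t||_{\infty}^2|\ov{\psi}_t(\bs{\omega})|^2$ with $||g_t||_{\infty}=||g^v_t||_{\infty}$, and multiplying by $|\omega_1^{i_1}\cdots\omega_d^{i_d}|^2$ gives $(\ov{\pi}_tg_t)^2|\F[p_{t,i_1,\dots,i_d}^{(m)}](\bs{\omega})|^2\leq||g_t||_{\infty}^2|\F[\ov{p}_{t,i_1,\dots,i_d}^{(m)}](\bs{\omega})|^2$.

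Finally I would multiply this inequality by $||\bs{\omega}||^{2\beta}$, integrate over $\bs{\omega}$, and substitute the prediction-step bound $\int||\bs{\omega}||^{2\beta}|\F[\ov{p}_{t,i_1,\dots,i_d}^{(m)}](\bs{\omega})|^2\,d\bs{\omega}\leq(2\pi)^dL_{K_{b,(m)}}^2$. Dividing the result by $(\ov{\pi}_tg_t)^2$ produces $\int||\bs{\omega}||^{2\beta}|\F[p_{t,i_1,\dots,i_d}^{(m)}](\bs{\omega})|^2\,d\bs{\omega}\leq(2\pi)^dL_t^2$ with $L_t=||g_t||_{\infty}L_{K_{b,(m)}}/\ov{\pi}_tg_t$, which is exactly the asserted recurrence. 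The only genuinely new ingredients relative to Theorem~\ref{Th2} are the differentiation of the kernel in the prediction step and the factoring of the monomial $(\omega_1^{i_1}\cdots\omega_d^{i_d})$ in the update step; everything else transfers line by line.
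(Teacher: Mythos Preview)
Your proposal is correct and follows essentially the same route as the paper's own proof: the prediction step yields $|\F[\ov{p}_{t,i_1,\dots,i_d}^{(m)}](\bs{\omega})|\leq|K_{b,(m)}(\bs{\omega})|$ via Jensen and the bound~(\ref{deriKF}), and the update step is handled in the frequency domain by factoring the monomial $(-\mathrm{i})^m\omega_1^{i_1}\cdots\omega_d^{i_d}$ out of the pointwise inequality $(\ov{\pi}_tg_t)^2|\psi_t|^2\leq\|g_t\|_\infty^2|\ov{\psi}_t|^2$. The only cosmetic difference is that in the prediction step you differentiate under the integral first and then Fourier-transform, whereas the paper Fourier-transforms first and then multiplies by the monomial to invoke the derivative property~(\ref{Fder}); the two orderings are equivalent and produce the identical expression $\F[\ov{p}_{t,i_1,\dots,i_d}^{(m)}](\bs{\omega})=\int p_{t-1}(\bs{x}_{t-1})\,\F[K_{t-1,i_1,\dots,i_d}^{(m)}](\bs{\omega}|\bs{x}_{t-1})\,d\bs{x}_{t-1}$.
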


\noindent
\textbf{Proof.} 
The theorem holds for $p_{0,i_1,\dots,i_m}^{(m)}$ by the assumption.
From the prediction formula, multiplying both sides of
(\ref{PredFT}) by the complex exponential, we get
$$
e^{\mathrm{i}\lss\bs{\omega},\bs{x}_t\rss}\,\ov{p}_t(\bs{x}_t)=
e^{\mathrm{i}\lss \bs{\omega},\bs{x}_t\rss}\!\!
\int K_{t-1}(\bs{x}_t|\bs{x}_{t-1})p_{t-1}(\bs{x}_{t-1})\,d\bs{x}_{t-1}.
$$
By integration, the left-hand side just gives the characteristic function
$\ov{\psi}_t(\bs{\omega})$ of~$\ov{p}_t(\bs{x}_t)$, i.e.,
$$
\ov{\psi}_t(\bs{\omega})=
\int e^{\mathrm{i}\lss\bs{\omega},\bs{x}_t\rss}\ov{p}_t(\bs{x}_t)\,d\bs{x}_t.
$$
The right-hand side has then form
\begin{eqnarray*}
&&
\int \int e^{\mathrm{i}\lss \bs{\omega},\bs{x}_t\rss}
K_{t-1}(\bs{x}_t|\bs{x}_{t-1})p_{t-1}(\bs{x}_{t-1})\,d\bs{x}_{t-1}\,d\bs{x}_t\\
&=&
\int \!\!p_{t-1}(\bs{x}_{t-1})\left(
\int e^{\mathrm{i}\lss\bs{\omega},\bs{x}_t\rss}K_{t-1}(\bs{x}_t|\bs{x}_{t-1})\,d\bs{x}_t
\right)d\bs{x}_{t-1},\\
&=&
\int p_{t-1}(\bs{x}_{t-1})\F[K_{t-1}](\bs{\omega}|\bs{x}_{t-1})\,d\bs{x}_{t-1}.
\end{eqnarray*}
Multiplying both sides by 
$(-\mathrm{i})^m(\omega^{i_1}_{i_1}\cdot\dots\cdot\omega^{i_d}_{i_d})$, 
we move both sides to the Fourier transforms of the corresponding 
partial derivatives. That is,
$$
(-\mathrm{i})^m(\omega^{i_1}_{i_1}\cdot\dots\cdot\omega^{i_d}_{i_d})
\ov{\psi}_t(\bs{\omega})=\F[\,\ov{p}_{t,i_1,\dots,i_m}^{(m)}]
$$
and
$$
\hspace{-2cm}
\int p_{t-1}(\bs{x}_{t-1})
(-\mathrm{i})^m(\omega^{i_1}_{i_1}\cdot\dots\cdot\omega^{i_d}_{i_d})
\F[K_{t-1}](\bs{\omega}|\bs{x}_{t-1})\,d\bs{x}_{t-1}=
$$
$$
\hspace{4cm}
=\int p_{t-1}(\bs{x}_{t-1})\F[K_{t-1,i_1,\dots,i_m}^{(m)}]
(\bs{\omega}|\bs{x}_{t-1})\,d\bs{x}_{t-1}.
$$
Further multiplying both sides by the complex conjugates gives the
expression
$$
|\F[\,\ov{p}_{t,i_1,\dots,i_m}^{(m)}]|^2=
\left|\int p_{t-1}(\bs{x}_{t-1})
\F[K_{t-1,i_1,\dots,i_m}^{(m)}]
(\bs{\omega}|\bs{x}_{t-1})\,d\bs{x}_{t-1}\right|^2.
$$

Now, by the assumed boundedness of $\F[K_{t-1,i_1,\dots,i_m}^{(m)}]$ and
the Jensen's inequality, we have
\begin{eqnarray*}
|\F[\,\ov{p}_{t,i_1,\dots,i_m}^{(m)}]|^2&\leq&
\left(\int |\F[K_{t-1,i_1,\dots,i_m}^{(m)}](\bs{\omega}|\bs{x}_{t-1})|\,
p_{t-1}(\bs{x}_{t-1})\,d\bs{x}_{t-1}\right)^{\!\!2}\\
&\leq&
\left(|K_{b,(m)}(\bs{\omega})|
\int\! p_{t-1}(\bs{x}_{t-1})\,d\bs{x}_{t-1}\!\right)^{\!\!2}
\!=|K_{b,(m)}(\bs{\omega})|^2.
\end{eqnarray*}
Thus,
\begin{equation}
\int ||\bs{\omega}||^{2\beta}
|\F[\,\ov{p}_{t,i_1,\dots,i_m}^{(m)}]|^2\,d\bs{\omega}
\leq
\int ||\bs{\omega}||^{2\beta} 
|K^{(m)}_b(\bs{\omega})|^2\leq (2\pi)^d L^2_{K_{b,(m)}}.
\label{derPt1}
\end{equation}

The above formula shows that 
$\ov{p}_{t,i_1,\dots,i_m}^{(m)}\in\mathcal{P}_{(\beta,L_{K_{b,(m)}})}$
for any $t\in\N$. We proceed with the specification of the Sobolev 
constant $L_{t,(m)}$ of the partial derivative $p_{t,i_1,\dots,i_m}^{(m)}$.

Similarly as in the proof of Theorem~\ref{Th2}, we have
$\sup_{\bs{x}_t,\bs{y}_t} \{|g_t(\bs{y}_t|\bs{x}_t)|\}=||g_t||_{\infty}<\infty$.
Further, multiplying the update formula (\ref{UpdateFT}) 
by the complex exponential, integrating, multiplying by
$(-\mathrm{i})^m(\omega^{i_1}_{i_1}\cdot\dots\cdot\omega^{i_d}_{i_d})$
and the respective conjugates we shift to the Fourier transforms of
partial derivatives and get
\begin{eqnarray*}
(\ov{\pi}_tg_t)\,p_t(\bs{x}_t)&=&g_t(\bs{x}_t)\,\ov{p}_t(\bs{x}_t),\\
(\ov{\pi}_tg_t)\!\!
\int\!\! e^{\mathrm{i}\lss\bs{\omega},\bs{x}_t\rss}p_t(\bs{x}_t)\,d\bs{x}_t
&=&
\!\!\int\!\! e^{\mathrm{i}\lss\bs{\omega},\bs{x}_t\rss}
g_t(\bs{x}_t)\,\ov{p}_t(\bs{x}_t)\,d\bs{x}_t,\\
(\ov{\pi}_tg_t)\,\psi_t(\bs{\omega})
&\leq&||g_t||_{\infty}^2\,\ov{\psi}_t(\bs{\omega}),\\
(\ov{\pi}_tg_t)\,
(-\mathrm{i})^m\omega^{i_1}_{i_1}\cdot\dots\cdot\omega^{i_d}_{i_d}\,
\psi_t(\bs{\omega})
&\leq&||g_t||_{\infty}^2\,
(-\mathrm{i})^m\omega^{i_1}_{i_1}\cdot\dots\cdot\omega^{i_d}_{i_d}\,
\ov{\psi}_t(\bs{\omega}),\\
(\ov{\pi}_tg_t)^2\,|\F[\,p_{t,i_1,\dots,i_m}^{(m)}]|^2
&\leq& ||g_t||_{\infty}^2\,|\F[\,\ov{p}_{t,i_1,\dots,i_m}^{(m)}]|^2,\\
||\bs{\omega}||^{2\beta}\,|\F[\,p_{t,i_1,\dots,i_m}^{(m)}]|^2
&\leq& \frac{||g_t||_{\infty}^2}{(\ov{\pi}_tg_t)^2}\,||\bs{\omega}||^{2\beta}
|\F[\,\ov{p}_{t,i_1,\dots,i_m}^{(m)}]|^2,\nonumber\\
(2\pi)^{-d}\!\int\! ||\bs{\omega}||^{2\beta}
|\F[\,p_{t,i_1,\dots,i_m}^{(m)}]|^2\,d\bs{\omega}
&\leq&
\frac{||g_t||_{\infty}^2L^2_{K_{b,(m)}}}{(\ov{\pi}_tg_t)^2}=L^2_{t,(m)}.
\label{derLtdef}
\end{eqnarray*}
This concludes the proof.\hfill$\Box$

\section{Example}
\label{SecVI}
In this section, we demonstrate an application of the presented theory. 
Because our research has not been driven by any concrete application, 
we apply the particle filtering and kernel density estimation methodologies 
on the filtering problem for a~multivariate Gaussian process. This problem
has the analytical solution - the well-known Kalman filter \cite{Kalman1960, 
Krylov2007, DSG1999, Sarkka2013}. 

The purpose of this choice is to check if empirical results from computer
simulations follow the analytic counterpart. By replacing the Gaussian
transition kernel and Gaussian observation density by general entities
we can build up the appropriate particle filter for a general Markov process,
but without the possibility of checking against the analytical solution.

\subsection{Multivariate Gaussian process}
Let the signal and observation processes introduced in Section~\ref{SigSec}
be specified as multivariate Gaussian. That is, we assume that the formulas
driving evolution of states and observations are specified, for a
general dimension $d\geq 1$,~as
\begin{equation}
\label{mvgprocess}
\bs{X}_{t}=\bsm{F}\bs{X}_{t-1}+\bs{W}_t,\;\;\;
\bs{Y}_t=\bsm{H}\bs{X}_t+\bs{V}_t,\;\;\;t\in\N,
\end{equation}
where $\bsm{F}$, $\bsm{H}$ are $d\times d$ regular matrices and
$W_t\sim\mathcal{N}(\bs{0},\mathrm{\bsm{Q}})$,
$V_t\sim\mathcal{N}(\bs{0},\mathrm{\bsm{R}})$ are multivariate
normal noise terms with $d\times d$ covariance matrices $\mathrm{\bsm{Q}}$
and $\mathrm{\bsm{R}}$. The signal process $\{\bs{X}_t\}_{t=0}^\infty$ forms
a multivariate Markov chain with Gaussian transition kernels. The initial
distribution is considered also multivariate normal, i.e., 
$\bs{X}_0\sim\mathcal{N}(\bs{\mu}_0,\mathrm{\bsm{\Sigma}_0})$,
$\bs{\mu}_0\in\R^d$ and $\bsm{\Sigma}_0$ is a $d\times d$ covariance matrix.

Mathematically, the filtering task is to find the conditional expected values 
$\E[\bs{X}_t|\bs{Y}_1,\dots,\bs{Y}_t]$ for $t\geq 1$. At the given
time instant~$t\in\N$, the conditional expected value is the integral
characteristic of the related conditional distribution which
represents the filtering distribution we are interested~in.

The vector $(\bs{X}_0,\bs{X}_1,\bs{Y}_1,\dots,\bs{X}_t,\bs{Y}_t)$
is multivariate normal because it is determined by a linear transformation
of the vector 
$(\bs{X}_0,\bs{W}_1,\bs{V}_1,...,\bs{W}_t,\bs{V}_t)$
which is multivariate normal.
Therefore, the filtering distribution is also multivariate normal, 
and is determined by its mean vector $\bs{\mu}_t$ and 
its covariance matrix~$\bs{\Sigma}_t$ at time $t\in\N$.
The preservation of the normal character of the filtering distribution
over time allows us to obtain an analytic expression for its parameters. 
The result is known as the \textit{multivariate Kalman filter}.

\subsection{Multivariate Kalman filter}
The theoretical analysis presented in \cite{DSG1999}
gives the following recursive Kalman's equations for $\bs{\mu}_t$
and $\bs{\Sigma}_t$. The parameters are computed
in several steps using some auxiliary variables for $t\geq 1$:
\begin{eqnarray*}
\bs{\mu}_{t|t-1}&=&\bsm{F}\bs{\mu}_{t-1},
\nonumber\\
\bs{\Sigma}_{t|t-1}&=&\bsm{F}\bs{\Sigma}_{t-1}\bsm{F}^T+\bsm{Q},
\nonumber\\
\bsm{K}_t&=&\bs{\Sigma}_{t|t-1}\bsm{H}^T[\bsm{H}\bs{\Sigma}_{t|t-1}\bsm{H}^T+\bsm{R}]^{-1},
\nonumber\\
\bs{\mu}_t&=&\bs{\mu}_{t|t-1}+\bsm{K}_t[\bs{Y}_t-\bsm{H}\bs{\mu}_{t|t-1}],\\
\label{mvK1}
\bs{\Sigma}_{t}&=&[\bsm{I}_d-\bsm{K}_t\bsm{H}]\bs{\Sigma}_{t|t-1}.
\label{mvK2}
\end{eqnarray*}

Using the above formulas, one can recursively compute the determining
parameters of the filtering distribution over time. Due to the normal
character of the distribution, we have apparently 
$\E[\bs{X}_t|\bs{Y}_1,\dots,\bs{Y}_t]=\bs{\mu}_t$.
Further, the formula for the evolution of the covariance matrix 
$\bs{\Sigma}_{t}$ is deterministic. That is, it is not affected by 
observations.

\subsection{Multivariate Gaussian particle filter}
The incorporation of schema (\ref{mvgprocess}) into the particle 
filter's computation, presented in Section~\ref{SMCAlgSec},
stems from the specification of the initial density $p_0(\bs{x}_0)$ and
the set of transition kernels $K_{t-1}$, $t\in\N$.

As already mentioned, the initial density is multivariate
normal with some mean $\bs{\mu}_0\in\R^d$ and a $d\times d$ 
covariance matrix $\bs{\Sigma}_0$, i.e.,
$$
\label{mvp}
p_0(\bs{x}_0)=
(2\pi)^{-\frac{d}{2}}
|\bs{\Sigma}_0|^{-\frac{1}{2}}
\!\exp\!\left[
-\frac{1}{2}(\bs{x}_0-\bs{\mu}_0)^T\bs{\Sigma}^{-1}\!(\bs{x}_0-\bs{\mu})
\right]\!.
$$
The densities of Gaussian transition kernels $K_{t-1}(\bs{x}_t|\bs{x}_{t-1})$, 
$t\in\N$ are specified~as
\begin{equation}
\label{mGKeq}
K_{t-1}(\bs{x}_t|\bs{x}_{t-1})=(2\pi)^{-\frac{d}{2}}
|\bsm{Q}|^{-\frac{1}{2}}
\exp\left[-\bs{u}_t^T\bsm{Q}^{-1}\bs{u}_t\right]
\end{equation}
with $\bs{u}_t=\bs{x}_t-\bsm{F}\bs{x}_{t-1}$.

The above formula reflects the multivariate normal character of the noise term
$\bs{W}_t$ in (\ref{mvgprocess}) and, in fact, corresponds to the 
specification of the density of the multivariate normal distribution 
$\mathcal{N}(\bsm{F}\bs{x}_{t-1},\bsm{Q})$.

The Sobolev character of the filtering densities is given by the Sobolev character 
of the Gaussian transition kernels. We show that the conditional characteristic
functions of the Gaussian kernels (\ref{mGKeq}) are uniformly bounded, 
which implies the Sobolev character according to Theorem~\ref{Th2}. 

We have 
$\F[K_{t-1}](\bs{\omega}|\bs{x}_{t-1})=\F[\mathcal{N}(\bsm{F}\bs{x}_{t-1},\bsm{Q})]$,
and therefore
$$
\F[K_{t-1}](\bs{\omega}|\bs{x}_{t-1})=
e^{\mathrm{i}\lss\bs{\omega},\bsm{F}\bs{x}_{t-1}\rss}
\exp\left[-\frac{1}{2}\bs{\omega}^{T}\bsm{Q}\,\bs{\omega}\right].
$$
Further,
\begin{eqnarray*}
|\F[K_{t-1}](\bs{\omega}|\bs{x}_{t-1})|
&\leq&
\exp\left[-\frac{1}{2}\bs{\omega}^{T}\bsm{Q}\,\bs{\omega}\right]\\
&\leq&
\exp\left[-\frac{1}{2}\lambda_{\min}||\bs{\omega}||^2\right]
=K_b(\bs{\omega}),
\end{eqnarray*}
where $\lambda_{\min}$ is the minimal eigenvalue of the covariance
matrix~$\bsm{Q}$.

For the Sobolev constant $L_{K_b}$ of $K_b(\bs{\omega})$ and $\beta=1$,
we have the integral
$$
(2\pi)^{-d}\!\!
\int ||\bs{\omega}||^{2}\exp\left[-\lambda_{\min}||\bs{\omega}||^2\right]
=\frac{\pi^{-\frac{d}{2}}}{4^d(\sqrt{\lambda_{\min}})^{d+2}}.
$$

From this result we also see that any multivariate normal initial distribution
with the covariance matrix $\bsm{\Sigma}_0$ is $1$-Sobolev with the constant
$L_0=\pi^{-\frac{d}{2}}/[4^d(\sqrt{\lambda^0_{\min}})^{d+2}]$,
where $\lambda^0_{\min}$ is the minimal eigenvalue of $\bsm{\Sigma}_0$.

The obtained result on the Sobolev character of the filtering densities is consistent with
the fact that all densities in the multivariate Gaussian process (\ref{mvgprocess}) 
are normal, i.e., the character of the involved densities does not change during
operation of the filter.

\subsection{Multivariate Gaussian convolution kernel}
In the multivariate Gaussian particle filter, kernel density 
estimates are made using the multivariate standard normal 
(convolution) kernel
$$
K(\bs{u})=
(2\pi)^{-\frac{d}{2}}
\exp\left[-\frac{1}{2}||\bs{u}||^2\right].
$$

The specification of the $L_2$ norm of the kernel is straightforward.
We have
\begin{eqnarray*}
||K||^2
&=&
(2\pi)^{-d}\!\!
\int \exp(-||\bs{u}||^2)\,d\bs{u}
=(4\pi)^{-\frac{d}{2}},
\end{eqnarray*}
hence $||K||=(4\pi)^{-\frac{d}{4}}$.

Concerning the $A$ constant of Theorem \ref{bdTheorem},
we start with the Fourier transform
of the multivariate standard normal kernel which corresponds to the
characteristic function of the $\mathcal{N}(\bs{0},\bs{\mathrm{I}}_d)$ distribution.
That is,
$K_{\F}(\bs{\omega})=e^{-\bs{\omega}^T\bs{\mathrm{I}}_d\bs{\omega}}
=e^{-\frac{1}{2}||\bs{\omega}||^2}$. In order to specify some constant $A$,
we need to determine a bound on the spectral matrix norm of the Hessian of $K_{\F}$.
The entries of the Hessian matrix $\mathcal{H}(K_{\mathcal{F}})$ reads as
$$
\frac{\partial^2 K_{\mathcal{F}}}{\partial\omega^2_j}=
(\omega_j^2-1)K_{\mathcal{F}}(\bs{\omega}),\;
\frac{\partial K_{\mathcal{F}}}{\partial \omega_j\partial \omega_k}=
K_{\mathcal{F}}(\bs{\omega})\omega_j\omega_k,\;\;j\not=k.
$$
In the matrix notation, the Hessian writes as
$\mathcal{H}(K_{\mathcal{F}})(\bs{\omega})=
K_{\mathcal{F}}(\bs{\omega})(\bs{\omega}\bs{\omega}^T-\bs{\mathrm{I}}_d)$.
Using the spectral matrix norm we get 
\begin{eqnarray*}
||\mathcal{H}(K_{\mathcal{F}})(\bs{\omega})||_{spc}&\leq&
K_{\mathcal{F}}(\bs{\omega})
||\bs{\omega}\bs{\omega}^T-\bs{\mathrm{I}}_d||_{spc}\\
&\leq&
K_{\mathcal{F}}(\bs{\omega})
(||\bs{\omega}\bs{\omega}^T||_{spc}+||\bs{\mathrm{I}}_d||_{spc})\\
&\leq&
K_{\mathcal{F}}(\bs{\omega})
(||\bs{\omega}^T||||\bs{\omega}||+1)\\
&\leq&
K_{\mathcal{F}}(\bs{\omega})
(||\bs{\omega}||^2+1).
\end{eqnarray*}

Note that for a vector $\bs{\omega}\in\R^d$,
$||\bs{\omega}||_{spc}=||\bs{\omega}||$ (the standard Euclidean norm).
Let $\bs{\omega}=\bs{\xi}$ such that $||\bs{\xi}||\leq 1$. Then we clearly
have $||\mathcal{H}(K_{\mathcal{F}})(\bs{\xi})||_{spc}\leq 2$ as
$K_{\mathcal{\F}}(\bs{\xi})\leq 1$.

The multidimensional Taylor's theorem for $K_{\F}$ writes as
$$
K_{\F}(\bs{\omega})
=K_{\F}(\bs{0})+
(\nabla K_{\F}(\bs{0}))\bs{\omega}+
\frac{1}{2}
\bs{\bs{\omega}}^{T}[\mathcal{H}(K_{\mathcal{F}})(\bs{\xi})]\bs{\omega}
$$
for a suitable $\bs{\xi}\in\R^d$, $||\bs{\xi}||\leq ||\bs{\omega}||$.
For the gradient, we have $\nabla K_{\F}(\bs{0})=\bs{0}$ and $K_{\F}(\bs{0})=1$,
therefore the above Taylor's theorem gives for any $||\bs{\omega}||\leq 1$,
\begin{eqnarray*}
K_{\F}(\bs{\omega})-1&=&
\frac{1}{2}
\bs{\omega}^{T}[\mathcal{H}(K_{\mathcal{F}})(\bs{\xi})]\bs{\omega},\\
|K_{\F}(\bs{\omega})-1|&\leq&
\frac{1}{2}
||\bs{\omega}^T||\cdot
||\mathcal{H}(K_{\mathcal{F}})(\bs{\xi})||_{spc}\cdot
||\bs{\omega}||,\\
\frac{|K_{\F}(\bs{\omega})-1|}{||\bs{\omega}||}
&\leq&||\bs{\omega}^T||=||\bs{\omega}||.
\end{eqnarray*}

Further $|K_{\F}(\bs{\omega})-1|\leq 1$ for all $\bs{\omega}\in\R^d$ and
therefore $|K_{\F}(\bs{\omega})-1|/||\bs{\omega}||\leq 1$ for
$||\bs{\omega}||>1$. Thus, joining the two inequalities we finally get
$$
\frac{|K_{\F}(\bs{\omega})-1|}{||\bs{\omega}||}
\leq \max\{1,1\}=1,\;\;\bs{\omega}\in\R^d\backslash\{\bs{0}\},
$$
and the $A$ constant equals to 1, i.e., $A=1$.

The above considerations immediately lead to the specification of 
the order of the multivariate standard normal kernel.
As mentioned, the Fourier transform of the kernel is 
$K_{\F}(\bs{\omega})=e^{-\frac{1}{2}||\bs{\omega}||^2}$ and
$K_{\F}(\bs{0})=\bs{1}$. The related gradient writes as
$\nabla K_{\F}(\bs{\omega})=-e^{-\frac{1}{2}||\bs{\omega}||^2}\bs{\omega}$,
thus $\nabla K_{\F}(\bs{0})=\bs{0}$. For the Hessian of $K_{\F}(\bs{\omega})$,
we have $\mathrm{diag}(\mathcal{H}(K_{\F})(\bs{0}))=-\bs{1}$.
Hence the order of the kernel is $\ell=\beta=1$.

\subsection{MATLAB implementation and experiments}
In this section we introduce our implementation of the multivariate
Kalman filter and its particle filter counterpart to show results 
of several experiments.

We have implemented both filters in the form of a MATLAB function. 
The inputs into the function are $\bsm{F},\bsm{Q},\bsm{H},\bsm{R}$
matrices of formula (\ref{mvgprocess}), the computational horizon
$T\in\N$ and the selected number of particles $n\in\N$. The outputs
are the means and covariance matrices from the particle and Kalman
filters, respectively. If the dimension of the signal process is $d=1$
or $d=2$, then the script provides a graphical output illustrating
the estimated density and its theoretical counterpart from the Kalman filter.
The source code of the function is presented in Appendix~\ref{mvsmc.m}.

We have performed several experiments in order to check if the 
computational behavior of the multivariate Gaussian particle filter 
coincides with the analytical results. The experiments were performed
for the following setting of parameters:
$\bsm{F}=\bsm{I}_d$, $\bsm{Q}=2\bsm{I}_d$, $\bsm{H}=2\bsm{I}_d$, 
$\bsm{R}=\bsm{I}_d$. In the script, the density of the
multivariate standard normal distribution is used as the initial density.
Computational horizon was set to $T=100$.

The results of three $d=2$ experiments for different numbers of particles
$n=10,100$ and $n=1000$ are presented in Table~\ref{resMKF}. 
Graphically, the obtained kernel density estimate and theoretical 
filtering density are presented in Fig.~\ref{mv100} for $n=100$.

\begin{table}[!htb]
\begin{center}
\begin{tabular}{l|cc|cc|cc}
$T$=100&$\hat{\bs{\mu}}_T$&$\bs{\mu}_T$&
\multicolumn{2}{c|}{$\widehat{\bsm{\Sigma}}_T$\;-\;PF}&
\multicolumn{2}{c}{$\bsm{\Sigma}_T$\;-\;KF}\\\hline\hline
$n$=10&\;32.25&\;31.92&\;\;0.1472&\;\;0.0992&0.2247&0\\
&-18.43&-18.65&\;\;0.0992&\;\;0.4290&0&0.2247\\\hline
$n$=100&\;\;\,0.46&\;\;\,0.48&\;\,0.1557&-0.0212 &0.2247&0\\
&\;\,-2.16&\;\,-2.04&\,-0.0212&\;0.2144&0&0.2247\\\hline
$n$=1000&\;\,-2.76&\;\,-2.75&\;\,0.2207&-0.0036&0.2247&0\\
&-29.18&-29.18&\,-0.0036&\;0.2206&0&0.2247\\
\end{tabular}
\end{center}
\caption{Comparison of bivariate particle and Kalman filters.}
\label{resMKF}
\end{table}

\begin{figure}[!htb]
\centerline{\includegraphics[width=4.8in]{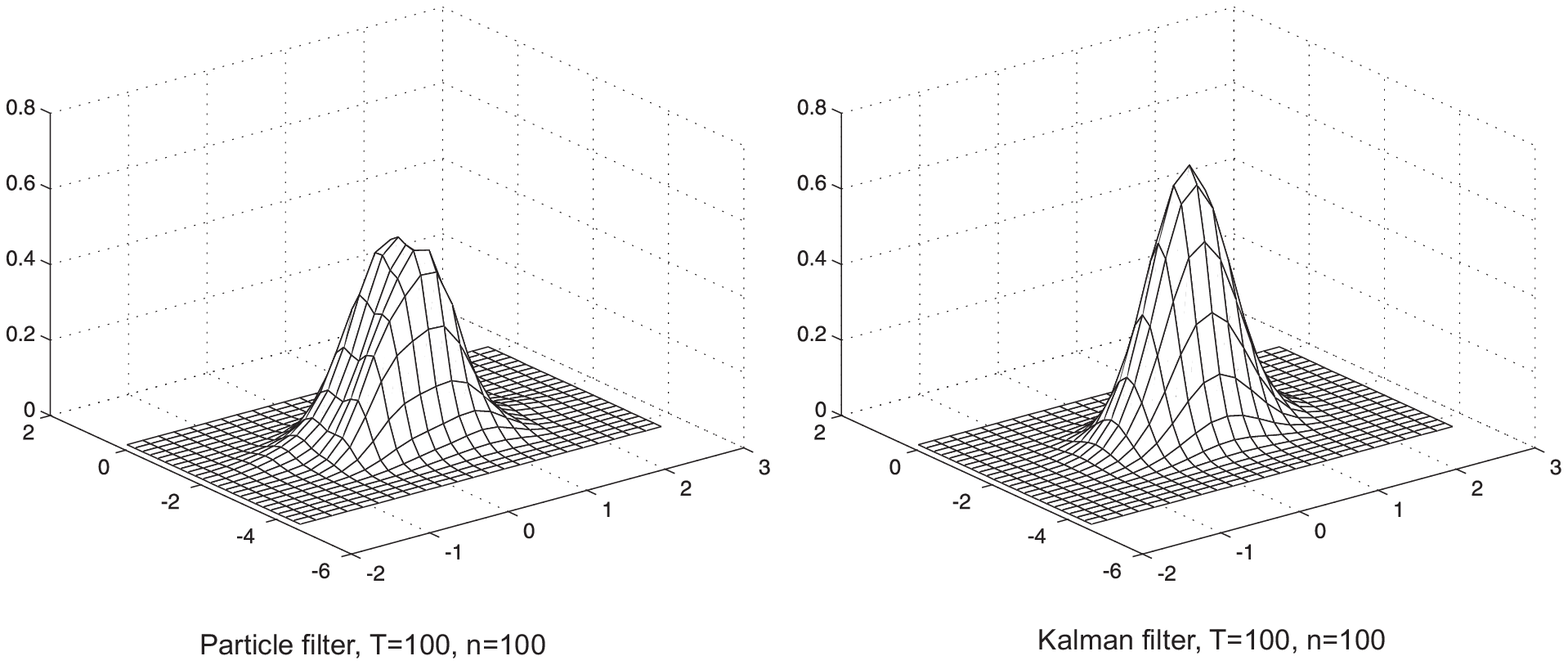}}
\caption{The kernel density estimate generated by the bivariate Gaussian 
particle filter and the corresponding filtering density from 
the Kalman filter.}
\label{mv100}
\end{figure}

On the basis of the inspection of the numerical results presented in 
Table~\ref{resMKF}, we can state a good agreement of numerical
characteristics delivered by the Gaussian particle filter with the
theoretical characteristics of the filtering distributions.

\section{Conclusion}
\label{SecVII}
In the paper, we have demonstrated that the standard methodology
of kernel density estimates can be applied in the area of particle
filtering. We have proved that the kernel density estimates constructed
on the basis of particles generated by the particle filter converge in
the MISE to the theoretical filtering density at each time instant
of operation of the filter. The result holds even though
the generated particles do not constitute an i.i.d. sample from the
filtering distribution. Moreover, we have stated the sufficient condition
for the preservation of the Sobolev character of the filtering densities
over time. The extension of both results to the partial derivatives
of the kernel estimates and filtering densities has been provided as well.

In Theorem 2, the constant $c_t $ is known that it typically grows
exponentially with time, see e.g., \cite{Doucet2001} p.~87, therefore $C_t$
of (\ref{MThEq2}) does so; and, if one wants to assure the given
precision of the density approximation, then one must increase
the number of generated particles exponentially, too. 
This is an unpleasant property of the particle filter. On the other hand,
there are results available, e.g., \cite{DelMoral2001} or \cite{Heine2008}, that
under additional conditions, uniformly convergent particle filters 
can be constructed. That is, that $c_t$ of (\ref{ctformula})
is constant over time.

The constant $C_t$ depends on $L_t$. Under the conditions
of Theorem~\ref{Th2}, we know the evolution of $L_t$ over time. 
In fact, the evolution is somehow similar to the evolution of $c_t$
constant and there is again the risk of an exponential growth of $L_t$.
The study of the conditions when $L_t$ evolves uniformly over
time is the issue of the future research in this field.



%

\appendix
\section{MATLAB implementation}
\label{mvsmc.m}
{\small\verbatiminput{mvpf.m}}

\section*{Acknowledgment}
The author is grateful to V. Bene\v{s} (Charles University in Prague) for stimulating interest
in the field of particle filtering. The research was supported by COST grant
LD13002 provided by the Ministry of Education, Youth and Sports of the Czech Republic.

\end{document}